\tikzstyle{top} = [rectangle, rounded corners, minimum width=3cm, minimum height=1cm, text centered, text width=3cm, draw=black, fill=red!30]
\def\bnon{\begin{equation*}}
\def\enon{\end{equation*}}
\def\bnu{\begin{equation}}
\def\enu{\end{equation}}
\def\bqa{\begin{eqnarray*}}
\def\eqa{\end{eqnarray*}}
\def\blsd{\begin{enumerate}}
\def\elsd{\end{enumerate}}
\def\blst{\begin{itemize}}
\def\elst{\end{itemize}}
\def\TT{\widetilde{\mathcal{ T}}}
\def\T{\mathcal{T}}
\def\P{\mathcal{P}}
\def\rh{\mathrm{H}}
\def\gai{\gamma_{\mathrm{AI}}}
\def\ga{\Gamma}
\def\hopf{\mathcal{H}_{Q}}
\def\eps{\varepsilon}
\def\ba{\bar{a}}
\def\ide{\mathbb{I}}
\newtheorem{theo}{Theorem}
\newtheorem{lem}[theo]{Lemma}
\newtheorem{pro}{Proposition}
\newtheorem{cor}[theo]{Corollary}
\def\A{\mathcal{A}}
\def\I{\mathcal{I}}
\preprint{DESY-24-127}
\title{\boldmath The asymptotic Hopf Algebra of Feynman Integrals}
\author[a]{Mrigankamauli Chakraborty}
\emailAdd{mrigankamauli.chakraborty@desy.de}
\affiliation[a]{II. Institute for Theoretical Physics, Hamburg University\\
Luruper Chaussee 149, D-22761 Hamburg, Germany}
\author[b]{and Franz Herzog}
\emailAdd{fherzog@ed.ac.uk}
\affiliation[b]{Higgs Centre for Theoretical Physics, School of Physics and Astronomy\\
The University of Edinburgh, Edinburgh EH9 3FD, Scotland, UK}
\begin{document}

\abstract{
The method of regions is an approach for developing asymptotic expansions of Feynman Integrals. 
We focus on expansions in Euclidean signature, where the method of regions can also be formulated as an expansion by subgraph. We show that for such expansions valid around small/large masses and momenta the graph combinatorial operations can be formulated in terms of what we call the asymptotic Hopf algebra. This Hopf algebra is closely related to the motic Hopf algebra underlying the $R^*$ operation, an extension of Bogoliubov's $R$ operation, to subtract both IR and UV divergences of Feynman integrals in the Euclidean. We focus mostly on the leading power, for which the Hopf algebra formulation is simpler. We uncover a close connection between Bogoliubov's $R$ operation in the Connes-Kreimer formulation and the remainder $\mathcal{R}$ of the series expansion, whose Hopf algebraic structure is identically formalised in the corresponding group of characters. While in the Connes-Kreimer formulation the UV counterterm is formalised in terms of a twisted antipode, we show that in the expansion by subgraph a similar role is played by the integrand Taylor operator. To discuss the structure of higher power expansions we introduce a novel Hopf monoid formulation. }

\maketitle
\flushbottom

\section{Introduction}

Feynman integrals are the building blocks of S-matrix elements and Green's functions in Quantum Field Theory. As such they are important for theoretical predictions in collider experiments. However, their mathematical structure is also interesting in their own right and has inspired developments in mathematics. One such development is the Connes-Kreimer Hopf algebra of Feynman graphs \cite{Kreimer:1997dp, Connes:1999zw, Connes:1999yr} which governs the graph combinatorics of Bogoliubov's $R$ operation \cite{Bogoliubov:1957gp, Hepp:1966eg, Zimmermann:1969jj}. More recently this construction has also been extended to the class of Euclidean IR divergences, via the Hopf algebra of motic graphs \cite{Brown:2015fyf,Beekveldt:2020kzk}, and the corresponding $R^*$-operation \cite{Chetyrkin:1982nn, Chetyrkin:1983wh, Chetyrkin:1984xa, Larin:2002sc, Chetyrkin:2017ppe, Herzog:2017bjx}. 

Hopf algebraic structures have also appeared in other contexts within perturbative Quantum Field Theory. Well known examples are the Hopf algebra structures on motivic Zeta values \cite{Brown:2011ik} and generalised polylogarithms \cite{Goncharov:2001iea} in which a large class of Feynman integrals can be expressed and which form the basis of the symbol-calculus \cite{Goncharov:2010jf, Duhr:2012fh}. This Hopf algebra has also been formalised in terms of a diagrammatic coaction at the level of Feynman diagrams \cite{Abreu:2021vhb,Abreu:2017mtm}, and has given rise to a new mysterious antipode relation among amplitudes \cite{Dixon:2022xqh}. A new Hopf algebra has recently been uncovered in connection to the kinematic algebra appearing in the double-copy gauge-gravity correspondence \cite{Brandhuber:2022enp, Chen:2022nei}. Hopf algebras are therefore playing an increasingly important role in our mathematical understanding of perturbative Quantum Field Theory, and seem to provide the key to handle and manage the complicated structures appearing in scattering amplitudes.

In this paper we will focus on series or asymptotic expansions of Feynman integrals around potentially non-regular kinematic points. There exists a general approach for developing such series known as the \emph{method of regions} \cite{Smirnov:1990rz, Smirnov:1994tg, Beneke:1997zp, Tkachov:1997ap, Smirnov:1998vk, Smirnov:1999bza, Smirnov:2002pj, Jantzen:2011nz, Semenova:2018cwy}. The problem is that the naive Taylor expansion at the level of the integrand of a Feynman integral does not generally yield a correct answer. The essence of the method of regions is that the full answer can be recovered by adding to this naive expansion also expansions around several so-called \emph{regions}. Each region can be identified with a certain scaling of the integration variables with an expansion parameter, $\lambda$, to some power. The difficult part in general is to identify the correct set of regions required for the particular expansion. In general this is non-trivial, and there exist no general proof for this procedure to work. 

The situation is somewhat improved for certain kinematic expansions. One of these includes expansions of completely offshell Feynman diagrams, which can be analytically continued into the Euclidean region. More specifically these include the expansions around small and/or large external momenta or internal masses. This set of region expansions is also special in that it can be formulated in a graph-theoretic language, in the sense that the regions are identified with certain sets of subgraphs \cite{ Chetyrkin:1982zq, Chetyrkin:1983qlc, Gorishnii:1983su, Gorishnii:1986mv, LlewellynSmith:1987jx, Chetyrkin:1988zzzzz, Chetyrkin:1988zz, Chetyrkin:1988cu, Gorishnii:1989dd}. Hence also the name \emph{expansion by subgraph}. A proof for the validity in this case was provided by Smirnov \cite{Smirnov:1990rz, Smirnov:1994tg}.

Going beyond the Euclidean case there exists a geometric method, applicable in the parametric representation, whenever the $\mathcal{F}$ or 2nd Symanzik polynomial is of definite sign. This method makes use of a tropical geometry associated to the Feynman polytope, first discovered in the context of Sector decomposition \cite{Kaneko:2009qx}, before it was extended and implemented as a way to unveil regions \cite{ Pak:2010pt, Jantzen:2012mw, Ananthanarayan:2018tog, Heinrich:2021dbf}. An expansion-by-subgraph approach valid for Minkowskian expansions has also recently been developed for the onshell-expansion \cite{Gardi:2022khw}, the soft expansion \cite{Herzog:2023sgb} and a wider class of related expansions \cite{Ma:2023hrt}.

The main focus of this paper is to provide a new Hopf algebraic  formulation of the expansion by region, valid for the Euclidean Case. We will thus connect two areas: renormalization Hopf algebras and the expansion by subgraph framework. The crucial feature, allowing us to make this connection, is that the subgraphs, appearing in the above mentioned $R^*$ operation, are closely  related to the \emph{asymptotic subgraphs} appearing in the expansion by subgraph. One of the main results of this paper is to show that these asymptotic subgraphs indeed form a Hopf algebra, and that the expansion by subgraph can be formulated naturally in this framework. A special role will be played by the remainder of the expansion which we present in a manner closely related to Bogoliubov's $R$-operation in the Connes-Kreimer formulation, i.e. as a convolution product in the corresponding group of characters where the counterterm operation, or twisted antipode, is now identified with the Taylor operator. We also motivate this construction by making use of a Birkhoff decomposition. While our discussion is mostly centered on the leading logarithmic expansion case, we do also introduce a Hopf monoid \cite{aguiar2017hopfmonoidsgeneralizedpermutahedra,monoidbook} formulation which allows to keep track of the labels of edges in the tensor product. This provides an alternative route to discuss the Hopf structure at the level of the integrand, and at higher powers in the expansion.

The paper is structured as follows. We review the necessary background in section \ref{bg}. The asymptotic Hopf algebra is constructed in section \ref{hopfdef} purely at the level of graphs. The Hopf algebraic formulation of the expansion by subgraph is provided in section \ref{hopfformulation}. We conclude in section \ref{conclusion}.

% The method of expansion by regions gives us a simplified way of analytically evaluating Feynman integrals. Usually there is a hierarchy of scales involved in an integral such that we can expand the integral in their small ratio. Using the method of expansion by subgraph, we can express the given Feynman graph as an infinite series of products of single scale graphs that are much simpler to calculate. In this paper, we focus on diagrams having a two-scale hierarchy, with a certain subset of the external momenta being considered large compared to the rest of the momenta and masses in the Euclidean sense. This discussion can be easily generalised for more than two scales, by successively repeating the two scale procedure in a decreasing order.\\[2pt]

% \begin{itemize}
%  \item intro to method of regions, expansion by subgraph
%  \item intro to Hopf algebra and renormalisation hopf algebra.. connes kreimer references .. relevant - coaction hopf algebra Brown...
%  \item summary/preview about what we are doing
% \end{itemize}

\section{Background}\label{bg}

% Using the method of expansion by subgraph (EBS), we can express the given Feynman graph as an infinite series of products of single scale graphs that are much simpler to calculate. 

% Here we will mostly focus on diagrams having a two-scale hierarchy, with a certain subset of the external momenta and masses being considered large compared to the rest of the momenta and masses in the Euclidean sense. This discussion can be easily generalised for more than two scales, by successively repeating the two scale procedure in a decreasing order.

In the following we introduce notation and review the main concepts of the expansion by subgraph as well as the basics of the diagrammatic Hopf algebra of renormalisation.   

\subsection{Notation}
We work exclusively with Feynman integrals, with off-shell  non-exceptional external momenta, which can be analytically continued into the Euclidean region. The kinematic data of the graph $\ga$ will be specified as follows:
\begin{itemize}
    \item $\{p_i\}$ is the set of soft momenta,
    \item $\{q_i\}$ is the set of hard momenta,
    \item $\{l_i\}$ is the set of loop momenta,
    \item $\{m_i\}$ is the set of soft internal masses,
    \item and $\{M_i\}$ is the set of hard internal masses.
\end{itemize}
% All the hard momenta $q^2_i\sim Q^2$, while  all the soft momenta and masses $p^2,m^2\ll Q^2$, creating a two-scale hierarchy. 

It will be convenient to let $A$ be the algebra\footnote{Note that the set of functions are naturally closed under addition and multiplication, and thus form an algebra.} of analytic functions defined on the momenta and the dimensional regularisation parameter $\epsilon=\frac{4-D}{2}$, with $D$ the space-time dimension. With $\mathcal{H}$ as the set of Feynman graphs in a theory, the map $\phi:\mathcal{H}\to A$ then provides the analytic expression for the Feynman integral corresponding to each graph in $\mathcal{H}$. We further decompose $\phi$ as a composition of two operations:
\bnon
\phi = \mathcal{I}\circ F\,,
\enon 
where $F:\mathcal{H}\to \mathcal{A}$ applies the Feynman rules to the graph to make the integrand. $\mathcal{A}$ is then the set of integrands corresponding to the graphs in $\mathcal{H}$. The integration operation $\mathcal{I}:\mathcal{A}\to A$ integrates the integrand w.r.t. the loop momenta.\\

\subsection{Asymptotic expansion and regular Taylor operator}\label{lambda}

Consider now the expansion of a Feynman integral $\phi(\Gamma)$ around a set of \emph{soft} parameters, i.e. small masses and momenta $\{m_1,\dots,p_1,\dots\}$. It is convenient to rescale the soft parameters with a book-keeping parameter $\lambda$, i.e. $\{m_1,\dots,p_1,\dots\}\to \{\lambda m_1,\dots,\lambda p_1,\dots\}$, such that 
$$
\phi(\Gamma)\to \phi(\Gamma;\lambda)\,,\qquad  \phi(\Gamma)=\phi(\Gamma;\lambda)\Big|_{\lambda=1}\,.
$$ 
Effectively, we can view the multi-variable expansion as a series expansion in the single variable $\lambda$, which can be set to $1$ at the end of a calculation. In general, due to the appearance of singularities and/or discontinuities, in the limit $\lambda\to 0$ the series expansion of $\phi(\Gamma;\lambda)$ is not regular and takes the general form:
\begin{align*}
\phi(\Gamma;\lambda)&=\sum_{r} \lambda^{k_r-2r\eps} \phi_r(\Gamma;\lambda)\,,
\end{align*}
where, as it turns out, $r\in\{0,\dots,L\}$ with $L$ the number of loops of $\Gamma$. In contrast to $\phi(\Gamma;\lambda)$, the different $\phi_r(\Gamma;\lambda)$ are regular at $\lambda=0$. One may therefore define an $n$th order \emph{regular} Taylor expansion operator, namely $\TT^{n}$, of $\phi(\Gamma;\lambda)$ (which after expanding $\lambda^{-r\eps}=\exp(-r\eps\log\lambda)$ in $\eps$ leads to a log power series in $\lambda$) via the ordinary $n$th order Taylor expansion operator $\T^{n}$ of $\phi_r(\Gamma;\lambda)$, as follows,
\begin{align*}    \TT^{n}\phi(\Gamma;\lambda):=\sum_{r}\lambda^{k_r-2r\eps} \,
    \T^{n-k_r} \phi_r(\Gamma;\lambda)\,,
\end{align*}
where $k_r$ is a non-positive integer, and
\bnon
\T^{n} f(\lambda)=\sum_{k=0}^n \T^{(n)} f(\lambda)\,, \qquad 
\T^{(n)} f(\lambda)= \frac{\lambda^{n}}{n!}\frac{d^nf(\lambda)}{d\lambda^n}\Big|_{\lambda=0}\,.
\enon
Note that later on we will also make use of the notation $\TT^{n}=\sum_n\TT^{(n)}$.

It will be important in the following to distinguish clearly between 
acting the Taylor operator on the integrand or the integral. 
For this purpose we define a further Taylor operation, 
$$
T^{(n)}:\mathcal{H}\to A\,, \qquad T^{(n)} = \mathcal{I}\circ \T^{(n)}\circ F\,,
$$ 
which acts directly on the Feynman diagram, Taylor expands the integrand, and then integrates every term of the expansion w.r.t. the loop momenta. We furthermore define a second Taylor operator $\widetilde{T}$ which also acts on the graph but \emph{regular} Taylor expands at the level of the integral:
\bnon
\widetilde{T}^{(n)}:\mathcal{H}\to A\,,\qquad \widetilde{T}^{(n)} = \widetilde{\T}^{(n)}\circ\mathcal{I}\circ F\,.
\enon
This operator gives us the value of the integral at a certain order in the soft scales. Thus, it performs the expansion by regions in a particular asymptotic limit. 

\subsection{Expansion by subgraph}
\noindent We will now outline the procedure for expansion by subgraph as developed by Smirnov \cite{Smirnov:1990rz, Smirnov:1994tg, Smirnov:2002pj}. Given a graph $\ga$, a subgraph $\gamma$ of $\ga$ is called \textit{asymptotically irreducible} (AI) if
\blsd
\item $\gamma$ contains all the  hard legs ($q$) and heavy lines (lines with an $M$),
\item the graph obtained from $\gamma$, by contracting all heavy lines and identifying\footnote{In Graph theory to identify vertices $v_1,v_2,\dots$ is to replace them with a single vertex $v$, such that all lines which were incident to $v_1,v_2,\dots$ are incident to $v$.} all the external vertices associated with the hard legs in $\gamma$,  is componentwise 1PI. 
\elsd

We will denote AI subgraphs with $\gai$. The contracted graphs $\ga/\gai$ are obtained by contracting all the internal edges of $\gai$ in $\ga$. An example will be given shortly in section \ref{naive}. 

With these definitions we can now formulate the asymptotic expansion of the graph $\ga$ to order $a$:
\bnu \label{ebrnew}
\widetilde{T}^{a}(\ga) = \sum_{\gai\subseteq \ga} \mathcal{I}\big( T^{\ba+\omega}(\gai)\,F(\ga/\gai) \big)\,,
\enu
% \phantom{dd}\\
where
\begin{itemize}
\item $\bar{a} = a-\Omega$,
\item $\Omega$ is the superficial degree of divergence (SDD) of $\ga$,
\item $\omega$ is the SDD of $\gai$,
% \phantom{dd}\\
\item the operator $T^{\ba+\omega}$ expands around all masses of, and external momenta to, $\gai$ which are not hard scales of $\Gamma$.
\end{itemize}
Note also that soft external momenta of $\gai$ can indeed correspond to internal momenta of $\ga$. 

The convergence of the expansion is assured by Smirnov's asymptotic theorem \cite{Smirnov:1990rz, Smirnov:1994tg}, which states that the remainder given by 
\bnu\label{remo}\mathcal{R}^a(\ga) := \phi(\ga)-\widetilde{T}^{a}(\ga)\enu 
is $\mathcal{O}(\lambda^{a+1})$ modulo logarithms.

\subsection{One-loop triangle diagram}\label{naive}

% Here we will calculate a simple one-loop example with two independent kinematic parameters in order to demonstrate the method of expansion by regions. We will first obtain the analytic expression of the graph and express it as an infinite series in the ratio of the small and large parameters. From there we take the leading order term and demonstrate that expanding by subgraphs at the zeroth order gives us the same result. 
\noindent For the sake of an example, consider the one-loop triangle:
\begin{center}
$\Gamma = $
\begin{minipage}{2.2cm}
\begin{tikzpicture}
\draw [thick] (0,0.866) -- (-0.5,0);
\draw [ultra thick, dotted, red] (0,0.866) -- (0,1.216);
\draw [thick] (0.5,0) -- (0,0.866);
\draw [thick] (0.5,0) -- (0.75,-0.25);
\draw [thick] (-0.5,0) -- (-0.75,-0.25);
\draw [thick] (-0.5,0) -- (0.5,0);
%\draw [black, ultra thick] (-0.3,0.4) .. controls (0,0.6) .. (0.3,0.4);
%\draw [black, ultra thick] (-0.3,0.4) .. controls (0,0.2) .. (0.3,0.4);
%\filldraw [black] (0,0.5) circle (2pt);
\filldraw (-0.5,0) circle (2pt);
%\filldraw [black] (0,-0.5) circle (2pt);
\filldraw (0.5,0) circle (2pt);
\filldraw (0, 0.866) circle (2pt);
\draw (0,1.416) node{${\scriptstyle {p_3}}$};
\draw (0.9,-0.4) node{${\scriptstyle {p_2}}$};
\draw (-0.9,-0.4) node{${\scriptstyle {p_1}}$};
%\filldraw [black] (0.3,0.4) circle (2pt);
%\filldraw [black] (-0.3,0.4) circle (2pt);
%\filldraw [black] (0,0) circle (2pt);
\end{tikzpicture}
\end{minipage}.
\end{center}
The dotted red line is considered to carry a small external momentum and all internal lines are massless. The Feynman integral for this graph is
\bnon
\phi(\Gamma)=\int \frac{d^Dk}{i\pi^{D/2}}\;\frac{1}{k^2(k+p_3)^2(k-p_1)^2}\,,
\enon
and the integral is dimensionally regularised with $D=4-2\varepsilon$. In  \cite{Chavez:2012kn} an analytic expression of the integral in the Euclidean region at the leading order in $\varepsilon$ was given as:
\bnon
    (-p_1^2)^{-1}\;\frac{4i\, \mathcal{P}_2(z)}{z-\bar{z}}
\enon
where $\mathcal{P}_2(z)$ is Zagier's single-valued dilogarithm, defined as
\bnon
    \mathcal{P}_2(z) = \text{Im}\left\{\text{Li}_2(z) - \ln|z|\,\text{Li}_1(z)\right\}\,,
\enon
and
\bnon
    z = 1 + u - v + \sqrt{\lambda(1,u,v)}\,,
\qquad
    \bar{z} = 1 + u - v - \sqrt{\lambda(1,u,v)}\,,
\enon
\bnon
    u \equiv \frac{p_3^2}{p_1^2} \quad , \quad v \equiv \frac{p_2^2}{p_1^2}\,,
\enon
where
\bnon
\lambda(a,b,c) = a^2 + b^2 + c^2 -2ab -2bc -2ca\,.
\enon
Note that $\bar{z}$ corresponds to the complex conjugate of $z$ in the Euclidean regime where $\lambda(1,u,v)<0$, but this does not necessarily hold in other regimes. The expressions for $z$ and $\bar{z}$ for Euclidean $p_3^2$ and $p_1^2$ are 
\bnon
z =  -\frac{|p_3|}{|p_1|}e^{-i\theta},\; \bar{z} = -\frac{|p_3|}{|p_1|}e^{i\theta}\quad \text{where}\; \theta = \cos^{-1}\left(\frac{|p_3.p_1|}{|p_1||p_3|}\right)\,,
\enon
  and $|p_i| = (-p_i^2)^{1/2}$.
 % region\footnote{The term region here is not the region of expansion by regions. In expansion by regions we refer to regions for values of loop momenta. Here region refers to a region for the values in external momenta.}
 $\bar{z}$ being the complex conjugate of $z$, crucially depends on the fact that the Cauchy-Schwarz inequality applies between Euclidean momenta. Thus, $\theta$ as defined above is real. 

As $|p_3|<|p_1|$, we have $|z|<1$, and thus Li$_2(z)$ and Li$_1(z)$ can be Taylor expanded around $z=0$. Thus we can obtain the integral as an infinite series in the small ratio of $\frac{|p_3|}{|p_1|}$.
%\bnu \label{res}
%    \phi(\Gamma) = -\frac{1}{p_1^2}\left(\ln{\frac{p_3^2}{p_1^2}} - 2 +\mathcal{O}\left(\frac{|p_3|}{|p_1|}\right)\right)
%\enu
The first 3 terms  in the expansion yield:
\begin{align}
\label{eq:trionelooexp}
% -\frac{1}{p_1^2}\left(\ln{\frac{p_3^2}{p_1^2}} - 2 \right)
 \phi(\Gamma;\lambda)\Big|_{\substack{p_1^2=1,\\\eps=0}} =%\frac{1}{p_1^2}\Big\{
 &
 2-\log (z\bar z) + 
 \frac{1}{2}(z+\bar z)\big(1-\log (z\bar z)\big)\, \lambda \nonumber\\
&+\frac{1}{9} (z^2 + z \bar z + \bar z^2)\big(2-3\log (z\bar z)\big)\,\lambda^2
% &+\lambda^3 \Big[
% (-2x_{13}^3 + x_{13}x_3)\log x_{13} + x_{13}^3 - \frac{1}{2}x_3x_{13}
% \Big]\nonumber\\
% &+\lambda^4 \Big[
% \Big(-\frac{16}{ 5}x_{13}^4 + \frac{12}{ 5}x_3x_{13}^2 - \frac{1}{ 5}x_3^2\Big)\log x_{13} + \frac{32}{ 25}x_{13}^4 - \frac{24}{ 25}x_3x_{13}^2 + \frac{2}{ 25}x_3^2 \Big]\nonumber\\
+\mathcal{O}(\lambda^3) 
%\Big\}
\end{align}
% \begin{align}
% % -\frac{1}{p_1^2}\left(\ln{\frac{p_3^2}{p_1^2}} - 2 \right)
%  \phi(\Gamma;\lambda) =&\frac{1}{p_1^2}\Big\{2-\log x_{13} + \lambda \Big[ x_{13} -x_{13}\log x_{13} \Big] \nonumber\\
% &+\lambda^2 \Big[
% (-\frac{4}{3}x_{13}^2 + \frac{1}{3}x_3)\log x_{13} + \frac{8}{9}x_{13}^2 - \frac{2}{9}x_3\Big] \nonumber\\
% &+\lambda^3 \Big[
% (-2x_{13}^3 + x_{13}x_3)\log x_{13} + x_{13}^3 - \frac{1}{2}x_3x_{13}
% \Big]\nonumber\\
% &+\lambda^4 \Big[
% \Big(-\frac{16}{ 5}x_{13}^4 + \frac{12}{ 5}x_3x_{13}^2 - \frac{1}{ 5}x_3^2\Big)\log x_{13} + \frac{32}{ 25}x_{13}^4 - \frac{24}{ 25}x_3x_{13}^2 + \frac{2}{ 25}x_3^2 \Big]\nonumber\\
% &+\mathcal{O}(\lambda^5) 
% \Big\}+\mathcal{O}(\epsilon) 
% \end{align}
% where
% \begin{equation*}
% x_3=\frac{p_3^2}{p_1^2}\,,\qquad\qquad x_{13}=\frac{p_3.p_1}{p_1^2} \,.
% \end{equation*}
% and 
where $\lambda\sim p_3^\mu$ is a book-keeping parameter for the expansion, introduced in section \ref{lambda}.
On the other hand, using expansion by subgraph gives us the following expression for the $n$th order term in the expansion:
\bnon
    \widetilde{T}^{(n)}_{\{p_3\}}(\ga) 
    = \; T_{\{p_3\}}^{(n)}\left( \begin{minipage}{2.5cm}
\begin{tikzpicture}
\draw [thick] (0,0.866) -- (-0.5,0);
\draw [ultra thick, dotted, red] (0,0.866) -- (0,1.216);
\draw [thick] (0.5,0) -- (0,0.866);
\draw [thick] (0.5,0) -- (0.75,-0.25);
\draw [thick] (-0.5,0) -- (-0.75,-0.25);
\draw [thick] (-0.5,0) -- (0.5,0);
%\draw [black, ultra thick] (-0.3,0.4) .. controls (0,0.6) .. (0.3,0.4);
%\draw [black, ultra thick] (-0.3,0.4) .. controls (0,0.2) .. (0.3,0.4);
%\filldraw [black] (0,0.5) circle (2pt);
\filldraw (-0.5,0) circle (2pt);
%\filldraw [black] (0,-0.5) circle (2pt);
\filldraw (0.5,0) circle (2pt);
\filldraw (0, 0.866) circle (2pt);
\draw (0,1.416) node{${\scriptstyle {p_3}}$};
\draw (0.9,-0.4) node{${\scriptstyle {p_2}}$};
\draw (-0.9,-0.4) node{${\scriptstyle {p_1}}$};
%\filldraw [black] (0.3,0.4) circle (2pt);
%\filldraw [black] (-0.3,0.4) circle (2pt);
%\filldraw [black] (0,0) circle (2pt);
\end{tikzpicture}
\end{minipage}\right)
+ T_{\{k,p_3\}}^{(n)}\left(\begin{minipage}{2.8cm}
\begin{tikzpicture}
%\draw[ ultra thick, -] (0,0.5) arc (90:450:0.5);
\draw (0,0) -- (-1,0);
\draw (0,0) -- (0.35,-0.35);
\draw (-1,0) -- (-1.35,-0.35);
\draw [ultra thick, dotted, red](-1,0) -- (-1.35,0.35);
\draw [ultra thick, dotted, red](0,0) -- (0.35,0.35);
%\draw [black, ultra thick] (0,0.5) -- (0,-0.5);
%\draw [black, ultra thick] (-0.3,0.4) .. controls (0,0.6) .. (0.3,0.4);
%\draw [black, ultra thick] (-0.3,0.4) .. controls (0,0.2) .. (0.3,0.4);
%\filldraw [red] (0,0.5) circle (2pt);
\filldraw (-1,0) circle (2pt);
%\filldraw [black] (0,-0.5) circle (2pt);
\filldraw (0,0) circle (2pt);
\draw (0.55,-0.55) node{${\scriptstyle {p_2}}$};
\draw (-1.55,-0.55) node{${\scriptstyle {p_1}}$};
\draw (0.55,0.55) node{${\scriptstyle {k+p_3}}$};
\draw (-1.55,0.55) node{${\scriptstyle {k}}$};
%\filldraw [black] (0.3,0.4) circle (2pt);
%\filldraw [black] (-0.3,0.4) circle (2pt);
%\filldraw [black] (0,0) circle (2pt);
\end{tikzpicture}
\end{minipage}\right)\ast
\phi \left(\begin{minipage}{2.5cm}
\begin{tikzpicture}
\draw[ thick, -] (0,0.5) arc (90:450:0.5);
\draw [ultra thick, dotted, red] (-1,0) -- (-0.5,0);
\draw (0.5,0) -- (0.85,0.35);
\draw (0.5,0) -- (0.85,-0.35);
%\draw [black, ultra thick] (0,0.5) -- (0,-0.5);
%\draw [black, ultra thick] (-0.3,0.4) .. controls (0,0.6) .. (0.3,0.4);
%\draw [black, ultra thick] (-0.3,0.4) .. controls (0,0.2) .. (0.3,0.4);
%\filldraw [red] (0,0.5) circle (2pt);
\filldraw (-0.5,0) circle (2pt);
%\filldraw [black] (0,-0.5) circle (2pt);
\filldraw [color=black, fill=white](0.5,0) circle (3pt);
\draw (-1.2,0.216) node{${\scriptstyle {p_3}}$};
\draw (1.05,0.55) node{${\scriptstyle {p_2}}$};
\draw (1.05,-0.55) node{${\scriptstyle {p_1}}$};
%\filldraw [black] (0.3,0.4) circle (2pt);
%\filldraw [black] (-0.3,0.4) circle (2pt);
%\filldraw [black] (0,0) circle (2pt);
\end{tikzpicture}
\end{minipage}\quad\right)
\enon 
The hollow vertex in the contracted graph shows where the monomial from the Taylor expansion is inserted. For the zeroth order expansion there is no monomial to insert, but there is at higher orders. Evaluating the first three orders one obtains
% \bqa
%     & = & \frac{1}{p_1^2}\left(\frac{1}{\varepsilon}+2 - \gamma_E - \ln(p_3^2)\right) + \frac{1}{p_1^2}\left(-\frac{1}{\varepsilon}+\gamma_E + \ln(p_1^2)\right) \\
%     & = &  -\frac{1}{p_1^2}\left(\ln{\frac{p_3^2}{p_1^2}} - 2 \right)
% \eqa
\bqa
 % \phi^{(0)}_{\{p_3\}}(\ga)\Big|_{\eps=0}     & = & \frac{1}{p_1^2}\left(\frac{1}{\varepsilon}+2  - \ln(p_3^2)\right) + \frac{1}{p_1^2}\left(-\frac{1}{\varepsilon} + \ln(p_1^2)\right)  =   -\frac{1}{p_1^2}\left(\ln{\frac{p_3^2}{p_1^2}} - 2 \right)
 \widetilde{T}^{(0)}_{\{p_3\}}(\ga)\Big|_{\substack{p_1^2=1,\\\eps=0}}     & = & \left(\frac{1}{\varepsilon}+2  - \log(z \bar z)\right) + \left(-\frac{1}{\varepsilon} \right)  =   2-\log(z \bar z)  \\
\widetilde{T}^{(1)}_{\{p_3\}}(\ga)\Big|_{\substack{p_1^2=1,\\\eps=0}}  & = &  \left(\frac{1}{2}(z+\bar z)\big(
\frac{1}{\eps}+2-\log (z\bar z)\big) \right) +\left(\frac{1}{2}(z+\bar z)\big(-1- \frac{1}{\eps}\big)\right) \\
 \widetilde{T}^{(2)}_{\{p_3\}}(\ga)\Big|_{\substack{p_1^2=1,\\\eps=0}}  & = &  
 \left(
 \frac{1}{3}(z^2+\bar z^2+z \bar z)\big( \frac{1}{\eps}-\log (z\bar z) \big) +\frac{13}{18}(z^2+\bar z^2)+\frac{5}{9}z\bar z
 \right) \\
&&
 +\left(  -\frac{1}{3\eps}(z^2+\bar z^2+z \bar z)  -\frac{1}{2}(z^2+\bar z^2)-\frac{1}{3}z\bar z   \right) 
\eqa
which agrees with eq.~(\ref{eq:trionelooexp}). While the total result is finite the two region contributions contain poles in $\epsilon$ which cancel in the sum.

\subsection{Renormalisation and related diagrammatic Hopf algebras}
\label{sec:conneskreimer}
In this section we will review diagrammatic Hopf algebras that have been used in the context of renormalisation and IR subtraction. In refs. \cite{Kreimer:1997dp,Connes:1999zw,Connes:1999yr,Connes:2000fe} Connes and Kreimer founded the Hopf algebraic structure of renormalisation, where the elements are \emph{bridgeless subgraphs}; these are possibly disconnected subgraphs whose disjoint components are the, to the physicist familiar, one-particle-irreducible (1PI) graphs. This construction was more recently extended also to the ``renormalisation" of IR divergences via the motic Hopf algebra \cite{Brown:2015fyf},\cite{Beekveldt:2020kzk}. In the following we discuss the general structure of such Hopf algebras.

Let us consider a vector space $\mathcal{H}$ over the field of rational numbers $\mathbb{Q}$ generated by the basis set of all non-isomorphic Feynman graphs in a given quantum field theory along with the empty graph $\emptyset$. Thus every element of $\mathcal{H}$ is a linear combination of Feynman graphs. In the following we shall often just talk about graphs but really have in mind Feynman graphs which are generated by the interactions of some Lagrangian. Note that since the elements of this Hopf algebra are non-isomorphic Feynman graphs, the labeling of the internal edges is  arbitrary and not kept track of. This means that if two graphs can be related by a relabelling of the internal edges, they are considered identical in $\mathcal{H}$. 
Instead the labeling of external edges is kept track of throughout. And graphs who are related by relabelling external lines are not considered isomorphic in $\mathcal{H}$.   

On this vector space we can now define a multiplication operation $m:\mathcal{H}\otimes\mathcal{H}\to\mathcal{H}$ defined as the disjoint union of graphs:
\bnon
m (\ga_1\otimes \ga_2) = \ga_1\sqcup\ga_2\,,
\enon
where $\ga_1$,$\ga_2$ are graphs in $\mathcal{H}$. The definition is then extended linearly to all elements in $\mathcal{H}$. Clearly the empty graph $\emptyset$ is the multiplicative identity here, so from now on we denote the empty graph with $\ide$.  Armed with this multiplication, the linear space $\mathcal{H}$ becomes an algebra. Notice that the multiplication operation and hence the algebra is commutative. Next we define a coproduct operation $\Delta:\mathcal{H}\to\mathcal{H}\otimes\mathcal{H}$ given by 
\begin{equation}
\label{eq:corpoduct}
\Delta(\ga) = \ga\otimes\ide + \ide\otimes\ga + \sum_{\substack{\gamma\subsetneq\ga, \gamma\neq\ide \\ \gamma\in\mathcal{H}}}\gamma\otimes\ga/\gamma\,,
\end{equation}
where the sum goes over all bridgeless subgraphs $\gamma$ of $\Gamma$, and is extended to products and linear combinations. For the sake of making $\mathcal{H}$ into a bialgebra, we define a unit operation $u:\mathbb{Q}\to\mathcal{H}$, given by $q\to q\ide$ and a counit $\bar{e}:\mathcal{H}\to \mathbb{Q}$ as, with $\ga$ a graph in $\mathcal{H}$:
\bnon
\bar{e}(\ga) = 
\begin{cases}
1\qquad \text{if $\ga = \ide$} \\
0\qquad \text{otherwise}
\end{cases}\,,
\enon
which is extended linearly. With these definitions $\mathcal{H}$ becomes a coalgebra because the properties of counitarity and coassociativity hold:
\begin{eqnarray}
\label{eq:counitarity}
(id\otimes \bar{e})\circ\Delta &\cong& (\bar{e}\otimes id)\circ\Delta \qquad\quad  \text{(counitarity)}\,,\\
\label{eq:coassociativity}
(id\otimes \Delta)\circ\Delta &=& (\Delta\otimes id)\circ\Delta\qquad \text{(coassociativity)}\,,
\end{eqnarray}
where $id$ is the identity operator that maps any element of $\mathcal{H}$ to itself. One can further show that the multiplication and coproduct are compatible making $\mathcal{H}$ into a bialgebra. 
% As $\mathcal{H}$ is both an algebra and a coalgebra now, in order for it to be a bialgebra we need the multiplication and the coproduct to be compatible:
% \begin{equation}
% \label{eq:mDcompatibility}
% m_{\mathcal{H}\otimes\mathcal{H}}\circ (\Delta_{\mathcal{H}\otimes\mathcal{H}}) = \Delta_\mathcal{H}\circ m_{\mathcal{H}}
% \end{equation}
% \Franz{please define what this means! and remove saingle $\H$ subscripts }
% which turns out to be true for the case of $\mathcal{H}$, making it into a bialgebra. 

%Consider the reduced coproduct:
%\bnon
%\widetilde{\Delta}_{\mathcal{H}}(\ga) = \Delta_{\mathcal{H}}(\ga) - \ga\otimes\ide - \ide\otimes\ga
%\enon
%Notice that it only comprises  of proper subgraphs. 
%(possible)\footnote{strictly speaking, there is no concept of divergences in $\mathcal{H}$ yet as the spacetime dimension is not included in the definition of the Hopf algebra} sub-divergent terms. 

%If we apply the reduced coproduct  again to any one side of the tensor product of the coproduct (due to coassociativity it does not matter which side of the tensor product we choose) we will get graphs with even fewer sub-divergences. 
%Repeating this we will arrive at a sum of tensor products where none of the graphs in the tensor product will have any subdivergences. 
%We call such graphs primitive graphs, and their reduced coproduct is zero. 
%Thus the grading of the Hopf algebra can be done by considering the number of times we can apply the reduced coproduct before we get zero. 

The bialgebra thus obtained respects two natural gradings: in the number of loops, and in the number of edges.  For a graded (and connected) bialgebra like $\mathcal{H}$ there then exists a natural definition for the antipode $S:\mathcal{H}\to\mathcal{H}$ which satisfies 
\bnon
m\circ(id\otimes S)\circ\Delta = m\circ(S\otimes id)\circ\Delta = u\circ\bar{e}
\enon
for all elements of $\mathcal{H}$ given by
\begin{eqnarray}
\label{eq:antipode1}
S &=& u\circ\bar{e}-m\circ(S\otimes(id-\bar{e}))\circ\Delta \nonumber \\
&=& -\ga -\sum_{\substack{\gamma\subsetneq\ga, \gamma\neq\ide 
\\ \gamma\in\mathcal{H}}}S(\gamma)\ga/\gamma \,,
\end{eqnarray}
or, equivalently, as
\begin{eqnarray}
\label{eq:antipode2}
S &=& u\circ\bar{e}-m\circ((id-\bar{e})\otimes S)\circ\Delta\nonumber\\
&=& -\ga -\sum_{\substack{\gamma\subsetneq\ga, \gamma\neq\ide \\ \gamma\in\mathcal{H}}}\gamma S(\ga/\gamma).
\end{eqnarray}

% \begin{enumerate}
%  \item the general structure of all renormalisation hopf algebras : connes kreimer... what are the operations
%  \item list different implementations of coproduct, introduce convolution etc., have to setup notation..
% \end{enumerate}

Note that this formalism can be extended straight forwardly to the more involved motic Hopf algebra, simply by suitably extending the sets of subgraphs entering in the coproduct eq.~(\ref{eq:corpoduct}), and subsequently also in the antipode in eqs.~(\ref{eq:antipode1},\ref{eq:antipode2}). All other operations, that is $m, \bar{e}$ and $u$,  remain unaffected. In the next section we wll discuss how to extend this framework to the case of asymptotic subraphs which enter in the expansion by subgraph.

The forest formula for the BPHZ $R$-operation can be obtained by summing over the UV divergent bridgeless  graphs as:
\begin{equation}
\label{eq:BPHZ}
R(\ga)=\phi(\ga) + Z(\ga) + \sum_{\substack{\gamma\subsetneq\ga, \gamma\neq\ide \\ \gamma\in\mathcal{H}}}\phi\big(Z(\gamma)\ast\ga/\gamma\big)
\end{equation}
where $\phi:\mathcal{H}\to A$ maps a graph to its integrated expression, with $A$ the algebra of analytic functions in the external momenta and masses and dimension, and $\mathcal{H}$, the Connes-Kreimer Hopf algebra, defined in section \ref{sec:conneskreimer}. $Z:\mathcal{H}\to A$ is the UV counterterm map while the $\ast$ symbol inserts the counterterm expression into the relevant vertex. The counterterm $Z(\Gamma)$ is particularly simple if its SDD is logarithmic,  $\Omega=0$. Then $Z(\Gamma)$, defined by 
\begin{equation}
\label{eq:Zlog}
Z(\ga)=-K\Big(\phi(\ga) +\sum_{\substack{\gamma\subsetneq\ga, \gamma\neq\ide 
\\ \gamma\in\mathcal{H}}}Z(\gamma)\phi(\ga/\gamma) \Big)\,,    
\end{equation}
is just a number and the insertion product `$*$' reduces to the standard product `$\cdot$'. Note the similarity of the structure of $Z$,  eq.(\ref{eq:Zlog}), with the antipode in eq.(\ref{eq:antipode1}). This is precisely what motivated to think of $Z$ as a twisted antipode. For graphs which only have logarithmic divergences and subdivergences the $R$ operation then simplifies as follows.
\begin{equation}
\label{eq:BPHZlog}
R(\ga)=\phi(\ga) + Z(\ga) + \sum_{\substack{\gamma\subsetneq\ga, \gamma\neq\ide \\ \gamma\in\mathcal{H}}}Z(\gamma)\cdot\phi\big(\ga/\gamma\big)\,.
\end{equation}
For this case the Hopf algebra structure is most transparent. The maps $Z$ and $\phi$ are homomorphisms and thus belong to the set $\mathcal{G}=\mathrm{char}(\mathcal{H},A)$ of characters that map elements in the Hopf algebra $\mathcal{H}$ to the algebra $A$. 

It so happens that this set of characters $\mathcal{G}$ also forms a group under the convolution product, which for any two elements $g_1,g_2\in \mathcal{G}$ is defined as:
\bnu
\label{eq:convproduct}
g_1\star g_2 = m_A\circ(g_1\otimes g_2)\circ\Delta_{\mathcal{H}}\,,
\enu
where $m_A$ is the multiplication operation for the algebra $A$. It is straightforward to see that $\mathcal{G}$ is a group with the identity element given by $e=u_{\mathcal{H}}\circ\bar{e}_{\mathcal{H}}$ and the inverse of any element $g\in\mathcal{G}$ given by $g\circ S$ where $S$ is the antipode of $\mathcal{H}$, with the associativity of $\star$ guaranteed by the coassociativity of the coproduct. With this, we can compactly write the $R$-operation as a convolution product: 
\bnon
R = Z\star\phi = m_A\circ(Z\otimes \phi)\circ\Delta_{\mathcal{H}}\,.
\enon
As said before, this works perfectly well for logarithmic counterterms in purely scalar theories, but requires further justification for the case of higher degree divergences or more generally non-scalar interactions. A workaround in such cases is given by the
projector pairing formalism \cite{Ebrahimi-Fard:2005pga,Kreimer:2009iy,aihpd,Kissler:2019vga,Prinz:2019awo,Prinz:2022qll} in which every element of the Hopf algebra is promoted to a pair $(\ga,\sigma_i)$ where the $\sigma_i$'s project onto form factors that form a partition of unity. 

With the appropriate projectors we can then project out the interactions created by the $\ast$ symbol in terms of convolution products of characters, the sum of all the projections then gives us the total $R$-operation.

\section{The Hopf algebra of asymptotic subgraphs}
\label{hopfdef}
We start this section with a discussion of the graph-theoretic aspects of AI subgraphs, and show that they give rise to their own Hopf algebra which is closely related to the motic Hopf Algebra. 

\subsection{Motic and asymptotic subgraphs}
\label{Motic}
The criteria defining an AI subgraph are closely related to those defining a Euclidean IR subgraph, as they appear in the context of the $R^*$-operation. We start by reviewing the definition of IR subgraphs as given in \cite{Beekveldt:2020kzk}. Let $\gamma$ be a subgraph of a Feynman diagram $\Gamma$. Now let $\bar{\gamma}=\Gamma\backslash\gamma$ be the complement of $\gamma$ in $\Gamma$, that is obtained by deleting all edges and vertices of $\gamma$ in $\Gamma$. If the following conditions hold:
\blsd
\item $\bar{\gamma}$ contains all external lines and massive lines of $\Gamma$,
\item each connected component of the graph obtained by contracting all the massive lines in $\bar{\gamma}$ and identifying all the external lines in $\bar{\gamma}$ is 1PI,
\elsd
then $\gamma$ is an \emph{IR subgraph}, and we will refer to the corresponding $\bar \gamma$ as the \emph{IR cograph}. Note that we do \emph{not} require an IR cograph to lead to an IR divergence, for which a necessary condition would be that the contracted graph $\Gamma/\bar\gamma$ has a non-positive SDD. 

Let us assume the existence of a number of soft (small) scales (masses or momenta), around which we wish to Taylor expand. Now, let $\Gamma_0$ be the graph obtained by putting all these soft scales to zero in $\ga$. We will now show that any AI subgraph, $\gai$, of $\ga$ is in one-to-one correspondence with an IR cograph, $\bar{\gamma}$, of $\Gamma_0$. This follows since:
\blsd
\item if all soft momenta and masses are set to zero, the only external legs and massive lines in $\Gamma_0$ are the hard legs and heavy lines. Thus the first criterion for $\bar{\gamma}$ becomes identical to the first criterion for $\gai$,
\item the second criterion for $\bar{\gamma}$ is manifestly identical to the second criterion for $\gai$.
\elsd
Thus, we arrive at the following Lemma.
\begin{lem}\label{aitoir}
Let $\ga_0:=\tau(\ga)$ be the graph obtained from $\ga$ when all soft scales are set to zero through the operation $\tau$. Then there exists a bijection between each AI subgraph $\gai$ of $\ga$ and an IR cograph $\bar{\gamma}$ of $\ga_0$, such that:
\begin{equation*}
  \bar\gamma=\tau(\gai) 
\end{equation*}
\end{lem}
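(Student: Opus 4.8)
The plan is to establish the claimed bijection $\bar\gamma=\tau(\gai)$ by showing that the operation $\tau$, which sets all soft scales to zero, maps the defining conditions of an AI subgraph onto the defining conditions of an IR cograph, and that this map is invertible. The key observation, already noted in the excerpt, is that the two pairs of criteria become \emph{literally identical} once the soft scales are set to zero, so the real content of the lemma is to verify that $\tau$ is well-defined on subgraphs, that it genuinely produces an IR cograph, and that it admits an inverse.

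First I would make precise how $\tau$ acts. Setting soft momenta and masses to zero does not remove any edges or vertices of $\ga$; it only changes which lines count as \emph{massive} and which external legs count as \emph{hard}. Thus $\tau$ induces the identity on the underlying graph, and $\gai\subseteq\ga$ is sent to the \emph{same} edge/vertex subset viewed now inside $\ga_0$. The content is then purely combinatorial. I would argue that, in $\ga_0$, the only massive lines are the heavy lines (those carrying an $M$) and the only external legs are the hard legs (those carrying a $q$), since every soft mass $m_i$ and soft momentum $p_i$ has been set to zero. With this identification in hand, I would verify the two IR-cograph conditions for $\bar\gamma:=\tau(\gai)$ directly: condition~(1) for $\bar\gamma$ asks that $\bar\gamma$ contain all external and massive lines of $\ga_0$, which by the above is exactly the statement that $\gai$ contains all hard legs and heavy lines, i.e.\ condition~(1) for an AI subgraph; condition~(2) for $\bar\gamma$ is the $1$PI requirement after contracting massive lines and identifying external lines, which is verbatim condition~(2) for $\gai$. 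Here one must be careful about the direction of the complement: the IR framework is phrased in terms of $\bar\gamma=\ga_0\backslash\gamma$, so I would need to check that the role of ``cograph'' is consistent with taking $\gai$ itself (the subgraph that contains the hard data) as the object that maps to $\bar\gamma$, rather than its complement.

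The main obstacle, and the step requiring the most care, is precisely this bookkeeping of complements and the verification that $\tau$ is a genuine \emph{bijection} rather than merely a well-defined map. For surjectivity I would start from an arbitrary IR cograph $\bar\gamma$ of $\ga_0$ and reconstruct the unique $\gai$ of $\ga$ with $\tau(\gai)=\bar\gamma$; because $\tau$ acts as the identity on the combinatorial graph and only reinterprets the scale labels, the preimage is forced to be the identical edge/vertex subset in $\ga$, and one checks it satisfies the AI conditions by reading the equivalence of the two criteria in the reverse direction. For injectivity, two AI subgraphs with the same image under $\tau$ are the same edge/vertex subset and hence equal in $\ga$. I would also flag the subtlety that AI subgraphs and IR subgraphs are both defined up to the labelling conventions of $\mathcal{H}$ (internal edges unlabelled, external edges tracked), so the bijection should be understood at the level of these isomorphism classes; confirming that $\tau$ respects this equivalence is the one place where the argument is more than a restatement of definitions.

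Having set this up, the proof itself is short: it amounts to displaying the two-line dictionary (soft scales off $\Leftrightarrow$ hard/heavy data only) and then citing the equivalence of criteria~(1) and~(2) in both directions to conclude that $\gai\mapsto\tau(\gai)$ is a bijection onto the IR cographs of $\ga_0$. I would therefore structure the write-up as: (i) define the action of $\tau$ on graphs and on scale labels; (ii) identify massive/external data of $\ga_0$; (iii) match the two AI criteria to the two IR-cograph criteria; (iv) exhibit the inverse to conclude bijectivity. The genuinely nontrivial input is the identification in step~(ii) together with the complement bookkeeping; everything else is the verbatim correspondence already signalled in the text preceding the Lemma.
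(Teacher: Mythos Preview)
Your proposal is correct and follows essentially the same approach as the paper: the paper's argument (given in the two bullet points immediately preceding the Lemma) is precisely your step~(ii)--(iii), namely that once soft scales are set to zero the external/massive data of $\ga_0$ reduce to the hard/heavy data, so the two AI criteria and the two IR-cograph criteria become literally identical. You are more careful than the paper about spelling out that $\tau$ acts as the identity on the underlying combinatorial graph and about explicitly verifying injectivity and surjectivity, but this extra rigor does not change the route; the paper simply treats the bijection as immediate once the criteria are seen to coincide.
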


\noindent Let us explore this with a simple example. Consider
\begin{equation*}
 \ga =    \begin{minipage}{2.0cm}
\begin{tikzpicture}
\draw [thick] (0,0.866) -- (-0.5,0);
\draw [ultra thick, dotted, red] (0,0.866) -- (0,1.216);
\draw [thick] (0.5,0) -- (0,0.866);
\draw [thick] (0.5,0) -- (0.75,-0.25);
\draw [thick] (-0.5,0) -- (-0.75,-0.25);
\draw [thick] (-0.5,0) -- (0.5,0);
%\draw [blue!60, ultra thick] (-0.3,0.4) .. controls (0,0.6) .. (0.3,0.4);
%\draw [blue!60, ultra thick] (-0.3,0.4) .. controls (0,0.2) .. (0.3,0.4);
%\filldraw [blue] (0,0.5) circle (2pt);
\filldraw (-0.5,0) circle (2pt);
%\filldraw [blue] (0,-0.5) circle (2pt);
\filldraw (0.5,0) circle (2pt);
\filldraw (0, 0.866) circle (2pt);
\draw (0,1.416) node{${\scriptstyle 3}$};
\draw (0.9,-0.4) node{${\scriptstyle 2}$};
\draw (-0.9,-0.4) node{${\scriptstyle 1}$};
%\filldraw [blue] (0.3,0.4) circle (2pt);
%\filldraw [blue] (-0.3,0.4) circle (2pt);
%\filldraw [blue] (0,0) circle (2pt);
\end{tikzpicture}
\end{minipage}\,.
\end{equation*}
Here, the dotted leg marked by 3 is the soft momentum and 1 and 2 are hard. All internal lines are massless. It is not hard to see that $\ga$ has two proper AI subgraphs:
\begin{equation*}
\begin{minipage}{2.0cm}
\begin{tikzpicture}
\draw [thick] (0.5,0) -- (0.75,0.25);
\draw [thick] (-0.5,0) -- (-0.75,0.25);
\draw [thick] (0.5,0) -- (0.75,-0.25);
\draw [thick] (-0.5,0) -- (-0.75,-0.25);
\draw [thick] (-0.5,0) -- (0.5,0);
%\draw [blue!60, ultra thick] (-0.3,0.4) .. controls (0,0.6) .. (0.3,0.4);
%\draw [blue!60, ultra thick] (-0.3,0.4) .. controls (0,0.2) .. (0.3,0.4);
%\filldraw [blue] (0,0.5) circle (2pt);
\filldraw (-0.5,0) circle (2pt);
%\filldraw [blue] (0,-0.5) circle (2pt);
\filldraw (0.5,0) circle (2pt);
\draw (0,0.4) node{  };
\draw (0.9,-0.4) node{${\scriptstyle 2}$};
\draw (-0.9,-0.4) node{${\scriptstyle 1}$};
%\filldraw [blue] (0.3,0.4) circle (2pt);
%\filldraw [blue] (-0.3,0.4) circle (2pt);
%\filldraw [blue] (0,0) circle (2pt);
\end{tikzpicture}
\end{minipage}, 
\begin{minipage}{2.0cm}
\begin{tikzpicture}
\draw [thick] (0,0.866) -- (-0.5,0);
\draw [ultra thick, dotted, red] (0,0.866) -- (0,1.216);
\draw [thick] (0.5,0) -- (0,0.866);
\draw [thick] (0.5,0) -- (0.75,-0.25);
\draw [thick] (-0.5,0) -- (-0.75,-0.25);
\draw [thick] (0.5,0) -- (0.2,0);
\draw [thick] (-0.5,0) -- (-0.2,0);
%\draw [thick] (-0.5,0) -- (0.5,0);
%\draw [blue!60, ultra thick] (-0.3,0.4) .. controls (0,0.6) .. (0.3,0.4);
%\draw [blue!60, ultra thick] (-0.3,0.4) .. controls (0,0.2) .. (0.3,0.4);
%\filldraw [blue] (0,0.5) circle (2pt);
\filldraw (-0.5,0) circle (2pt);
%\filldraw [blue] (0,-0.5) circle (2pt);
\filldraw (0.5,0) circle (2pt);
\filldraw (0, 0.866) circle (2pt);
\draw (0,1.416) node{${\scriptstyle 3}$};
\draw (0.9,-0.4) node{${\scriptstyle 2}$};
\draw (-0.9,-0.4) node{${\scriptstyle 1}$};
%\filldraw [blue] (0.3,0.4) circle (2pt);
%\filldraw [blue] (-0.3,0.4) circle (2pt);
%\filldraw [blue] (0,0) circle (2pt);
\end{tikzpicture}
\end{minipage}\,.
\end{equation*}
Note that the second AI subgraph does not actually contribute , since the corresponding contracted graph $\Gamma/\gamma_{AI}$ is scaleless. Let us now set the momentum of leg 3 to zero to obtain
\begin{equation*}
\Gamma_0 =    \begin{minipage}{2.0cm}
\begin{tikzpicture}
\draw [thick] (0,0.866) -- (-0.5,0);
%\draw [ultra thick, dotted, red] (0,0.866) -- (0,1.216);
\draw [thick] (0.5,0) -- (0,0.866);
\draw [thick] (0.5,0) -- (0.75,-0.25);
\draw [thick] (-0.5,0) -- (-0.75,-0.25);
\draw [thick] (-0.5,0) -- (0.5,0);
%\draw [blue!60, ultra thick] (-0.3,0.4) .. controls (0,0.6) .. (0.3,0.4);
%\draw [blue!60, ultra thick] (-0.3,0.4) .. controls (0,0.2) .. (0.3,0.4);
%\filldraw [blue] (0,0.5) circle (2pt);
\filldraw (-0.5,0) circle (2pt);
%\filldraw [blue] (0,-0.5) circle (2pt);
\filldraw (0.5,0) circle (2pt);
\filldraw (0, 0.866) circle (2pt);
%\draw (0,1.416) node{${\scriptstyle 3}$};
\draw (0.9,-0.4) node{${\scriptstyle 2}$};
\draw (-0.9,-0.4) node{${\scriptstyle 1}$};
%\filldraw [blue] (0.3,0.4) circle (2pt);
%\filldraw [blue] (-0.3,0.4) circle (2pt);
%\filldraw [blue] (0,0) circle (2pt);
\end{tikzpicture}
\end{minipage}\,,
\end{equation*}
with the IR cographs of $\Gamma_0$ given by
\begin{equation*}
\begin{minipage}{2.0cm}
\begin{tikzpicture}
\draw [thick] (0.5,0) -- (0.75,0.25);
\draw [thick] (-0.5,0) -- (-0.75,0.25);
\draw [thick] (0.5,0) -- (0.75,-0.25);
\draw [thick] (-0.5,0) -- (-0.75,-0.25);
\draw [thick] (-0.5,0) -- (0.5,0);
%\draw [blue!60, ultra thick] (-0.3,0.4) .. controls (0,0.6) .. (0.3,0.4);
%\draw [blue!60, ultra thick] (-0.3,0.4) .. controls (0,0.2) .. (0.3,0.4);
%\filldraw [blue] (0,0.5) circle (2pt);
\filldraw (-0.5,0) circle (2pt);
%\filldraw [blue] (0,-0.5) circle (2pt);
\filldraw (0.5,0) circle (2pt);
\draw (0,0.4) node{  };
\draw (0.9,-0.4) node{${\scriptstyle 2}$};
\draw (-0.9,-0.4) node{${\scriptstyle 1}$};
%\filldraw [blue] (0.3,0.4) circle (2pt);
%\filldraw [blue] (-0.3,0.4) circle (2pt);
%\filldraw [blue] (0,0) circle (2pt);
\end{tikzpicture}
\end{minipage},
\begin{minipage}{2.0cm}
\begin{tikzpicture}
\draw [thick] (0,0.866) -- (-0.5,0);
%\draw [ultra thick, dotted, red] (0,0.866) -- (0,1.216);
\draw [thick] (0.5,0) -- (0,0.866);
\draw [thick] (0.5,0) -- (0.75,-0.25);
\draw [thick] (-0.5,0) -- (-0.75,-0.25);
\draw [thick] (0.5,0) -- (0.2,0);
\draw [thick] (-0.5,0) -- (-0.2,0);
%\draw [thick] (-0.5,0) -- (0.5,0);
%\draw [blue!60, ultra thick] (-0.3,0.4) .. controls (0,0.6) .. (0.3,0.4);
%\draw [blue!60, ultra thick] (-0.3,0.4) .. controls (0,0.2) .. (0.3,0.4);
%\filldraw [blue] (0,0.5) circle (2pt);
\filldraw (-0.5,0) circle (2pt);
%\filldraw [blue] (0,-0.5) circle (2pt);
\filldraw (0.5,0) circle (2pt);
\filldraw (0, 0.866) circle (2pt);
%\draw (0,1.416) node{${\scriptstyle 3}$};
\draw (0.9,-0.4) node{${\scriptstyle 2}$};
\draw (-0.9,-0.4) node{${\scriptstyle 1}$};
%\filldraw [blue] (0.3,0.4) circle (2pt);
%\filldraw [blue] (-0.3,0.4) circle (2pt);
%\filldraw [blue] (0,0) circle (2pt);
\end{tikzpicture}
\end{minipage}\,,
\end{equation*}
which are in one-to-one correspondence to the $\gai$'s of $\Gamma$ with leg $3$ soft.

Let us now turn to the definition of a motic mass-momentum spanning (MM) subgraph. For a subgraph $\gamma$ of $G$ to be \emph{mass-mommentum spanning} in $G$  it must contain all the masses and external momenta of the parent graph $G$, with the external momenta all in a single connected component in $\gamma$. We can then state the definition for a subgraph to be motic. For a subgraph $\gamma$ of $G$ to be motic, each proper   subgraph $\mu$ of $\gamma$ must be MM in $\gamma$ and have strictly less loops than $\gamma$. Given the recursivity of this definition it may be difficult to grasp its meaning without examples, however the following theorem by Beekveldt, one of the authors and Borinsky \cite{Beekveldt:2020kzk} provides an alternative definition for the class of MM subgraphs. 
\begin{theo}\label{irtomm}
A subgraph $\gamma\subseteq\ga$ is a motic mass-momentum spanning subgraph of $\ga$ if and only if it is an IR cograph of $\ga$.
\end{theo}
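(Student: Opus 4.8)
The plan is to prove the two implications of the ``if and only if'' in parallel, in each case splitting both characterisations into a \emph{spanning} part and an \emph{irreducibility} part and matching them piece by piece. The spanning parts are almost immediate: that a motic mass-momentum spanning subgraph contains all masses and all external momenta (with the external momenta in a single connected component) is, for the fixed subgraph $\gamma$, the same requirement as the first IR-cograph condition that $\gamma$ contain all external and massive lines. I would dispatch this first, the only point requiring care being to argue that ``all external momenta lie in one connected component of $\gamma$'' is the correct graph-theoretic counterpart of the operation ``identify all external lines'' used in the IR-cograph condition, so that the two spanning conditions genuinely coincide on $\gamma$.

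The heart of the argument is to show that, for a $\gamma$ satisfying this common spanning condition, the \emph{motic} loop-number condition (no proper mass-momentum spanning $\mu\subsetneq\gamma$ has the same loop number as $\gamma$) is equivalent to the second IR-cograph condition, namely that the graph $\tilde\gamma$ obtained from $\gamma$ by contracting all massive lines and identifying all external vertices into one vertex $v_*$ is componentwise $1$PI, i.e. bridgeless. I would prove this by contraposition, setting up a correspondence between bridges of $\tilde\gamma$ and proper equal-loop mass-momentum spanning subgraphs of $\gamma$. In one direction, a bridge $e$ splits the relevant component of $\tilde\gamma$ into the part $A$ containing $v_*$ and the remaining part $B$; deleting $e$ together with the purely massless edges of $B$ from $\gamma$ produces a proper subgraph which, as one verifies by computing $E-V+C$ and using that a bridge carries no loop, has the same loop number as $\gamma$. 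This subgraph still contains all external momenta, since their preimage sits at $v_*\in A$, and all masses, since the massive edges are retained and are permitted by the mass-momentum spanning definition to lie in a separate component; hence it is a proper equal-loop mass-momentum spanning subgraph, and $\gamma$ is not motic. Conversely, given such a $\mu\subsetneq\gamma$, the edges of $\gamma\setminus\mu$ contribute no independent loops and so attach to $\mu$ in a forest-like manner; after contracting masses and identifying external vertices this forest manifests as a bridge in $\tilde\gamma$, breaking the $1$PI condition.

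The step I expect to be the main obstacle is controlling loop numbers across the two operations defining $\tilde\gamma$. Contracting a massive line changes $V$ and $E$ together and leaves the loop number fixed, \emph{except} for a massive tadpole, which must be treated as a separate degenerate case; identifying several external vertices into $v_*$ \emph{raises} the loop number whenever those vertices lie in distinct components, so the loop number of $\tilde\gamma$ differs from that of $\gamma$ and a bridge of $\tilde\gamma$ need not be a bridge of $\gamma$. The correspondence must therefore be phrased in terms of loop-number \emph{differences} rather than absolute values, and showing that it is a clean bijection respecting the componentwise formulation and the equivalence $1\text{PI}\Leftrightarrow\text{bridgeless}$ is where the careful bookkeeping lies. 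I would organise this by induction on the loop number of $\gamma$, exploiting the recursive structure of the motic definition to reduce the general, possibly disconnected, situation to one connected component at a time.
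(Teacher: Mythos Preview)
The paper does not actually prove Theorem~\ref{irtomm}. It is stated as a result due to Beekveldt, Borinsky and one of the authors, with a citation to \cite{Beekveldt:2020kzk}, and is then used as a black box to deduce the analogous statement for hard mass-momentum spanning subgraphs (Theorem~\ref{aitohm}). So there is no proof in the paper to compare your proposal against.

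As for your sketch on its own merits: the overall strategy of matching the spanning conditions and then setting up a correspondence between bridges of $\tilde\gamma$ and proper equal-loop MM subgraphs of $\gamma$ is the natural one, and is indeed close in spirit to how the result is established in \cite{Beekveldt:2020kzk}. Two places would need tightening before this becomes a proof. First, in the direction ``bridge $\Rightarrow$ not motic'', your construction ``delete $e$ together with the purely massless edges of $B$'' is not quite right as stated: you must be careful that what you delete from $\gamma$ corresponds to edges whose image lies in $B$, and that the resulting $\mu$ still has all external momenta in a \emph{single} connected component --- the identification of external vertices into $v_*$ can create cycles in $\tilde\gamma$ that were not present in $\gamma$, so the preimage of $A$ in $\gamma$ may itself be disconnected. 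Second, in the converse direction, ``the forest manifests as a bridge in $\tilde\gamma$'' is too quick: a tree attached to $\mu$ at a single vertex gives bridges in $\gamma$, but after identifying external vertices some of those bridges may close into cycles if the tree touches several external vertices; you need to argue that at least one edge of $\gamma\setminus\mu$ survives as a bridge after the contraction/identification, and this is exactly where the hypothesis that $\mu$ itself is MM (so all external legs are already in one component of $\mu$) does the work. Your closing remark about induction on loop number is somewhat detached from the bridge argument you outline; the cited proof is more direct and does not need an inductive organisation.
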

Due to the bijection between AI subgraphs and IR subgraphs, it is thus natural to expect that there will be a corresponding theorem connecting AI subgraphs with certain kinds of motic subgraphs which span only the hard momenta and heavy masses. To make this more precise we define a subgraph $\gamma$ of $\ga$ to be \emph{hard mass-momentum spanning} if the following conditions are met:
\blsd
\item  the subgraph contains all the heavy lines of $\ga$,
\item  one connected component has all the hard legs.
\elsd
Then, a subgraph $\gamma$ is said to be \textit{motic hard mass-momentum spanning} (MHM) in $\ga$ if:
\blsd
\item $\gamma$ is a hard mass-momentum spanning subgraph of $\ga$, 
\item every proper subgraph $\mu$ of $\gamma$ which is MHM\ in $\gamma$ has at least one loop less than $\gamma$.
\elsd
Notice that the MHM property is inheritive, that is, any graph $\mu\subseteq\gamma$ that is MHM\ in $\gamma$ is also MHM\ in $\ga$ iff $\gamma$ is a MHM\ subgraph of $\ga$. 
Using the terminology introduced in Lemma \ref{aitoir} we then arrive at the following Lemma. 
\begin{lem}\label{mmtohm}
The MHM\ subgraphs of $\ga$ are in one-to-one correspondence with the MM\ subgraphs of $\ga_0=\tau(\ga)$.
\end{lem}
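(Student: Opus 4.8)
The plan is to show that the operation $\tau$, which sets all soft scales to zero, furnishes the bijection directly: the map $\gamma\mapsto\tau(\gamma)$ sends the MHM\ subgraphs of $\ga$ onto the motic mass-momentum spanning (MM) subgraphs of $\ga_0=\tau(\ga)$. The starting observation is that $\tau$ does not alter the underlying graph; it removes the soft external legs and sends the soft internal masses to zero, but leaves every internal edge and vertex of $\ga$ in place (as one sees in the triangle example, where $\ga_0$ retains the top vertex after leg $3$ is dropped). Consequently $\tau$ induces an isomorphism of the subgraph lattices of $\ga$ and $\ga_0$ that preserves loop numbers and the connected-component structure of every subgraph. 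Under this identification the heavy lines of $\ga$ are exactly the massive lines of $\ga_0$, and the hard external legs of $\ga$ are exactly the external legs of $\ga_0$, precisely because the soft masses and soft momenta have been eliminated.

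First I would match the non-recursive spanning conditions. A subgraph $\gamma$ is hard mass-momentum spanning in $\ga$ exactly when it contains all heavy lines of $\ga$ and collects all hard legs into a single connected component; transporting these data through $\tau$ by the identifications above, this says exactly that $\tau(\gamma)$ contains all masses of $\ga_0$ and has all external momenta of $\ga_0$ in one connected component, i.e. that $\tau(\gamma)$ is mass-momentum spanning in $\ga_0$. Since $\tau$ preserves connectivity this equivalence is immediate, and it makes no reference to whether any contracted graph is scaleless, so the scaleless configurations are still counted consistently on both sides.

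Next I would reconcile the recursive motic clauses by strong induction on the loop number of the subgraph. The key inputs are that $\tau$ preserves loop numbers and the subgraph lattice, together with the inheritance property of MHM\ noted just before the statement (and the analogous inheritance of the motic property underlying Theorem \ref{irtomm}). In the inductive step, a proper subgraph $\mu\subsetneq\gamma$ is MHM\ in $\gamma$ iff, by the induction hypothesis at strictly lower loop order combined with the spanning equivalence of the previous paragraph, $\tau(\mu)$ is MM\ in $\tau(\gamma)$; because $\tau$ leaves loop numbers fixed, the clause $L(\mu)<L(\gamma)$ transfers verbatim to $L(\tau(\mu))<L(\tau(\gamma))$. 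Hence $\gamma$ satisfies the motic clause in $\ga$ iff $\tau(\gamma)$ satisfies it in $\ga_0$, and together with the spanning equivalence this gives that $\gamma$ is MHM\ in $\ga$ iff $\tau(\gamma)$ is MM\ in $\ga_0$. As $\tau$ is a bijection on subgraphs, the correspondence is one-to-one. As a consistency check, this bijection agrees with composing Lemma \ref{aitoir} (AI subgraphs of $\ga$ in bijection with IR cographs of $\ga_0$ via $\tau$) with Theorem \ref{irtomm} (IR cographs of $\ga_0$ equal the MM\ subgraphs of $\ga_0$), which in effect identifies the MHM\ subgraphs with the AI\ subgraphs of $\ga$.

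I expect the main obstacle to be this inductive reconciliation of the recursive definitions: the recursion defining MHM\ refers back to MHM\ subgraphs, whereas the motic condition on $\ga_0$ is phrased in terms of mass-momentum spanning subgraphs, so one must check that the induction hypothesis is invoked only at strictly lower loop order and that the inheritance property legitimately allows a subgraph that is MHM\ (respectively motic) in an intermediate $\gamma$ to be compared with the corresponding subgraph of $\tau(\gamma)$. By contrast, the matching of the spanning conditions and the preservation of the subgraph lattice and of loop numbers under $\tau$ are routine once the action of $\tau$ on the scales is spelled out.
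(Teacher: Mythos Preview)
Your approach is correct and is essentially what the paper has in mind: the paper states this lemma without proof, treating it as immediate from the parallel structure of the MHM and MM definitions once one observes that $\tau$ preserves the subgraph lattice and loop numbers while turning ``hard scales of $\Gamma$'' into ``all scales of $\Gamma_0$''. Your two-step argument (match the spanning conditions, then match the recursive motic clauses) is the natural way to spell this out.

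There is, however, one technical slip worth fixing. You propose strong induction on the loop number of $\gamma$, and then invoke the hypothesis for a proper subgraph $\mu\subsetneq\gamma$ ``at strictly lower loop order''. But a proper subgraph need not have fewer loops: removing a tree edge leaves the loop number unchanged, and it is precisely the presence or absence of such equal-loop MHM (resp.\ MM) subgraphs that decides the motic clause. So the induction hypothesis is not available for those $\mu$. The fix is to induct on the number of internal edges instead: every proper subgraph has strictly fewer edges, so the recursion in the definitions of MHM and motic is well-founded under this measure, and your argument goes through verbatim. You in fact flag this inductive reconciliation as the main obstacle; choosing the right induction variable dissolves it, and no appeal to the inheritance property is needed at this step.
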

In order to check whether a subgraph is MHM\ or not one does not actually need to check the second criterion for all proper subgraphs $\mu$ of $\gamma$.  Instead it is sufficient to check that all MHM\ subgraphs $\mu$, obtained by deleting a single edge in $\gamma$, have one loop less than $\gamma$. Brown showed this to be true for MM subgraphs in ref.~\cite{Brown:2015fyf}. With Lemma \ref{mmtohm} this property also holds for MHM subgraphs. 

With lemmas \ref{aitoir} and \ref{mmtohm}  and theorem \ref{irtomm}, we are then in a position to state the following theorem.
\begin{theo}\label{aitohm}
A subgraph $\gamma\subseteq\ga$ is a motic hard mass-momentum spanning subgraph of $\ga$ if and only if it is an AI subgraph of $\ga$.
\end{theo}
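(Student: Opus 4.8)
The plan is to obtain the statement not by comparing the defining conditions of MHM and AI subgraphs directly, but as a composition of the three correspondences already established in this subsection. A direct comparison would be awkward: the AI condition is phrased through a $1$PI criterion after contracting heavy lines and identifying hard external vertices, whereas the MHM condition is the recursive motic loop-counting criterion, and bridging these two formulations is exactly the nontrivial content of Theorem~\ref{irtomm}. It is therefore cleaner to route the argument through the auxiliary graph $\ga_0=\tau(\ga)$, where each of Lemma~\ref{aitoir}, Theorem~\ref{irtomm}, and Lemma~\ref{mmtohm} supplies one link.

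Concretely, I would fix a subgraph $\gamma\subseteq\ga$ and read off the chain of equivalences
\begin{align*}
\gamma \text{ is AI in } \ga
&\iff \tau(\gamma) \text{ is an IR cograph of } \ga_0 \\
&\iff \tau(\gamma) \text{ is MM in } \ga_0 \\
&\iff \gamma \text{ is MHM in } \ga,
\end{align*}
where the three equivalences are, respectively, Lemma~\ref{aitoir} (the bijection $\gamma\mapsto\tau(\gamma)$ between AI subgraphs of $\ga$ and IR cographs of $\ga_0$), Theorem~\ref{irtomm} applied to the graph $\ga_0$ (the IR cograph $\tau(\gamma)$ and the MM subgraph $\tau(\gamma)$ are the very same subgraph of $\ga_0$, merely re-described), and Lemma~\ref{mmtohm} (the $\tau$-mediated correspondence between MM subgraphs of $\ga_0$ and MHM subgraphs of $\ga$). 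Reading the first and last lines together yields that $\gamma$ is AI in $\ga$ precisely when it is MHM in $\ga$, which is the claim; since every link is an ``if and only if'', the result is automatically bidirectional.

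The step requiring genuine care, and which I expect to be the main obstacle, is verifying that these three correspondences compose to the identity on subgraphs of $\ga$ rather than to some nontrivial reshuffling. For this I would make explicit that $\tau$ leaves the internal topology of a graph untouched: it only reclassifies scales, sending soft masses to massless lines and removing soft external legs, while preserving the set of internal edges and vertices, the loop numbers, and hence the whole subgraph lattice. Consequently $\tau$ is a bijection on subgraphs (identified by their edge/vertex sets), and the composite sends $\gamma\mapsto\tau(\gamma)\mapsto\tau(\gamma)\mapsto\tau^{-1}(\tau(\gamma))=\gamma$. Along the way I would also check the translation of the scale data that makes Lemma~\ref{mmtohm} a $\tau$-correspondence: in $\ga_0$ the only surviving masses are the heavy lines and the only surviving external momenta are the hard legs, so the \emph{hard} mass-momentum spanning condition on $\gamma$ in $\ga$ becomes verbatim the mass-momentum spanning condition on $\tau(\gamma)$ in $\ga_0$, and the motic loop-counting recursion is preserved because $\tau$ fixes loop numbers and the subgraph poset. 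With these identifications in place the equivalence is immediate.
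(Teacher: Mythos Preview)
Your proposal is correct and is exactly the approach the paper takes: it states the theorem as an immediate consequence of Lemmas~\ref{aitoir} and~\ref{mmtohm} together with Theorem~\ref{irtomm}, without spelling out further details. Your chain of equivalences via $\tau$ and $\ga_0$ simply makes explicit what the paper leaves implicit, and your remark that $\tau$ preserves the internal edge set and loop numbers (so the three bijections compose to the identity on subgraphs) is precisely the point needed to close the argument.
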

Theorem \ref{aitohm} also implies the following corollary which was proved by Brown for MM subgraphs in ref.~\cite{Brown:2015fyf}.
\begin{cor}\label{inherit}
Let $\mu\subset\gamma\subset\ga$ be subgraphs. Then
\begin{enumerate}
    \item $\mu$ is an AI subgraph of $\gamma$ and $\gamma$ is an AI subgraph of $\ga \Longleftrightarrow \mu $ is an AI subgraph of $\ga$
    \item $\gamma$ is an AI subgraph of $\ga \Longleftrightarrow \gamma/\mu$ is an AI subgraph of $\ga/\mu$
\end{enumerate}
\end{cor}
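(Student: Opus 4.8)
The plan is to transfer both statements to the already-established inheritance properties of \emph{MM} subgraphs proved by Brown in ref.~\cite{Brown:2015fyf}, using the dictionary assembled above. By Theorem \ref{aitohm} the predicate ``is an AI subgraph of'' coincides with ``is an MHM subgraph of'', so it suffices to prove both equivalences with AI replaced by MHM throughout. The bridge to Brown's results is Lemma \ref{mmtohm} together with the operation $\tau$ of setting all soft scales to zero: for a parent graph $G$, a subgraph is MHM in $G$ precisely when its image under $\tau$ is MM in $\tau(G)=G_0$.

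First I would treat part 1. Applying this dictionary with parent graph $\gamma$ shows that $\mu$ is MHM in $\gamma$ iff $\tau(\mu)$ is MM in $\tau(\gamma)$, and with parent graph $\ga$ that $\gamma$ is MHM in $\ga$ iff $\tau(\gamma)$ is MM in $\ga_0$. Hence the left-hand side of part 1 is equivalent to the conjunction ``$\tau(\mu)$ is MM in $\tau(\gamma)$ and $\tau(\gamma)$ is MM in $\ga_0$''. Brown's transitivity statement for MM subgraphs collapses this to ``$\tau(\mu)$ is MM in $\ga_0$'', which by the dictionary with parent $\ga$ is exactly ``$\mu$ is MHM in $\ga$'', i.e.\ ``$\mu$ is AI in $\ga$''. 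This is precisely the inheritance of the MHM property anticipated in the text above, now made rigorous by reduction to the MM case.

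For part 2 the key extra ingredient is that $\tau$ commutes with edge contraction, $\tau(\ga/\mu)=\tau(\ga)/\tau(\mu)$ and likewise for $\gamma/\mu$; this should hold because contracting an internal edge neither creates nor destroys soft scales, so the two operations act on disjoint data. Granting this, $\gamma$ being AI in $\ga$ is equivalent to $\tau(\gamma)$ being MM in $\ga_0$, which by Brown's contraction statement for MM subgraphs is equivalent to $\tau(\gamma)/\tau(\mu)=\tau(\gamma/\mu)$ being MM in $\ga_0/\tau(\mu)=\tau(\ga/\mu)$, and the dictionary with parent $\ga/\mu$ turns this back into $\gamma/\mu$ being AI in $\ga/\mu$.

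\textbf{Main obstacle.} The delicate point is justifying that the $\tau$-dictionary of Lemma \ref{mmtohm} may be applied with a \emph{subgraph} $\gamma$ as parent, rather than the full $\ga$: one must check that the hard/soft designation of the legs of $\gamma$ --- some of which are internal edges of $\ga$ carrying loop momenta, as noted after eq.~(\ref{ebrnew}) --- is inherited consistently, so that $\tau$ restricted to $\gamma$ agrees with the scale-setting defining MHM-ness inside $\gamma$. Equally, for part 2 one must verify carefully that contraction and $\tau$ genuinely commute at the level of graphs, including the identification of external vertices. Once these compatibilities are secured, the remainder is a purely formal chase through Theorem \ref{aitohm}, Lemma \ref{mmtohm} and Brown's MM results.
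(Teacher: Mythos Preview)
Your proposal is correct and follows exactly the route the paper takes: the paper does not spell out a proof but simply remarks that the corollary ``was proved by Brown for MM subgraphs'' and is implied by Theorem~\ref{aitohm}, i.e.\ one transports Brown's MM inheritance results through the AI\,$\leftrightarrow$\,MHM\,$\leftrightarrow$\,MM dictionary of Theorem~\ref{aitohm} and Lemma~\ref{mmtohm}. Your write-up is in fact more explicit than the paper's, and the compatibility checks you flag (applying $\tau$ with a subgraph as parent, and commutation of $\tau$ with contraction) are precisely the bookkeeping the paper leaves implicit.
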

MM subgraphs inherit all the masses and momenta of the parent graph\cite{Brown:2015fyf}. Similarly, MHM subgraphs inherit all the hard masses and hard momenta of the parent graph. Thus, the associated contracted graph depends only on soft scales. We further define a \emph{primitive} MHM subgraph to be one which does not have any proper MHM subgraphs.

\subsection{Hopf algebra construction} 

Let $\Gamma^{[1]}$ be the set of lines of a 1PI-graph $\ga$ (both external and internal). Let $P$ be the set of lines corresponding to the soft momenta and soft masses. Then, $Q\equiv \Gamma^{[1]}\backslash P$ is the set of hard momentum legs and heavy mass lines. For there to be a meaningful (asymptotic) expansion we impose that $Q$ must be such that $\ga$ has kinematic dependence on the hard scales. 
In the following we will use that:
\begin{itemize}
\item $\gamma_{AI}$ refers to an AI subgraph of $\ga$ which, by theorem \ref{aitohm}, is also motic hard mass-momentum spanning in $\ga$ w.r.t the scales in $Q$.
\item ${\ga}/ {\gamma_{AI}}$ is obtained from $\ga$ by contracting all the \textbf{internal edges} in $\gamma_{AI}$ to a point.
%\footnote{This might look ad hoc, but without it the vector space we would define will not be connected, which is important for a natural definition of the antipode. {\color{red} We could discuss about it in the paper.}}
\end{itemize}
Using the definitions as given above, we define a set $H_\ga$ which initially contains all AI subraphs $\gamma_{AI}$ of $\ga$ and the corresponding ${\ga}/ {\gamma_{AI}}$. We now perform the following steps: 
\begin{enumerate}
    \item For each $\gamma\in H_\ga$ add to $H_\ga$ all possible AI subraphs $\gamma_{AI}$ of $\gamma$ and the corresponding ${\gamma}/ {\gamma_{AI}}$.
    \item Repeat step 1 until no further proper subgraphs can be obtained.
    \item Identify all the single vertex graphs with the empty graph.
\end{enumerate}
The grading in terms of loops ensures that the above recursion terminates, so the set constructed as above is well-defined, and finite. We can extend $H_\ga$ to the set $H:=\bigcup H_\ga$, where the union is over all graphs $\ga$ with the same set of hard scales $Q$. The $\mathbb{K}$-vector space $\hopf$ spanned by all possible disjoint unions of the elements of $H$, with $\mathbb{K}$ any suitable number field, for us most importantly the rational numbers $\mathbb{Q}$, then becomes an algebra with
\begin{itemize}
    \item multiplication $m: \hopf \otimes \hopf \to \hopf$ given by concatenation of graphs,
    \item unit $u: \mathbb{K}\to \hopf$ given by $q \mapsto q\mathbb{I}$
\end{itemize} and extended linearly. Here $\mathbb{I}$ is the multiplicative identity given by the empty graph. 
% Note that one can also construct a Hopf algebra $\mathcal{H}_\Gamma$ whose elements are those generated from $H_\Gamma$ only. This will be convenient later on. \\

\noindent Now we need to find a compatible counit and coproduct on $\hopf$ to promote it to a bialgebra. The counit $\bar{e}: \hopf \to \mathbb{K}$ is given by
\bnon
\bar{e}(\gamma) =  
\begin{cases}
1 & \text{if } \gamma = \mathbb{I} \\
0 & \text{otherwise}
\end{cases}
\enon
and extended linearly. Following similar arguments as in \cite{Brown:2015fyf} this counit is compatible with $m$ and $u$. Let us now define the coproduct $\Delta :\hopf \to \hopf \otimes \hopf$ for the elements in the generator set $H$ of $\hopf$ as:%\footnote{The piecewise definition might look ad hoc, but it is warranted, as for the empty graph, $Q$ is empty, and it needs to be treated specially as the very definition of EBS is based on $Q$.}
\bnu
\label{eq:coproducta}
\Delta(\gamma) = 
\begin{cases}
\ide\otimes\gamma + \gamma\otimes\ide +  \displaystyle{\sum_{\substack{\mu_{AI}\subsetneq\gamma \\ \mu_{AI}\neq \ide}}{\mu_{AI}\otimes\gamma/\mu_{AI}}} & \text{if } \gamma\in H \text{ is non-empty} \\
\ide\otimes\ide & \text{if } \gamma = \ide
\end{cases}
\enu
If $\gamma$ is the disjoint union of more than one $\delta_i$ that are elements of $H$,
\bnu
\label{eq:coproductb}
\Delta(\gamma) = \Delta(\sqcup_i\delta_i) = \sum \left(\sqcup_i\delta_i'\right)\otimes\left(\sqcup_i\delta_i''\right)
\enu
where we used Sweedler's notation. The $\sqcup_i\delta_i'$ run over all possible disjoint unions of subgraphs of each $\delta_i$ that appear on the left side of the $\otimes$ symbol of the coproduct expression for $\delta_i$, likewise for $\sqcup_i\delta_i''$.

\subsection{Proof of Hopf Algebraic Structure}\label{prooo}
\label{sec:hopfproof}

In the following we show that the  operations $(u,m,\bar e,\Delta)$ indeed fulfill all the requirements of a bialgebra, which together with an antipode $S$, to be defined below, is then promoted to a Hopf algebra.  While it is clear that $\hopf$ is a unital associative algebra with $(u,m)$, we still need to show that counitarity, coassociativity, stated in eqs. (\ref{eq:counitarity}) and (\ref{eq:coassociativity}), 
and the compatibility of $m$ and $\Delta$ hold. 

\begin{pro}
\label{pro:counitarity}
The coproduct $\Delta$ fulfills counitarity with $\bar{e}$.
\end{pro}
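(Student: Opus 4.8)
The plan is to verify the two counit axioms $(id\otimes\bar e)\circ\Delta\cong id\cong(\bar e\otimes id)\circ\Delta$, where on the right we use the canonical identifications $\hopf\otimes\mathbb{K}\cong\hopf\cong\mathbb{K}\otimes\hopf$. Since $\bar e$ is an algebra homomorphism by construction and $\Delta$ is defined multiplicatively on disjoint unions in eq.~(\ref{eq:coproductb}), both composites $(id\otimes\bar e)\circ\Delta$ and $(\bar e\otimes id)\circ\Delta$ are algebra homomorphisms $\hopf\to\hopf$, as is $id$. Two algebra homomorphisms that agree on a generating set agree on the whole algebra, so it suffices to check the axioms on the generators, i.e.\ on $\ide$ and on an arbitrary non-empty $\gamma\in H$. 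First I would dispose of the unit: from $\Delta(\ide)=\ide\otimes\ide$ one reads off $(id\otimes\bar e)\Delta(\ide)=\ide\,\bar e(\ide)=\ide$, and likewise on the other side, so the axioms hold on $\ide$.

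Next I would apply $(id\otimes\bar e)$ to the three-term coproduct (\ref{eq:coproducta}) of a non-empty $\gamma\in H$. The term $\gamma\otimes\ide$ gives $\gamma\,\bar e(\ide)=\gamma$; the term $\ide\otimes\gamma$ gives $\ide\,\bar e(\gamma)=0$ because $\gamma\neq\ide$; and each summand contributes $\mu_{AI}\,\bar e(\gamma/\mu_{AI})$. The entire content of the proposition then reduces to showing that $\bar e(\gamma/\mu_{AI})=0$ for every term in the sum, i.e.\ that $\gamma/\mu_{AI}\neq\ide$.

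This is the one step I expect to carry genuine weight. I would argue that, since $\mu_{AI}\subsetneq\gamma$ is a \emph{proper} AI subgraph, its set of internal edges is a proper subset of the internal edges of $\gamma$; contracting exactly those edges in forming $\gamma/\mu_{AI}$ therefore leaves at least one internal edge of $\gamma$ intact, so the cograph is not a single-vertex graph and hence is not identified with $\ide$ in step~3 of the construction. Consequently $\bar e(\gamma/\mu_{AI})=0$, every summand vanishes, and $(id\otimes\bar e)\Delta(\gamma)=\gamma$. Everything here hinges on reading ``proper subgraph'' as having a strictly smaller internal-edge set, so that contraction cannot collapse $\gamma$ to the trivial graph; once that is pinned down the rest is bookkeeping.

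Finally I would run the mirror computation for $(\bar e\otimes id)$. Here the term $\gamma\otimes\ide$ is annihilated by $\bar e(\gamma)=0$, each summand is annihilated by $\bar e(\mu_{AI})=0$ since the sum runs over $\mu_{AI}\neq\ide$, and only $\ide\otimes\gamma\mapsto\bar e(\ide)\,\gamma=\gamma$ survives. This establishes both axioms on the generators, and by the homomorphism argument of the first paragraph they then hold on all of $\hopf$, which completes the proof.
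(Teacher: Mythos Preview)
Your proof is correct and follows essentially the same direct verification as the paper, which simply asserts that $m\circ(\bar e\otimes id)\circ\Delta(\gamma)=\gamma=m\circ(id\otimes\bar e)\circ\Delta(\gamma)$ for non-empty $\gamma$ and for $\ide$ without spelling out the term-by-term check. Your version is more careful in two respects the paper leaves implicit: the reduction to generators via multiplicativity of both sides, and the observation that $\gamma/\mu_{AI}\neq\ide$ for a proper $\mu_{AI}$, which is exactly what forces the summation terms to vanish under $\bar e$.
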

\begin{proof}
It is straight forward to see that for any non-empty graph $\gamma$,
\bnon
m\circ(\bar{e}\otimes id)\circ \Delta(\gamma) = \gamma = m\circ(id \otimes \bar{e})\circ \Delta(\gamma)
\enon
and for the empty graph,
\bnon
m\circ(\bar{e}\otimes id)\circ \Delta(\mathbb{I}) = \mathbb{I} = m\circ(id \otimes \bar{e})\circ \Delta(\ide) 
\enon
Thus we have counitarity.%\footnote{Notice, (formally) that we have $\bar{e}\star id = id \star \bar{e} = id$, which would not have been possible without the piecewise definition.} 
\end{proof}
\vspace{1pt}
\begin{pro}\label{coass}
The coproduct $\Delta$ is coassociative.
\end{pro}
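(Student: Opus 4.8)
The plan is to verify the identity $(id\otimes\Delta)\circ\Delta=(\Delta\otimes id)\circ\Delta$ first on a single generator $\gamma\in H$, and then extend it to all of $\hopf$ by multiplicativity. It is convenient to absorb the boundary terms $\ide\otimes\gamma$ and $\gamma\otimes\ide$ of eq.~(\ref{eq:coproducta}) into a single sum, writing $\Delta(\gamma)=\sum_{\mu}\mu\otimes\gamma/\mu$, where $\mu$ runs over all AI subgraphs of $\gamma$ including $\mu=\ide$ and $\mu=\gamma$. Applying the two iterated maps then yields the double sums
\begin{align*}
(id\otimes\Delta)\circ\Delta(\gamma)&=\sum_{\mu\subseteq\gamma}\ \sum_{\nu\subseteq\gamma/\mu}\mu\otimes\nu\otimes(\gamma/\mu)/\nu\,,\\
(\Delta\otimes id)\circ\Delta(\gamma)&=\sum_{\rho\subseteq\gamma}\ \sum_{\sigma\subseteq\rho}\sigma\otimes\rho/\sigma\otimes\gamma/\rho\,,
\end{align*}
where each inner sum ranges over the AI subgraphs of the indicated graph.

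To match the two expressions I would construct a bijection between their index sets. Given a left-hand pair $(\mu,\nu)$ with $\mu$ AI in $\gamma$ and $\nu$ AI in $\gamma/\mu$, let $\rho$ be the unique subgraph with $\mu\subseteq\rho\subseteq\gamma$ and $\rho/\mu=\nu$ (the preimage of $\nu$ under contraction of $\mu$), and set $\sigma=\mu$. Conversely, a right-hand pair $(\rho,\sigma)$ with $\rho$ AI in $\gamma$ and $\sigma$ AI in $\rho$ is sent to $\mu=\sigma$, $\nu=\rho/\sigma$. These assignments are manifestly inverse to one another at the level of subgraphs, and the three tensor factors agree because of the transitivity of graph contraction, $(\gamma/\mu)/(\rho/\mu)=\gamma/\rho$, together with $\rho/\sigma=\nu$ and $\sigma=\mu$.

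The crux — and the step I expect to be the main obstacle — is showing that this set bijection restricts to a bijection of the \emph{admissible} (AI) index sets, i.e.\ that the AI conditions on the two sides really do correspond. This is exactly where Corollary \ref{inherit} does the work. In the forward direction, \ref{inherit}(2) applied to $\mu\subseteq\rho\subseteq\gamma$ turns the hypothesis that $\nu=\rho/\mu$ is AI in $\gamma/\mu$ into the statement that $\rho$ is AI in $\gamma$; then \ref{inherit}(1), together with the hypothesis that $\mu$ is AI in $\gamma$, forces the conjunction ``$\mu$ is AI in $\rho$ and $\rho$ is AI in $\gamma$'', so in particular $\sigma=\mu$ is AI in $\rho$. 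In the reverse direction the same two equivalences run backwards: \ref{inherit}(1) promotes ``$\sigma$ AI in $\rho$ and $\rho$ AI in $\gamma$'' to ``$\mu=\sigma$ AI in $\gamma$'', and \ref{inherit}(2) converts ``$\rho$ AI in $\gamma$'' into ``$\nu=\rho/\mu$ AI in $\gamma/\mu$''. The trivial endpoints $\mu=\ide$ and $\nu=\gamma/\mu$ are handled directly by the conventions that $\ide$ and the full graph count as AI subgraphs. Hence the admissible index sets are in bijection and the corresponding summands coincide, establishing coassociativity on generators.

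Finally I would lift the result from $H$ to all of $\hopf$. Since eq.~(\ref{eq:coproductb}) defines $\Delta$ on disjoint unions as the product of the coproducts of the factors, $\Delta$ is an algebra homomorphism $\hopf\to\hopf\otimes\hopf$; consequently both $(id\otimes\Delta)\circ\Delta$ and $(\Delta\otimes id)\circ\Delta$ are algebra homomorphisms. Having shown they agree on the generating set $H$, they agree on all products and, by linearity, on all of $\hopf$, which completes the proof.
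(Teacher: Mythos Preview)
Your proof is correct and follows essentially the same route as the paper's: both compute the two iterated coproducts, match them via the correspondence $\rho\leftrightarrow\rho/\mu$ together with the contraction identity $(\gamma/\mu)/(\rho/\mu)=\gamma/\rho$, and invoke Corollary~\ref{inherit} to check that the AI conditions line up. The only differences are cosmetic: you absorb both boundary terms into the single sum over AI subgraphs (the paper keeps $\ide\otimes\Phi$ separate), you phrase the matching as an explicit bijection of index sets rather than a term-by-term comparison, and you make the extension to disjoint unions via multiplicativity of $\Delta$ explicit where the paper simply remarks that the same arguments apply componentwise.
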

\begin{proof}
We will be using the following simplification of notation for brevity, in the case of a non-empty graph $\Phi \in H$ :
\bnon
\sum_{\gamma\underset{AI}{\subseteq}\Phi}{\gamma\otimes\Phi/\gamma}\equiv\Phi\otimes\ide+{\sum_{\substack{\gamma_{AI}\subsetneq\Phi \\ \gamma_{AI}\neq \ide}}{\gamma_{AI}\otimes\Phi/\gamma_{AI}}}
\enon
Notice that for this equation to hold we require $\gamma/\gamma = \ide$. We therefore implicitely identify all single vertex graphs with the identity $\ide$, as was done in the definition of $H$. For the remainder of this proof we will assume $\subseteq$ to mean $\underset{AI}{\subseteq}$ unless otherwise specified. 
% For disjoint unions of elements in $H$ the generalisation is simple, we consider all possible disjoint unions of proper AI subgraphs of each of the elements in place of $\gamma_{AI}$. 
For a non-empty $\Phi$ we have:
\begin{eqnarray}
&&(id\otimes \Delta)\circ \Delta(\Phi)  =  (id\otimes \Delta)\circ(\ide\otimes\Phi + \displaystyle{\sum_{\gamma\subseteq\Phi}{\gamma\otimes\Phi/\gamma}}) \nonumber\\
&&\qquad = \ide\otimes\Big(\ide\otimes\Phi + \displaystyle{\sum_{\gamma\subseteq\Phi}{\gamma\otimes\Phi/\gamma}}\Big) + \sum_{\gamma\subsetneq\Phi}\gamma\otimes\Big(\ide\otimes\Phi/\gamma + \sum_{\substack{\eta\subseteq\Phi/\gamma \\ \Phi/\gamma\ne \ide}}\eta\otimes\Phi/\gamma/\eta\Big) 
%\phantom{\ide\otimes\Big(\ide\otimes\Phi + \displaystyle{\sum_{\gamma\subseteq\Phi}{\gamma\otimes\Phi/\gamma}}\Big) + \sum_{\gamma\subsetneq\Phi}\gamma\otimes\Big(\ide\otimes\Phi/\gamma + \sum_{\substack{\eta\subseteq\Phi/\gamma \\ \Phi/\gamma\ne \ide}}\eta} 
+  \Phi\otimes\ide\otimes\ide \nonumber\\
&&\qquad= \ide\otimes\ide\otimes\Phi + \displaystyle{\sum_{\gamma\subseteq\Phi}{\ide\otimes\gamma\otimes\Phi/\gamma}} + \Big(\sum_{\gamma\subsetneq\Phi}\gamma\otimes\ide\otimes\Phi/\gamma + \Phi\otimes\ide\otimes\ide\Big) \nonumber\\
& & 
%\phantom{\ide\otimes\Big(\ide\otimes\Phi + \displaystyle{\sum_{\gamma\subseteq\Phi}{\gamma\otimes\Phi/\gamma}}\Big) + \sum_{\gamma\subsetneq\Phi}\gamma\otimes\Big(\ide\otimes\Phi/\gamma + } 
\qquad \qquad + \sum_{\gamma\subsetneq\Phi}\sum_{\eta\subseteq\Phi/\gamma}\gamma\otimes\eta\otimes\Phi/\gamma/\eta \nonumber\\[5pt]
&& \qquad =  \ide\otimes\ide\otimes\Phi + \displaystyle{\sum_{\gamma\subseteq\Phi}{\ide\otimes\gamma\otimes\Phi/\gamma}} + \sum_{\gamma\subseteq\Phi}\gamma\otimes\ide\otimes\Phi/\gamma +\sum_{\gamma\subsetneq\Phi}\gamma\otimes\Phi/\gamma\otimes\ide \nonumber\\
& & 
\label{eq:coass1}
\qquad\qquad
%\phantom{\ide\otimes\Big(\ide\otimes\Phi + \displaystyle{\sum_{\gamma\subseteq\Phi}{\gamma\otimes\Phi/\gamma}}\Big) + \sum_{\gamma\subsetneq\Phi}\gamma\otimes\Big(\ide\otimes\Phi/\gamma + } 
+ \sum_{\gamma\subsetneq\Phi}\sum_{\eta\subsetneq\Phi/\gamma}\gamma\otimes\eta\otimes\Phi/\gamma/\eta
\end{eqnarray}
On the other hand we have
% \bqa
% (\Delta\otimes id)\circ \Delta(\Phi) & = & (\Delta\otimes id)\circ(\ide\otimes\Phi + \displaystyle{\sum_{\gamma\subseteq\Phi}{\gamma\otimes\Phi/\gamma}}) \\
% &=& \ide\otimes\ide\otimes\Phi + \sum_{\gamma\subseteq\Phi}(\ide\otimes\gamma + \sum_{\mu\subseteq\gamma}\mu\otimes\gamma/\mu) \otimes\Phi/\gamma\\
% &=& \ide\otimes\ide\otimes\Phi + \sum_{\gamma\subseteq\Phi}\ide\otimes\gamma\otimes\Phi/\gamma + \sum_{\gamma\subseteq\Phi}\gamma\otimes\ide\otimes\Phi/\gamma +\sum_{\gamma\subsetneq\Phi}\gamma\otimes\Phi/\gamma\otimes\ide \\
% & & \phantom{\ide\otimes\Big(\ide\otimes\Phi + \displaystyle{\sum_{\gamma\subseteq\Phi}{\gamma\otimes\Phi/\gamma}}\Big) + \sum_{\gamma\subsetneq\Phi}\gamma\otimes\Big(\ide\otimes\Phi/\gamma + } + \sum_{\gamma\subsetneq\Phi}\sum_{\mu\subsetneq\gamma}\mu\otimes\gamma/\mu\otimes\Phi/\gamma \\
% \eqa
\begin{eqnarray}
(\Delta\otimes id)\circ \Delta(\Phi) & = & (\Delta\otimes id)\circ(\ide\otimes\Phi + \displaystyle{\sum_{\gamma\subseteq\Phi}{\gamma\otimes\Phi/\gamma}}) \nonumber\\
&=& \ide\otimes\ide\otimes\Phi + \sum_{\gamma\subseteq\Phi}(\ide\otimes\gamma + \sum_{\mu\subseteq\gamma}\mu\otimes\gamma/\mu) \otimes\Phi/\gamma\nonumber\\
&=& \ide\otimes\ide\otimes\Phi + \sum_{\gamma\subseteq\Phi}\ide\otimes\gamma\otimes\Phi/\gamma + \sum_{\gamma\subseteq\Phi}\gamma\otimes\ide\otimes\Phi/\gamma \nonumber\\
& & 
\label{eq:coass2}
%\phantom{\ide\otimes\Big(\ide\otimes\Phi + \displaystyle{\sum_{\gamma\subseteq\Phi}{\gamma\otimes\Phi/\gamma}}\Big) + \sum_{\gamma\subsetneq\Phi}\gamma\otimes\Big(\ide\otimes\Phi/\gamma + } 
%\qquad
+\sum_{\gamma\subsetneq\Phi}\gamma\otimes\Phi/\gamma\otimes\ide 
+\sum_{\gamma\subsetneq\Phi}\sum_{\mu\subsetneq\gamma}\mu\otimes\gamma/\mu\otimes\Phi/\gamma 
\end{eqnarray}
The first four terms in eqs.~(\ref{eq:coass1}) and (\ref{eq:coass2}) all contain the empty graph dependent terms and are clearly equal. The fifth terms refer to the part of the expressions where no component in the double tensor product is an empty graph. Now we let $\gamma/\mu \equiv \eta$, then $\Phi/\gamma= \Phi/\mu/\eta$, since  $(\Phi/\mu)/(\gamma/\mu) = \Phi/\gamma$. The latter is a fairly standard result from graph theory, also used, e.g., in ref.~\cite{manchon2014bialgebra}. Thus the fifth sum in eq.~\eqref{eq:coass2} can be rewritten as
\bnon
\sum_{\mu}\sum_{\eta}\mu\otimes\eta\otimes\Phi/\mu/\eta \,.
\enon
This has the same structure as the fifth term in the first case, but we need to make sure that the sum runs over the same values of the dummy variables $\mu$ and $\eta$. As $\gamma$ ran over all the proper non-empty AI subgraphs, while in turn $\mu$ ran over the proper non-empty AI subgraphs of $\gamma$, by corollary \ref{inherit} every $\mu$ is a proper non-empty AI subgraph of $\Phi$, and so the sum runs over $\mu\underset{AI}{\subsetneq}\Phi$. \\

By corollary \ref{inherit}, $\gamma/\mu$ is an AI subgraph of $\Phi/\mu$ for every $\mu$. Thus by $\gamma$ running over all non-empty proper AI subgraphs of $\Phi$, $\eta = \gamma/\mu$ runs over all non-empty proper AI subgraphs of $\Phi/\mu$. So $\eta\underset{AI}{\subsetneq}\Phi/\mu$.\\

\noindent Thus the fifth terms in the two expressions are equal. For multiple connected components we can use the exact same arguments, and for the empty graph the coassociativity is trivial. Hence, the coproduct is coassociative. 
\end{proof}\vspace{8pt}

\begin{pro}
\label{pro:compatibility}
The coproduct $\Delta$ and the product $m$ are compatible.
\end{pro}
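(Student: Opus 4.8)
The plan is to establish the one remaining bialgebra axiom, namely that $\Delta$ is an algebra homomorphism from $\hopf$ to the tensor-product algebra $\hopf\otimes\hopf$, whose product is $(a\otimes b)\bullet(c\otimes d)=(a\sqcup c)\otimes(b\sqcup d)$. Concretely, compatibility of $m$ and $\Delta$ is the identity $\Delta\circ m=(m\otimes m)\circ(id\otimes\mathrm{sw}\otimes id)\circ(\Delta\otimes\Delta)$, where $\mathrm{sw}$ transposes the two inner tensor factors; alongside this one records the (immediate) unit compatibilities $\Delta(\ide)=\ide\otimes\ide$ and $\bar e(\ide)=1$, and the multiplicativity of $\bar e$. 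By $\mathbb{K}$-linearity it suffices to check $\Delta(\gamma_1\sqcup\gamma_2)=\Delta(\gamma_1)\bullet\Delta(\gamma_2)$ on basis elements, i.e.\ on disjoint unions of generators of $H$.

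Next I would reduce to the case of two single generators. Since $m$ is associative and commutative (it is just disjoint union) and the induced product $\bullet$ on $\hopf\otimes\hopf$ inherits both properties, a short induction on the number of connected factors reduces the general claim to $\Delta(\gamma_1\sqcup\gamma_2)=\Delta(\gamma_1)\bullet\Delta(\gamma_2)$ with $\gamma_1,\gamma_2\in H$. At the level of pure algebra this is exactly the multiplicative extension rule \eqref{eq:coproductb}, so compatibility holds essentially by construction; the genuine content is to verify that this extension is \emph{consistent} with the intrinsic definition \eqref{eq:coproducta}, i.e.\ that $\Delta$ is well defined on those elements of $H$ that happen to be disconnected and hence also decompose as products of smaller generators.

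The single ingredient requiring graph theory is therefore a factorisation lemma. I would show that the AI subgraphs of a disjoint union $\gamma_1\sqcup\gamma_2$ are precisely the disjoint unions $\mu_1\sqcup\mu_2$ with $\mu_i$ an AI subgraph of $\gamma_i$, and that contraction distributes over disjoint union, $(\gamma_1\sqcup\gamma_2)/(\mu_1\sqcup\mu_2)=(\gamma_1/\mu_1)\sqcup(\gamma_2/\mu_2)$. Passing through Theorem~\ref{aitohm} to the MHM characterisation makes the first statement transparent: the defining conditions (containing all heavy lines, having one component that carries all the hard legs, and the loop-number drop under single-edge deletion) are all verified connected-component by connected-component, so an MHM subgraph of $\gamma_1\sqcup\gamma_2$ splits uniquely into MHM pieces of the two factors; the contraction identity is the standard fact that edge deletion and contraction respect connected components. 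Substituting these into \eqref{eq:coproducta} written for $\gamma_1\sqcup\gamma_2$ and splitting the sum over $\mu_{AI}$ into a double sum over pairs $(\mu_1,\mu_2)$ factorises it into the product of the two single coproducts, which — after bookkeeping the $\ide$ terms so that the $\ide\otimes(\cdots)$ and $(\cdots)\otimes\ide$ contributions are correctly reproduced — is exactly $\Delta(\gamma_1)\bullet\Delta(\gamma_2)$.

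I expect the main obstacle to be precisely this component-wise factorisation of AI subgraphs together with the distributivity of contraction, because the AI/MHM definition is recursive and its interaction with disjoint unions must be handled carefully; the inheritance statement of Corollary~\ref{inherit} and the characterisation of Theorem~\ref{aitohm} are the tools I would lean on to keep the recursion under control. Once compatibility is established, $\hopf$ is a bialgebra by Propositions~\ref{pro:counitarity} and~\ref{coass}; since it is connected and graded by loop number (and by edge number), the antipode $S$ exists and is fixed recursively as in \eqref{eq:antipode1}, promoting $\hopf$ to a Hopf algebra.
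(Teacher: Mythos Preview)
Your core observation---that compatibility is immediate from the multiplicative definition~\eqref{eq:coproductb}---is exactly the paper's proof. The paper simply writes both sides in Sweedler notation and observes that $\Delta(\Phi_1\sqcup\Phi_2)=\sum(\Phi_1'\sqcup\Phi_2')\otimes(\Phi_1''\sqcup\Phi_2'')$ holds by definition, which equals $m_{\hopf\otimes\hopf}\circ\Delta_{\hopf\otimes\hopf}(\Phi_1\otimes\Phi_2)$; no graph theory is invoked at all.

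Where you go further than the paper is in proposing to verify that~\eqref{eq:coproducta} and~\eqref{eq:coproductb} agree on generators of $H$ that are themselves disconnected (which, as the paper notes in its heavy-mass example, can occur). The paper's proof simply takes~\eqref{eq:coproductb} as the definition and does not address this consistency; your factorisation lemma (AI subgraphs of $\gamma_1\sqcup\gamma_2$ split as $\mu_1\sqcup\mu_2$, and contraction distributes over $\sqcup$) is the right tool to fill that gap, and your plan to extract it from the MHM characterisation of Theorem~\ref{aitohm} is sound. So your approach is the same as the paper's for the proposition as stated, but more careful about the well-definedness of $\Delta$ underneath it.
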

\begin{proof}
For our purposes here we write the coproduct in Sweedler's notation
\bnon
\Delta(\gamma) = \sum \gamma'\otimes\gamma''\,,
\enon
where the summation is understood to be over the relevant terms in the coproduct.
Let $\Phi_1$ and $\Phi_2$ be any two graphs in $\hopf$. Then,
\bnon
\Delta\circ m(\Phi_1\otimes\Phi_2) = \Delta(\Phi_1\sqcup\Phi_2) = \sum (\Phi_1'\sqcup\Phi_2')\otimes(\Phi_1''\sqcup\Phi_2'')\,.
\enon
On the other hand, with $m_{\hopf\otimes\hopf}$ given by : 
\bnon
m_{\hopf\otimes\hopf} : \hopf\otimes\hopf\otimes\hopf\otimes\hopf \to \hopf\otimes\hopf \,,%\\[2pt]
\enon
\bnon
\gamma_1\otimes\gamma_2\otimes\gamma_3\otimes\gamma_4 \,,\longmapsto (\gamma_1\sqcup\gamma_2)\otimes(\gamma_3\sqcup\gamma_4)\,,
\enon
and $\Delta_{\hopf\otimes\hopf}$ given by
\bnon
\Delta_{\hopf\otimes\hopf}: \hopf\otimes\hopf \to \hopf\otimes\hopf\otimes\hopf\otimes\hopf \,, %\\[2pt]
\enon
\bnon
\gamma_1\otimes\gamma_2 \longmapsto \sum \gamma_1'\otimes\gamma_2'\otimes\gamma_1''\otimes\gamma_2''\,.
\enon
For compatibility we need,
\bnon
m_{\hopf\otimes\hopf}\circ\Delta_{\hopf\otimes\hopf} = \Delta\circ m \,.
\enon
And indeed we have,
\bqa
m_{\hopf\otimes\hopf}\circ\Delta_{\hopf\otimes\hopf}(\Phi_1\otimes\Phi_2) &=& m_{\hopf\otimes\hopf}\Big(\sum \Phi_1'\otimes\Phi_2'\otimes\Phi_1''\otimes\Phi_2''\Big) \\[2pt]
&=& \sum (\Phi_1'\sqcup\Phi_2')\otimes(\Phi_1''\sqcup\Phi_2'') \\[4pt]
&=& \Delta\circ m(\Phi_1\otimes\Phi_2)\,.
\eqa
\end{proof}
\begin{pro}
$\hopf$ is a Hopf algebra.
\end{pro}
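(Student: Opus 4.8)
The three preceding propositions establish that $(\hopf,u,m,\bar e,\Delta)$ is a bialgebra, so to finish it suffices to produce an antipode $S:\hopf\to\hopf$ obeying
\bnon
m\circ(id\otimes S)\circ\Delta=u\circ\bar e=m\circ(S\otimes id)\circ\Delta\,.
\enon
The plan is to proceed exactly as in the Connes--Kreimer case of section~\ref{sec:conneskreimer}: exhibit a grading for which $\hopf$ is connected, and then invoke the standard fact that a connected graded bialgebra carries a unique antipode, given recursively by a formula of the type~(\ref{eq:antipode1}).

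The natural grading to use is the number of internal edges $N(\gamma)$, extended additively by $N(\gamma_1\sqcup\gamma_2)=N(\gamma_1)+N(\gamma_2)$. Then $m$ is graded trivially, and $\Delta$ is graded because contracting the internal edges of $\mu_{AI}$ leaves exactly the remaining internal edges of $\gamma$, i.e.\ $N(\mu_{AI})+N(\gamma/\mu_{AI})=N(\gamma)$, so that each term of~(\ref{eq:coproducta}) lands in $\hopf_i\otimes\hopf_{N(\gamma)-i}$. Crucially this grading is \emph{connected}: a connected graph with no internal edge is a single vertex, which has been identified with $\ide$, so $\hopf_0=\mathbb{K}\ide$. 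Moreover, since a proper non-empty AI subgraph $\mu_{AI}\subsetneq\gamma$ has at least one internal edge and strictly fewer than $\gamma$, the reduced coproduct $\widetilde\Delta(\gamma)=\Delta(\gamma)-\gamma\otimes\ide-\ide\otimes\gamma$ strictly lowers $N$ in \emph{both} tensor factors.

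With connectedness in hand I would define $S(\ide)=\ide$ and, in complete analogy with~(\ref{eq:antipode1}),
\bnon
S(\gamma)=-\gamma-\sum_{\substack{\mu_{AI}\subsetneq\gamma\\ \mu_{AI}\neq\ide}}S(\mu_{AI})\cdot(\gamma/\mu_{AI})\,,
\enon
extended as an algebra homomorphism to disjoint unions (consistent since $m$ is commutative). The recursion is well founded because each $\mu_{AI}$ carries strictly fewer internal edges than $\gamma$, so it bottoms out at the primitive MHM subgraphs, for which $\widetilde\Delta$ vanishes and $S(\gamma)=-\gamma$. The two antipode identities then follow by induction on $N$, using only coassociativity (Proposition~\ref{coass}) and counitarity (Proposition~\ref{pro:counitarity}); the equality with the left-sided formula of the type~(\ref{eq:antipode2}) is the usual consequence of coassociativity.

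The step I expect to be the main obstacle is pinning down the correct notion of connectedness. The loop number, which governs termination of the construction of $H$, is \emph{not} a connected grading here: a tree-like hard subgraph (such as the first AI subgraph of the triangle in section~\ref{Motic}) has zero loops yet is a genuine, contributing generator, so its degree-zero component strictly contains $\mathbb{K}\ide$. This is precisely why I would grade by internal edges instead. Verifying that the edge grading is connected and that the reduced coproduct strictly lowers it then reduces to the facts that proper non-empty AI subgraphs are motic hard mass-momentum spanning with strictly fewer edges (Theorem~\ref{aitohm}) and that the tower of AI subgraphs is compatible with contraction (Corollary~\ref{inherit}); once these are invoked the inductive verification of the antipode axioms is routine.
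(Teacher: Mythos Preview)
Your argument is correct and follows essentially the same route as the paper: combine the preceding propositions into a bialgebra, then define $S$ recursively by the standard formula and use gradedness/connectedness to guarantee termination. Your treatment is in fact more careful than the paper's, which simply asserts ``As $\hopf$ is graded, the recursion is bound to terminate at $S(\ide)=\ide$'' without specifying the grading; your observation that the loop grading fails to be connected here (because tree-level hard subgraphs like the single propagator sit in degree zero) and that one must use the internal-edge grading instead is a genuine refinement of the paper's sketch.
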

\begin{proof}
We note that propositions \ref{pro:counitarity} and \ref{coass} together imply that $\hopf$ is a co-algebra. Together with proposition \ref{pro:compatibility} it then follows that $\hopf$ is a bialgebra.

With the antipode $S: \hopf \to\hopf$, defined recursively via 
\bnon
S = u\circ\bar{e}-m\circ(S\otimes(id-\bar{e}))\circ\Delta = u\circ\bar{e}-m\circ((id-\bar{e})\otimes S)\circ\Delta\,,
\enon
which due to coassociativity satisfies for all elements of $\hopf$,
\bnon
m\circ(id\otimes S)\circ\Delta = m\circ(S\otimes id)\circ\Delta = u\circ\bar{e}\,,
\enon
$\hopf$ is then promoted to a Hopf algebra. As $\hopf$ is graded, the recursion is bound to terminate at $S(\ide) = \ide$ and thus the above statements are well-defined.
\end{proof}

\subsection{Verifying the propositions for some basic examples}\label{ex}
In this section we will give a few examples to demonstrate  coassociativity and the validity of the antipode we defined. We begin with a simple example, let us take $\hopf$ where:
\begin{equation*}
 \ga =    \begin{minipage}{2.0cm}
\begin{tikzpicture}
\draw [thick] (0,0.866) -- (-0.5,0);
\draw [ultra thick, red, dotted] (0,0.866) -- (0,1.216);
\draw [thick] (0.5,0) -- (0,0.866);
\draw [thick] (0.5,0) -- (0.75,-0.25);
\draw [thick] (-0.5,0) -- (-0.75,-0.25);
\draw [thick] (-0.5,0) -- (0.5,0);
%\draw [black, ultra thick] (-0.3,0.4) .. controls (0,0.6) .. (0.3,0.4);
%\draw [black, ultra thick] (-0.3,0.4) .. controls (0,0.2) .. (0.3,0.4);
%\filldraw [black] (0,0.5) circle (2pt);
\filldraw (-0.5,0) circle (2pt);
%\filldraw [black] (0,-0.5) circle (2pt);
\filldraw (0.5,0) circle (2pt);
\filldraw (0, 0.866) circle (2pt);
\draw (0,1.416) node{${\scriptstyle 3}$};
\draw (0.9,-0.4) node{${\scriptstyle 2}$};
\draw (-0.9,-0.4) node{${\scriptstyle 1}$};
%\filldraw [black] (0.3,0.4) circle (2pt);
%\filldraw [black] (-0.3,0.4) circle (2pt);
%\filldraw [black] (0,0) circle (2pt);
\end{tikzpicture}
\end{minipage}, \qquad P = \{3\}, Q = \{1,2\}
\end{equation*}
For this section  we use the red dotted lines to denote the soft legs. Now, the coproduct acted on $\ga$ gives 
\bnon
\Delta(\ga) = \ide\otimes\begin{minipage}{1.6cm}
\begin{tikzpicture}
\draw [thick] (0,0.866) -- (-0.5,0);
\draw [ultra thick, red, dotted] (0,0.866) -- (0,1.216);
\draw [thick] (0.5,0) -- (0,0.866);
\draw [thick] (0.5,0) -- (0.75,-0.25);
\draw [thick] (-0.5,0) -- (-0.75,-0.25);
\draw [thick] (-0.5,0) -- (0.5,0);
%\draw [black, ultra thick] (-0.3,0.4) .. controls (0,0.6) .. (0.3,0.4);
%\draw [black, ultra thick] (-0.3,0.4) .. controls (0,0.2) .. (0.3,0.4);
%\filldraw [black] (0,0.5) circle (2pt);
\filldraw (-0.5,0) circle (2pt);
%\filldraw [black] (0,-0.5) circle (2pt);
\filldraw (0.5,0) circle (2pt);
\filldraw (0, 0.866) circle (2pt);
%\draw (0,1.416) node{${\scriptstyle 3}$};
%\draw (0.9,-0.4) node{${\scriptstyle 2}$};
%\draw (-0.9,-0.4) node{${\scriptstyle 1}$};
%\filldraw [black] (0.3,0.4) circle (2pt);
%\filldraw [black] (-0.3,0.4) circle (2pt);
%\filldraw [black] (0,0) circle (2pt);
\end{tikzpicture}
\end{minipage} + 
\begin{minipage}{1.6cm}
\begin{tikzpicture}
\draw [thick] (0,0.866) -- (-0.5,0);
\draw [ultra thick, red, dotted] (0,0.866) -- (0,1.216);
\draw [thick] (0.5,0) -- (0,0.866);
\draw [thick] (0.5,0) -- (0.75,-0.25);
\draw [thick] (-0.5,0) -- (-0.75,-0.25);
\draw [thick] (-0.5,0) -- (0.5,0);
%\draw [black, ultra thick] (-0.3,0.4) .. controls (0,0.6) .. (0.3,0.4);
%\draw [black, ultra thick] (-0.3,0.4) .. controls (0,0.2) .. (0.3,0.4);
%\filldraw [black] (0,0.5) circle (2pt);
\filldraw (-0.5,0) circle (2pt);
%\filldraw [black] (0,-0.5) circle (2pt);
\filldraw (0.5,0) circle (2pt);
\filldraw (0, 0.866) circle (2pt);
%\draw (0,1.416) node{${\scriptstyle 3}$};
%\draw (0.9,-0.4) node{${\scriptstyle 2}$};
%\draw (-0.9,-0.4) node{${\scriptstyle 1}$};
%\filldraw [black] (0.3,0.4) circle (2pt);
%\filldraw [black] (-0.3,0.4) circle (2pt);
%\filldraw [black] (0,0) circle (2pt);
\end{tikzpicture}
\end{minipage}\otimes\ide \:\: + \,
\begin{minipage}{2.0cm}
\begin{tikzpicture}
%\draw[ ultra thick, -] (0,0.5) arc (90:450:0.5);
\draw (0,0) -- (-1,0);
\draw (0,0) -- (0.35,-0.35);
\draw (-1,0) -- (-1.35,-0.35);
\draw [ultra thick, red, dotted](-1,0) -- (-1.35,0.35);
\draw [ultra thick, red, dotted](0,0) -- (0.35,0.35);
%\draw [black, ultra thick] (0,0.5) -- (0,-0.5);
%\draw [black, ultra thick] (-0.3,0.4) .. controls (0,0.6) .. (0.3,0.4);
%\draw [black, ultra thick] (-0.3,0.4) .. controls (0,0.2) .. (0.3,0.4);
%\filldraw [ultra thick, red, dotted] (0,0.5) circle (2pt);
\filldraw (-1,0) circle (2pt);
%\filldraw [black] (0,-0.5) circle (2pt);
\filldraw (0,0) circle (2pt);
%\filldraw [black] (0.3,0.4) circle (2pt);
%\filldraw [black] (-0.3,0.4) circle (2pt);
%\filldraw [black] (0,0) circle (2pt);
\end{tikzpicture}
\end{minipage}\otimes
\begin{minipage}{2.0cm}
\begin{tikzpicture}
\draw[ thick, -] (0,0.5) arc (90:450:0.5);
\draw [ultra thick, red, dotted] (-1,0) -- (-0.5,0);
\draw (0.5,0) -- (0.85,0.35);
\draw (0.5,0) -- (0.85,-0.35);
%\draw [black, ultra thick] (0,0.5) -- (0,-0.5);
%\draw [black, ultra thick] (-0.3,0.4) .. controls (0,0.6) .. (0.3,0.4);
%\draw [black, ultra thick] (-0.3,0.4) .. controls (0,0.2) .. (0.3,0.4);
%\filldraw [ultra thick, red, dotted] (0,0.5) circle (2pt);
\filldraw (-0.5,0) circle (2pt);
%\filldraw [black] (0,-0.5) circle (2pt);
\filldraw (0.5,0) circle (2pt);
%\filldraw [black] (0.3,0.4) circle (2pt);
%\filldraw [black] (-0.3,0.4) circle (2pt);
%\filldraw [black] (0,0) circle (2pt);
\end{tikzpicture}
\end{minipage}\;.
\enon
Where we did not consider any terms with tadpole graphs as they just add a zero contribution to the EBS expression. This is possible because if we remove all the tadpole graphs in $\hopf$, the resulting vector space is still a Hopf algebra, because in the coassociativity, compatibility and antipode tests the terms with tadpoles get removed from both sides of the relevant equation. So in this section we will not be including tadpole terms for the sake of brevity.\\[5pt]
Let us check if coassociativity holds for $\ga$:
\bqa
&&(id\otimes\Delta)\circ\Delta(\ga) = \ide\otimes\left(\ide\otimes\tria + \tria\otimes\ide + \stra\otimes\triastra\right) \\[3pt] 
&&\qquad + \stra\otimes\left(\ide\otimes\triastra + \triastra\otimes\ide\right) + \tria\otimes \ide\otimes\ide \\[12pt]
&&\qquad= \ide\otimes\ide\otimes\tria + \ide\otimes\tria\otimes\ide + \tria\otimes\ide\otimes\ide \\[3pt]
&&\qquad  + \ide\otimes\stra\otimes\triastra + \stra\otimes\ide\otimes\triastra \\[2pt]
&&\qquad +  \stra\otimes\triastra\otimes\ide
\eqa
\bqa
&&(\Delta\otimes id)\circ\Delta(\ga)  =  \left(\ide\otimes\tria + \tria\otimes\ide + \stra\otimes\triastra\right)\otimes\ide \\[2pt]
&&\qquad + \left(\ide\otimes\stra + \stra\otimes\ide\right)\otimes\triastra  + \ide\otimes\ide\otimes\tria \\[12pt]
&&\qquad =  \ide\otimes\ide\otimes\tria +\: \ide\otimes\tria\otimes\ide \:+ \tria\otimes\ide\otimes\ide \\[3pt]
&&\qquad  +\: \ide\otimes\stra\otimes\triastra \:+\: \stra\otimes\ide\otimes\triastra \\[2pt]
&&\qquad +\:  \stra\otimes\triastra\otimes\ide
\eqa
which are equal, and thus coassociativity holds. Let us check the left antipode:
\bqa
S \: = \;-m\circ(S\otimes(id-\bar{e}))\circ\Delta(\ga) &=& -\tria - S\left(\;\stra\right)\triastra\\[2pt]
&=& -\tria \:+\:\stra\triastra 
\eqa
while  the right antipode
\bqa
S \: = \;-m\circ((id-\bar{e})\otimes S)\circ\Delta(\ga) &=& -\tria - \stra S\left(\;\triastra\right)\\[2pt]
&=& -\tria \:+\:\stra\triastra \; .
\eqa
Thus confirming the equivalence of the two definitions of $S$.\\

We note that $S(\ga)$ also requires coassociativity to hold for $\ga$. Since the definition of $H_\ga$ incorporates the antipode the antipode test for the parent graph also confirms compatibility between coassociativity and antipode for all graphs in the generator set $H_\ga$. Therefore, for the next few examples we only show the antipode test for the parent graphs in different $\hopf$s.\\[10pt]
For the next example let us consider
\bnon
\ga = \logthn, \qquad P = \{2,4\}, Q = \{1,3\}
\enon
We have 
\bnon
\Delta(\ga) = \ide\;\otimes\logth + \logth\otimes\ide \: +\: \logthup\otimes\;\coup + \logthdown\otimes\;\codown \; .
\enon
Left Antipode:
\bqa
S(\ga) &=& -\logth -S\left(\logthup\right)\,\coup - S\left(\logthdown\right)\,\codown \\[5pt]
&=& -\logth + \logthup\coup + \logthdown\codown \; .
\eqa
Right antipode
\bqa
S(\ga) &=& -\logth -\logthup\, S\left(\,\coup\right) - \logthdown\, S\left(\,\codown\right) \\[5pt]
&=& -\;\logth + \logthup\:\coup + \logthdown\:\codown \; .
\eqa
which are again equal.

\noindent Let us now consider 
\bnon
\ga = \dtrian, \qquad P=\{1\}, Q=\{2,3\}
\enon
The coproduct is given by:
\bnu \label{coproduct:dtriangle}
\Delta(\ga) = \ide\otimes\dtria + \dtria\otimes\ide \; + \; \dstra\otimes\;\codstra + \dtstra\otimes\;\codtstra \; .
\enu
The antipode test yields:
\bqa
S(\ga) &=& -\dtria -S\left(\dstra\:\,\right)\codstra - S\left(\dtstra\:\,\right)\codtstra \\[7pt]
&=& -\dtria -\left(-\dstra\:\,\right)\codstra - \left(-\dtstra - S\left(\dstra\:\,\right)\codstras \right)\codtstra\\[7pt]
&=& -\dtria + \dstra\:\codstra + \dtstra\:\codtstra -\dstra\:\codstras\codtstra \\[2pt]
\eqa

\bqa
S(\ga) &=& -\dtria -\dstra S\left(\:\:\codstra\right) -\dtstra S\left(\:\:\codtstra\right) \\[7pt]
&=& -\dtria - \dstra\left(-\;\codstra-\codstras S\left(\:\:\codtstra\right)\right) -\dtstra \left(-\;\codtstra\right)\\[7pt]
&=& -\dtria + \dstra\:\codstra + \dtstra\:\codtstra -\dstra\:\codstras\codtstra\\
\eqa
Again the left and right antipodes are equal. \\[5pt]
Till now we only discussed examples with only large momenta and no large masses. In the following we consider a case where we have both:
\bnon
\ga = \logm\,,
\enon
where the thick internal lines carry large mass. We thus have 
\bnon
\Delta(\ga) = \ide\,\otimes\logm + \logm\otimes\,\ide \; + \logmup\otimes\logmcoup + \logmdown\otimes\logmcodown\; .
\enon
It is worth noticing here that in the case of heavy lines we can also have disconnected AI subgraphs. Now, we have the left antipode as:
\bqa
S(\ga) &=& -\logm -S\left(\logmup\right)\,\logmcoup - S\left(\logmdown\right)\,\logmcodown \\[5pt]
&=& -\logm + \logmup\logmcoup + \logmdown\logmcodown
\eqa
while the right antipode is:
\bqa
S(\ga) &=& -\logm -\logmup\, S\left(\,\logmcoup\right) - \logmdown\, S\left(\,\logmcodown\right) \\[5pt]
&=& -\;\logm + \logmup\:\logmcoup + \logmdown\:\logmcodown\;.
\eqa
The right and left antipodes are again equal, as expected.

\section{Hopf algebraic formulation of the expansion by subgraph}\label{hopfformulation}

In this section we shall formulate the expansion by subgraph in terms of the asymptotic Hopf algebra defined in section \ref{hopfdef}. We will start the discussion by focusing on the leading/logarithmic term in the expansion only, in this case no higher order derivatives are required for the leading approximation in $\lambda\to 0$. From the Hopf algebraic perspective this is  simpler since the integrals over subgraphs and contracted graphs factorize exactly. 

The Hopf structure of higher-order expansions is discussed  in section \ref{higherorder} and builds on the factorisation of the integrand. 

\subsection{The logarithmic case}
% In the previous section the  expression could not be directly written as a simple convolution product, and we had to obtain it from the Hopf algebraic definition of the remainder. 

% In the absence of monomial insertions one can directly write equation \eqref{ebrnew} as a convolution product. This happens in a special case that we will call the \textit{completely logarithmic} case. This case holds when:
The logarithmic case applies to the leading term of the expansion for a special class of diagrams, such that the following criteria hold:
\begin{itemize}
    \item the parent graph $\ga$ and all its AI subgraphs $\gai$ have logarithmic SDD,
    \item the order of expansion is $a=0$.
\end{itemize}
Diagrams which are MHM (w.r.t.\ a given set of hard scales), and satisfy the first criterion, will be referred to as \emph{subgraph log-divergent} (SLD).

Following eq.\ (\ref{ebrnew}), given an SLD diagram $\Gamma$, the leading term in the expansion by subgraph is given by 
\bnu
\label{eq:LSDexp}
    \widetilde{T}^0(\ga) = \sum_{\gai\subseteq\ga}T^{0}(\gai)\,\widetilde{T}^0(\ga/\gai)\,.
\enu
It will now turn out convenient to introduce another Taylor-like operator $\overline{T}$ which is defined as follows:
\begin{equation}
\label{eq:Tbar}
    \overline{T}^{(n)}(\gamma)=
\begin{cases}
    \quad\ide  & \text{if $\gamma=\ide$ }, \\
    -T^{(n)}(\gamma) &  \text{if $\gamma=\gamma_{AI}$ or $\gamma=\Gamma/\gamma_{AI}$} \,.
\end{cases}  
\end{equation}
These are the cases which are actually required in our construction\footnote{More generally though one could consider acting $\overline{T}$ on products of AI subgraphs. For such cases, to ensure that $\overline{T}$ is a homomorphism one would require that 
$\overline{T}^{(n)}(\gamma)=(-1)^C T^{(n)}(\gamma)$ where $C$ is the number of elements in $H$, the set of generators of Hopf algebra, that $\gamma$ is a disjoint union of.}. We now give a theorem which reformulates eq.\ (\ref{eq:LSDexp}) in a Hopf algebraic language.
\begin{theo}
\label{thm:logAHopfalgebra}
The leading term in the expansion by subgraph for an SLD diagram is determined by
\bnu
m_{A}\circ(\overline{T}^{0}\otimes \widetilde{T}^0)\circ\; \Delta (\ga) = 0
\enu
\end{theo}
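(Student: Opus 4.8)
The plan is to recognise that the claimed identity is simply a Hopf-algebraic repackaging of the expansion-by-subgraph formula eq.~(\ref{eq:LSDexp}), and to establish it by expanding the coproduct eq.~(\ref{eq:coproducta}) explicitly and applying the two maps $\overline{T}^0$ and $\widetilde{T}^0$ term by term.

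First I would write out the coproduct of the SLD parent graph, separating the two boundary terms from the bulk sum,
\[
\Delta(\ga) = \ide\otimes\ga + \ga\otimes\ide + \sum_{\substack{\mu_{AI}\subsetneq\ga \\ \mu_{AI}\neq\ide}}\mu_{AI}\otimes\ga/\mu_{AI}\,.
\]
I would then apply $\overline{T}^0\otimes\widetilde{T}^0$ and follow with the multiplication $m_A$ of $A$, which sends a simple tensor $x\otimes y$ to the ordinary product $\overline{T}^0(x)\,\widetilde{T}^0(y)$. By eq.~(\ref{eq:Tbar}) the factor $\ide\otimes\ga$ contributes $\overline{T}^0(\ide)\,\widetilde{T}^0(\ga)=\widetilde{T}^0(\ga)$, since $\overline{T}^0(\ide)$ is the multiplicative unit of $A$. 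On every remaining term the left tensor factor is an AI subgraph --- the full graph $\ga$ in the case of $\ga\otimes\ide$, a proper one in the bulk sum --- so $\overline{T}^0$ contributes a factor $-T^0$. Using $\ga/\ga=\ide$ together with $\widetilde{T}^0(\ide)=1$, the $\ga\otimes\ide$ term and the bulk sum combine into $-\sum_{\gai\subseteq\ga}T^0(\gai)\,\widetilde{T}^0(\ga/\gai)$, where the sum now ranges over all non-empty AI subgraphs of $\ga$ (the full graph included), which is precisely the range appearing in eq.~(\ref{eq:LSDexp}).

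At this point the total equals $\widetilde{T}^0(\ga)-\sum_{\gai\subseteq\ga}T^0(\gai)\,\widetilde{T}^0(\ga/\gai)$; invoking eq.~(\ref{eq:LSDexp}) to replace the sum by $\widetilde{T}^0(\ga)$ then gives $0$. The one analytic input to check is that $m_A$ genuinely reproduces the product on the right-hand side of eq.~(\ref{eq:LSDexp}): in the logarithmic case this is automatic because $T^0(\gai)$ is a pure number --- the leading, $\lambda$-independent Taylor coefficient of a logarithmically divergent subintegral --- so the loop integration over the cograph factorises exactly and the insertion product degenerates into ordinary multiplication in $A$.

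I expect the \emph{main obstacle} to be bookkeeping rather than analysis, namely aligning the Hopf-algebraic conventions of section~\ref{hopfdef} with the physical expansion formula. One must verify that the empty graph is peeled off correctly into the single $\ide\otimes\ga$ term (so that $\gai=\ide$ does \emph{not} appear in the collapsed sum), that the full graph $\ga$ does count as its own AI subgraph so that $\overline{T}^0(\ga)=-T^0(\ga)$, and that $\overline{T}^0(\ide)$ and $\widetilde{T}^0(\ide)$ both act as the unit of $A$. Any slip in these boundary assignments --- for instance an extra $\gai=\ide$ contribution --- would destroy the exact cancellation, so the crux is ensuring that the summation range generated by the coproduct coincides term-for-term with that of eq.~(\ref{eq:LSDexp}).
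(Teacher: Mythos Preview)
Your proposal is correct and follows essentially the same approach as the paper's proof: expand the coproduct, apply $\overline{T}^{0}\otimes\widetilde{T}^{0}$ using the defining cases of eq.~(\ref{eq:Tbar}), and recognise the result as $\widetilde{T}^{0}(\ga)-\sum_{\gai\subseteq\ga}T^{0}(\gai)\,\widetilde{T}^{0}(\ga/\gai)$, which vanishes by eq.~(\ref{eq:LSDexp}). The only cosmetic difference is that the paper absorbs the $\ga\otimes\ide$ boundary term into the sum from the outset (writing $\Delta(\ga)=\ide\otimes\ga+\sum_{\gai\subseteq\ga}\gai\otimes\ga/\gai$), whereas you keep it separate and then recombine; your additional remarks on why $m_{A}$ reproduces the ordinary product in the logarithmic case and on the boundary bookkeeping are accurate and in fact spell out points the paper leaves implicit.
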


\begin{proof}
The proof is now straight forward:
\begin{align*}
m_{A}\circ(\overline{T}^{0} & \otimes \widetilde{T}^0)\circ\; \Delta (\ga) \\
&= m_{A}\circ(\overline{T}^{0}\otimes \widetilde{T}^0)\circ
\Big( \ide \otimes \Gamma + \sum_{\gai\subseteq\Gamma}  \gai\otimes \Gamma/\gai    \Big)\\
&= m_{A}\circ \Big( \overline{T}^{0}(\ide) \otimes \widetilde{T}^0(\Gamma) + \sum_{\gai\subseteq\Gamma}   \overline{T}^{0}(\gai)\otimes \widetilde{T}^0(\Gamma/\gai)    \Big)\\
&=\widetilde{T}^0(\ga) - \sum_{\gai\subseteq\ga}T^{0}(\gai)\,\widetilde{T}^0(\ga/\gai)\\
&=0\,
\end{align*}
where we used the definition of the coproduct, eq.  (\ref{eq:coproducta}), to get to the second line, and eq. (\ref{eq:LSDexp}) to get to the last line. Note that here $m_A$ is the product operation defined on the algebra of the integrals, $A$, defined in section \ref{bg}; see also the discussion below eq.\ (\ref{eq:convproduct}).
\end{proof}

% \Franz{do we want to refer to the background section for:
% where we use $m_A$ because $T^0$ and $\widetilde{T}^0$ map $\hopf$ to $A$, the algebra of integrated values. }

Theorem \ref{thm:logAHopfalgebra} is one of the central results of this work - as it neatly expresses the expansion by subgraph in terms of the product and coproduct of the underlying Hopf algebra. However, in complete analogy to the case of renormalisation Hopf algebras, this result can also be written more compactly using the convolution product defined in eq.\ (\ref{eq:convproduct}), here now in the context of the asymptotic Hopf algebra, as follows  
\bnu
\label{loga}
    \overline{T}^0\star\widetilde{T}^0\;(\Gamma) = 0\,,
\enu
where $\Gamma$ is again SLD. Let us now consider some examples. We start with the one-loop triangle:
\begin{align*}
&\overline{T}^0\star\widetilde{T}^0\;\Big(\tria\Big) =m_{A}\circ(\overline{T}^{0}\otimes \widetilde{T}^0)\circ\; \Delta \Big(\tria\Big)\\[3pt]
& = m_{A}\circ(\overline{T}^{0}\otimes \widetilde{T}^0)\circ\left(\ide\otimes\tria + \tria\otimes\ide + \stra\otimes\triastra\right) \\[3pt]     
& = m_{A}\circ\Bigg(\overline{T}^{0}(\ide)\otimes\widetilde{T}^0\Big(\tria\Big) + \overline{T}^{0}\Big(\tria\Big)\otimes\widetilde{T}^0(\ide)  
% \[3pt]
% &\qquad \qquad
+\overline{T}^{0}\Big(\stra\Big)\otimes\widetilde{T}^0\Big(\triastra\Big)\Bigg) \\[3pt]
& = m_{A}\circ\left(1\otimes\widetilde{T}^0\Big(\tria \Big)  - \phi\Big(\triazero\Big) \otimes 1  -\phi\Big(\strazero\Big)\otimes\phi\Big(\triastra\Big)\right) \\[3pt]
& = m_{A}\circ\left(1\otimes\Big(2-\log(z\bar z)  \Big)  - \Big(-\frac{1}{\varepsilon} \Big) \otimes 1  -1\otimes  \left(\frac{1}{\varepsilon}+2  - \log(z\bar z)\right)\right)+\mathcal{O}(\eps)\\
& =0
\end{align*}
Note that we used $\overline{T}^{0}(\ide)=1$ in the third line.
Performing the product and rearranging in the second last line we also find again the familiar expression of eq. \eqref{eq:trionelooexp} for the leading term in the momentum expansion of the one-loop triangle,
\begin{equation*}
\widetilde{T}^0\left(\tria\right) =\phi\left(\triazero\right)   + \phi\left(\strazero\right)\phi\left(\triastra\right)\,.\\[3pt]
\end{equation*}
We continue with a two-loop example:
\bnon
(\overline{T}^0\star\widetilde{T}^0)\;\left(\dtria\right)=m_{A}\circ(\overline{T}^{0}\otimes \widetilde{T}^0)\circ\; \Delta \left(\dtria\right)
\enon
The coproduct was given in eq.\ \eqref{coproduct:dtriangle}. We thus obtain
\begin{align*}
(\overline{T}^{0}\otimes \widetilde{T}^0)\circ\; \Delta(\ga) &= \overline{T}^{0}(\ide)\otimes\widetilde{T}^0\Big(\dtria\Big) 
+ \overline{T}^{0}\Big(\dtria\Big)\otimes\widetilde{T}^0(\ide) \; \\
&+ \; \overline{T}^{0}\Big(\dstra\Big)\otimes\;\widetilde{T}^0\Big(\codstra \Big)
+(T^{0})\Big(\dtstra\Big)\otimes\;\widetilde{T}^0\Big(\codtstra\Big)\\
&= 1\otimes\Big(6-3\log(z\bar z) + \frac{1}{2}\log^2(z\bar z) \Big) -\Big( \frac{1}{2}-\frac{1}{2\eps}+\frac{1}{2\eps^2}\Big)\otimes 1\\
&-1\otimes \Big(\frac{19}{2}-5\log(z\bar z) + \log^2(z\bar z) 
+\frac{1}{\eps}\big(\frac{5}{2}-\log(z\bar z)\big) +\frac{1}{2\eps^2} \Big) \; \\
%&+ \; \overline{T}^{0}\Big(\dstra\Big)\otimes\;\widetilde{T}^0\Big(\codstra \Big)
&- \Big(-\frac{1}{\eps}\Big)\otimes\;
\Big( \big(4-2 \log(z\bar z)+\frac{1}{2}\log^2(z\bar z)\big)\eps+2-\log(z\bar z)+\frac{1}{\eps}  \Big)+\mathcal{O}(\eps)
\end{align*}
Finally acting with the product $m_A$ we obtain:
\begin{align*}
m_A\circ(\overline{T}^{0}\otimes \widetilde{T}^0)\circ\; \Delta(\ga) = 0 
\end{align*}

Let us now return to the structure of eq. (\ref{loga}). At a first glance it may suggest that $\overline{T}^0$ and $\widetilde{T}^0$ are inverse operations of each other under the $\star$-product. However, this is not exactly true since for non-primitive graphs, 
\begin{equation*}
 (\widetilde{T}^0\star\overline{T}^{0})\;(\Gamma) = \widetilde{T}^0(\ga) - T^0(\Gamma) \neq 0\,.
\end{equation*}
Therefore $\widetilde{T}^0$ and $\overline{T}^0$ are only inverses of each other under certain conditions. More precisely we prove the following theorem for SLD graphs.
\begin{theo}
\label{thm:leadingantipoderel}
For an SLD graph $\Gamma$ the inverse of $\widetilde{T}^0$ in the group of characters is given by
\bnu\label{antip}
(\widetilde{T}^0)^{-1}(\Gamma) = \widetilde{T}^0\circ S(\Gamma)=\overline{T}^0(\Gamma)\,.
%\qquad \qquad \widetilde{T}^0(\Gamma) = \overline{T}^{0}\circ S(\Gamma)\,.
\enu
\end{theo}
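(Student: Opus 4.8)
The plan is to prove the two asserted equalities separately. The first, $(\widetilde{T}^0)^{-1}=\widetilde{T}^0\circ S$, is not special to our setting: it is the general statement recalled in section \ref{sec:conneskreimer} that in the group of characters $\mathcal{G}=\mathrm{char}(\hopf,A)$ the convolution inverse of any character $g$ is $g\circ S$. To invoke it I would first check that $\widetilde{T}^0$ is genuinely an element of $\mathcal{G}$, i.e.\ a homomorphism $\hopf\to A$. This is exactly where the logarithmic (leading-power) restriction enters: on a disjoint union the Feynman integrand, and hence the integral, factorises, and at leading power the leading term of a product equals the product of leading terms, so that $\widetilde{T}^0(\gamma_1\sqcup\gamma_2)=\widetilde{T}^0(\gamma_1)\,\widetilde{T}^0(\gamma_2)$. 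Granting multiplicativity, $\widetilde{T}^0\circ S$ is again a homomorphism and $(\widetilde{T}^0\circ S)\star\widetilde{T}^0=\widetilde{T}^0\star(\widetilde{T}^0\circ S)=e$ follows from the antipode axiom, which is the first equality.

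The substance of the theorem is the second equality $\widetilde{T}^0\circ S(\Gamma)=\overline{T}^0(\Gamma)$, which I would establish by induction on the loop number, the grading that makes $\hopf$ connected. For the base case a primitive SLD graph has $\Delta(\Gamma)=\ide\otimes\Gamma+\Gamma\otimes\ide$, so $S(\Gamma)=-\Gamma$ and eq.\ (\ref{eq:LSDexp}) collapses to $\widetilde{T}^0(\Gamma)=T^0(\Gamma)$; hence $\widetilde{T}^0\circ S(\Gamma)=-T^0(\Gamma)=\overline{T}^0(\Gamma)$. For the inductive step I would insert the recursive antipode, eq.\ (\ref{eq:antipode1}), and use multiplicativity of $\widetilde{T}^0$ to write
\begin{equation*}
\widetilde{T}^0\circ S(\Gamma)=-\widetilde{T}^0(\Gamma)-\sum_{\gai\subsetneq\Gamma}(\widetilde{T}^0\circ S)(\gai)\,\widetilde{T}^0(\Gamma/\gai)\,.
\end{equation*}
Every proper AI subgraph $\gai$ has strictly fewer loops (it is MHM in $\Gamma$, so by theorem \ref{aitohm} it drops at least one loop) and is itself SLD, since its AI subgraphs are AI subgraphs of $\Gamma$ by corollary \ref{inherit} and are therefore logarithmic. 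The induction hypothesis then gives $(\widetilde{T}^0\circ S)(\gai)=\overline{T}^0(\gai)=-T^0(\gai)$. Substituting, and separating the $\gai=\Gamma$ term from eq.\ (\ref{eq:LSDexp}), namely $\sum_{\gai\subsetneq\Gamma}T^0(\gai)\,\widetilde{T}^0(\Gamma/\gai)=\widetilde{T}^0(\Gamma)-T^0(\Gamma)$, the two copies of $\widetilde{T}^0(\Gamma)$ cancel and I am left with $\widetilde{T}^0\circ S(\Gamma)=-T^0(\Gamma)=\overline{T}^0(\Gamma)$, closing the induction.

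The main obstacle is not the algebra of the inductive step, which is essentially forced, but the two structural inputs it rests on. First, I must argue carefully that $\widetilde{T}^0$ is a bona fide character, since it is the exact factorisation of the leading term across disjoint components that promotes the expansion-by-subgraph relation to a convolution identity; this is genuinely specific to the logarithmic case. Second, I should emphasise why the result does \emph{not} follow by a trivial ``left inverse equals inverse'' argument: as noted just before the theorem, $\overline{T}^0$ is not a two-sided convolution inverse of $\widetilde{T}^0$ (indeed $\widetilde{T}^0\star\overline{T}^0(\Gamma)=\widetilde{T}^0(\Gamma)-T^0(\Gamma)\neq0$ for non-primitive $\Gamma$), because $\overline{T}^0$ is defined by eq.\ (\ref{eq:Tbar}) only on AI subgraphs and cographs and is not multiplicative there absent the sign correction of the footnote. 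The coincidence $\overline{T}^0(\Gamma)=(\widetilde{T}^0)^{-1}(\Gamma)$ is therefore a statement about the value on the single generator $\Gamma$, which is exactly what the induction delivers; the inheritance of the SLD property under corollary \ref{inherit} is the bookkeeping point that keeps the induction within its hypotheses.
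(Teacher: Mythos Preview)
Your proof is correct and follows essentially the same approach as the paper's: both establish the first equality from the general character-group inverse $g^{-1}=g\circ S$ and then prove $\widetilde{T}^0\circ S(\Gamma)=\overline{T}^0(\Gamma)$ by induction, expanding the recursive antipode, applying the inductive hypothesis to the proper AI subgraphs, and collapsing the sum using eq.~(\ref{eq:LSDexp}). The only cosmetic difference is that the paper inducts on the coradical grade (the smallest $n$ with $\widetilde{\Delta}^{\otimes n}(\gamma)=0$) rather than loop number, and explicitly uses $\widetilde{T}^0(\Gamma/\gai)=\phi(\Gamma/\gai)$ in the intermediate step; both gradings drop strictly on proper AI subgraphs, so the arguments are interchangeable.
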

The proof is presented in appendix \ref{ap:logantiproof}. In fact, using this theorem we can also find the true inverse of $\widetilde{T}^0$; which in general is given by: (theorem \ref{thm:inverse} in appendix \ref{ap:logantiproof})
\bnon
(\widetilde{T}^0)^{-1}(\Gamma) =
\begin{cases}
\overline{T}^0(\Gamma) & \text{if the hard legs are not all joined at the same vertex,}\\
\phi^{-1}(\Gamma) & \text{if all the hard legs are joined at the same vertex,}\\
\ide_A & \text{if } \Gamma = \ide\; ,
\end{cases}
\enon
where 
\bnon
\phi^{-1}=\phi\circ S
\enon
is the inverse of the map $\phi$ under the convolution product. And, the inverse of $T^0$ for any graph is found to be,
\begin{align*}
(T^0)^{-1}(\Gamma)&=T^0\circ S(\ga) = T^0(-\ga -\sum_{\gai\subsetneq\Gamma}S(\gai)\ga/\gai) \nonumber\\
&= -T^0(\ga) -\sum_{\gai\subsetneq\Gamma}T^0(S(\gai))T^0(\ga/\gai))= -T^0(\ga) =\overline{T}^0(\Gamma)\,.
\nonumber
\end{align*}
Here only the first term survives since $\ga/\gai$
only depends on soft scales and acting with $T^0$ thus makes it scaleless. 

Let us now consider the \emph{remainder} $\mathcal{R}$ of the leading order expansion. A compact expression is presented in the following theorem.
\begin{theo}
The leading order remainder for an SLD graph is given by
\bnon
\mathcal{R}^0(\Gamma)=(\overline{T}^0\star \phi) \;(\Gamma)
\enon
\end{theo}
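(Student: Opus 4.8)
The plan is to prove the identity by a direct expansion of the convolution product on the right-hand side, matching it term by term against the definition of the remainder. Recall from eq.~(\ref{remo}) that $\mathcal{R}^0(\Gamma)=\phi(\Gamma)-\widetilde{T}^0(\Gamma)$, so it suffices to show that $(\overline{T}^0\star\phi)(\Gamma)$ reproduces exactly $\phi(\Gamma)-\widetilde{T}^0(\Gamma)$ on any SLD graph $\Gamma$. Note that both sides are elements of $A$ evaluated on a single connected graph, so this is a pointwise statement rather than an identity of characters.

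First I would write out the convolution using its definition, eq.~(\ref{eq:convproduct}), together with the coproduct, eq.~(\ref{eq:coproducta}), exactly as in the proof of Theorem~\ref{thm:logAHopfalgebra},
\begin{align*}
(\overline{T}^0\star\phi)(\Gamma)
&=m_{A}\circ(\overline{T}^0\otimes\phi)\circ\Delta(\Gamma)\\
&=m_{A}\circ(\overline{T}^0\otimes\phi)\Big(\ide\otimes\Gamma+\sum_{\gai\subseteq\Gamma}\gai\otimes\Gamma/\gai\Big)\,,
\end{align*}
where the sum over $\gai\subseteq\Gamma$ absorbs the $\Gamma\otimes\ide$ contribution via the identification $\Gamma/\Gamma=\ide$. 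I would then evaluate $\overline{T}^0$ on each left slot using its definition, eq.~(\ref{eq:Tbar}), namely $\overline{T}^0(\ide)=1$ and $\overline{T}^0(\gai)=-T^0(\gai)$, together with $\phi(\ide)=1$, to obtain
\begin{align*}
(\overline{T}^0\star\phi)(\Gamma)=\phi(\Gamma)-\sum_{\gai\subseteq\Gamma}T^0(\gai)\,\phi(\Gamma/\gai)\,.
\end{align*}

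The remaining step is to recognise the subtracted sum as $\widetilde{T}^0(\Gamma)$. Here I would invoke the expansion-by-subgraph formula, eq.~(\ref{ebrnew}), specialised to the leading logarithmic case: since $\Gamma$ is SLD all relevant superficial degrees of divergence are logarithmic, so the loop integrals over $\gai$ and over $\Gamma/\gai$ factorise exactly, giving $\mathcal{I}\big(T^0(\gai)\,F(\Gamma/\gai)\big)=T^0(\gai)\,\phi(\Gamma/\gai)$ and hence $\widetilde{T}^0(\Gamma)=\sum_{\gai\subseteq\Gamma}T^0(\gai)\,\phi(\Gamma/\gai)$. Substituting this back yields $(\overline{T}^0\star\phi)(\Gamma)=\phi(\Gamma)-\widetilde{T}^0(\Gamma)=\mathcal{R}^0(\Gamma)$, as claimed.

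The hard part is the last identification: in the convolution $\phi$ acts on the \emph{full} contracted graph $\Gamma/\gai$, whereas in eq.~(\ref{eq:LSDexp}) the cograph appears under $\widetilde{T}^0$. These coincide because, by the discussion below Corollary~\ref{inherit}, every contracted graph $\Gamma/\gai$ depends only on the soft scales; it is therefore homogeneous under the $\lambda$-rescaling and possesses a single scaling region, so that $\widetilde{T}^0(\Gamma/\gai)=\phi(\Gamma/\gai)$ at leading order. This is precisely where the SLD hypothesis and the exact factorisation of the logarithmic leading term are indispensable: away from the logarithmic/leading case the cograph integrals no longer factorise cleanly, and the simple replacement of $\widetilde{T}^0$ by $\phi$ on cographs would fail. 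I expect checking this factorisation carefully, rather than the algebraic bookkeeping, to be the main obstacle.
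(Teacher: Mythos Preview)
Your proof is correct and follows essentially the same route as the paper's own argument: expand the convolution via the coproduct, apply the definition of $\overline{T}^0$, and then identify the resulting sum with $\widetilde{T}^0(\Gamma)$ using eq.~(\ref{eq:LSDexp}) together with $\widetilde{T}^0(\Gamma/\gai)=\phi(\Gamma/\gai)$. If anything, you give more justification than the paper does for this last identity, spelling out that the cograph depends only on soft scales and is therefore homogeneous under the $\lambda$-rescaling.
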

\begin{proof}
To see that $\mathcal{R}^0(\Gamma)$ is indeed the remainder of the leading order expansion, consider 
\begin{align*}
(\overline{T}^0\star \phi)\; (\Gamma)&=m_{A}\circ(\overline{T}^{0}  \otimes \phi)\circ\; \Delta (\ga) \\
&= m_{A}\circ(\overline{T}^{0}\otimes\phi)\circ
\Big( \ide \otimes \Gamma + \sum_{\gai\subseteq\Gamma}  \gai\otimes \Gamma/\gai    \Big)\\
&= m_{A}\circ \Big( \overline{T}^{0}(\ide) \otimes \phi(\Gamma) + \sum_{\gai\subseteq\Gamma}   \overline{T}^{0}(\gai)\otimes \phi(\Gamma/\gai)    \Big)\\
&=\phi(\ga) - \sum_{\gai\subseteq\ga}T^{0}(\gai)\,\widetilde{T}^0(\ga/\gai)\\
&=(\phi-\widetilde{T}^{0})(\Gamma) = \mathcal{R}^0(\Gamma)
\end{align*}
where we used $\widetilde{T}^0(\ga/\gai)=\phi(\ga/\gai)$ in the third line and eq. (\ref{eq:LSDexp}) to get to the last line.
\end{proof}

It is striking that that the Hopf algebraic structure of the remainder, $\mathcal{R}$, appears to closely resemble that of the $R$ operation in renormalisation. Indeed we can identify the $\overline{T}^0$ as the analogue of the counterterm operation. In the context of renormalisation the counterterm operation is often termed as an antipode, or more precisely, a twisted antipode. Since $\overline{T}^0=T^0\circ S$ we can indeed also make this identification here. 

This connection is not an accident. One may think of the appearance of logarithms in the log-power expansion (especially at leading order in the expansion) as being due to the appearance of divergences at the level of the integrand. These divergences are solely responsible for the fact that integration and Taylor expansion do not commute. The expression for the remainder may thus equivalently be derived from an $R$-operation -- or more precisely the $R^*$-operation -- perspective by including the set of counterterms to subtract all possible IR divergences which would emerge in the diagram in the $P\to0$ limit. This approach was indeed taken by Smirnov in refs.~\cite{Smirnov:1990rz, Smirnov:2002pj} to formulate a general proof of the expansion by subgraph.

%\Franz{This is not true!! - we need tow rok on it!}
% Eq. (\ref{loga}) leads to an insightful relation between the two operators that holds for hard mass-momentum spanning graphs in the completely logarithmic case (proof given in appendix \ref{antrel}):
% \bnu\label{antip}
% \overline{T}^0 = \widetilde{T}^0\circ S \qquad \widetilde{T}^0 = \overline{T}^{0}\circ S
% \enu

%

% \pagebreak
% For higher orders the relation is not so straightforward, but it is reasonable to expect that the two operators must be connected through the antipode. To see this, for the time being, assume (wrongly) that the above relations hold to all orders. Then, adding up all orders, we get:
% \bnon
% -\sum_{n=0}^\infty T^{(n)} = \overline{T}^{\infty} = \widetilde{T}^\infty\circ S
% \enon
% \bnon
% \implies T^{\infty} = -\phi\circ S
% \enon
% This relation that we arrived at after a faulty assumption, turns out to be quite close to the actual relation that we will prove in the next section (equation \eqref{twistantipode}). There, instead of $\phi\circ S$, we have a slightly modified version of the antipode called the twisted antipode. This demonstrates that even though the relations \eqref{antip} do not hold true for all cases, the operators $\overline{T}$ and $\widetilde{T}$ at all orders are closely related to each other through the antipode.

\subsection{A formalism for higher degree and higher expansion orders}
\label{higherorder}

The Hopf algebra formulation  is more complicated when going beyond the logarithmic case. We therefore first review how these subtleties are dealt with in the Connes-Kreimer construction, and also provide an alternative integrand Hopf monoid-formulation.

\paragraph{Beyond the logarithmic case in BPHZ.}

%\Franz{reference back and include discussion about hopf monoid and species , labbeled graphs etc...}

Here we propose an alternative method for dealing with the non-logarithmic counterterms in the $R$-operation that does not require the introduction of a projector pairing. Instead of mapping the graphs directly to their integrated amplitudes, we first map the graphs to their corresponding integrands based on the Feynman rules. One may think that this introduces an ambiguity since there are different representations corresponding to different loop momentum routings. However in the proposed formalism the momentum routing does not actually need to be fixed, and one can instead keep a momentum conserving delta function at every vertex. 

In contrast to the original Hopf algebra formulation it is  important to keep the labelling of the original edges intact in all subgraphs in the coproduct. That is, we need to drop the notion of identifying isomorphic Feynman graphs with each other. The space of graphs is instead those of edge-labelled graphs. This does not play well with the algebra structure, as not all products of all graphs are well defined. Instead only certain products of graphs can be considered. In particular, all those whose product can be associated to what is called a decomposition of a set $I$, for us the set of edges, into two mutually disjoint subsets $S_1$ and $S_2$, such that their disjoint union fulfills $I=S_1 \sqcup S_2$. Such a structure is then no longer a Hopf algebra, but instead is endowed with the structure of a \emph{Hopf monoid on vector species}. For us the species in question is thus that of labelled graphs. Such a structure contains a similar copoduct and antipode as the original Hopf algebra. A brief introduction to Hopf monoids in vector species is provided in Appendix \ref{hopfmon}. 
A more detailed investigation is beyond the present work. However, we have checked that both the coassociativity and antipode proofs in section \ref{sec:hopfproof} and checks provided in section \ref{ex} all go through identically while keeping the edge labelling intact.   

Let us now come back to the integrand construction for the $R$ operation, where we will now use the notation $\mathbf{H}^{\text{1PI}}$ for the relevant Hopf monoid as it is based on 1PI graphs.  For a particular indexing set $I=\{1,...,n\}$ we construct $\mathbf{H}^{\text{1PI}}[I]$ by taking all graphs present in the Connes-Kreimer Hopf algebra of bridgeless graphs, $\mathcal{H}^{\text{1PI}}$, which have precisely $n$ edges, and labelling them in all possible ways.  We then introduce the map $F:\mathbf{H}^{\text{1PI}} \to \mathcal{A}$, with $\mathcal{A}$ the algebra of functions of the integrand and let $\mathcal{I}:\mathcal{A}\to A$ be the corresponding integration map, that integrates the integrand w.r.t.\ the loop momenta present in that integrand. Composing these two maps then yields the previously defined $\phi$-map $\phi=\I\circ F$. 

We can further specify the integrand and integral algebras by including graph-labels to specify the kinematic space. Then, for instance, let $A_\gamma$ be the algebra of functions whose domain is the external kinematic data of the graph $\gamma$, while $\A_\gamma$ is the algebra of functions depending on the internal and external kinematic data of $\Gamma$.  In this language we can then write $Z(\gamma) \in A_{\gamma}$ or equivalently $Z(\gamma)\in \A_{\Gamma/\gamma}$, i.e. the counterterm of a subgraph is a polynomial of the external and internal kinematic data of the contracted parent graph $\Gamma/\gamma$. There is then a product on this space, $m_\A:\A\times \A \to \A$, with which we can write the $R$-operation as:
\bnu \label{renoconvo}
R = \mathcal{I}\circ(Z\star F) = \int d\mu\; (m_\A\circ(Z\otimes F)\circ\Delta_{\mathbf{H}^{\text{1PI}}})\,,
\enu
where $d\mu$ is the integration measure corresponding to the loop momenta of each graph on the right side of each term of the tensor product. Thus, 
\bnu \label{reno}
R(\ga) = \sum_{\gamma\subseteq\ga, \gamma\in\mathbf{H}^{\text{1PI}}} \int d\mu_{\ga/\gamma}\;Z(\gamma)\,F(\ga/\gamma)\,.
\enu
This formulation has several advantages. It allows us to work both with the Hopf-theoretic structure of graphs, while also giving us access to operate at the integrand level. Furthermore it circumvents the need for the projector pairing formalism in the case where the counterterm, $Z(\gamma)$, has non-trivial dependence on the external kinematic data of $\gamma$. 

%Note that this technique only removes the need for projector pairing relevant to counterterm insertions. Non-scalar theories such as QED that have tensor and spinor indices still need projection pairing along with the Feynman diagrams in the original definition of the Hopf algebra. %What we achieved here is that we do not need to introduce another extra set of projectors over the original Hopf algebra of the theory in order to carry out the counterterm insertions.

% Due to the presence of monomial insertion at non-logarithmic orders in , we would be using the approach we introduced in section \ref{sweetidea} to avoid the use of projection pairing formalism. 

\paragraph{Beyond the logarithmic case in the expansion by subgraph.}

In the case of asymptotic expansions, we will refer to the corresponding Hopf monoid as the asymptotic Hopf monoid $\mathbf{H}_Q$. For a particular indexing set $I=\{1,...,n\}$ we construct $\mathbf{H}_Q[I]$ by taking all graphs present in the asymptotic Hopf algebra, $\mathcal{H}_Q$, which have precisely $n$ edges, and labelling them in all possible ways.  
We note that there indeed exists a map, \emph{the Fock functor} $\bar{\mathcal{K}}$, see refs.\ \cite{aguiar2017hopfmonoidsgeneralizedpermutahedra,monoidbook},  which allows one to map this Hopf monoid $\mathbf{H}_Q$ back into the Hopf algebra $\mathcal{H}_Q$.

We now express the expansion by subgraph in the form of equation \eqref{reno}. In our discussion here, we will distinguish between Feynman graphs ($\in \mathbf{H}_Q$), Feynman integrands ($\in\mathcal{A}$) and Feynman integrals ($\in A$) based on the notation introduced in section \ref{bg}. It is transparent that the remainder (equation \eqref{remo}) can be written as 
\bnu \label{almostthere}
    \mathcal{R}^{(a)}(\ga) = \int d^Dl\;F(\ga)\:+\sum_{\substack{\gai\subseteq\ga\\ \gai\in\mathbf{H}_Q}} \int d^Dl\;(-\T_F^{\ba})(\gai)\,F(\ga/\gai)\,,
\enu
where $\T_F^{a}:\mathbf{H}_Q\to\A$ is such that $$\T_F^{a}:= \T^{a+\omega}\circ F\,,$$ with $F:\mathbf{H}_Q \to \mathcal{A}$ the function that maps each graph to its corresponding integrand based on the Feynman rules as defined in section \ref{bg}. The integration is over all the loop momenta of $\ga$. 

We now construct $\overline{\T}_F^{a}$ from $\T_F^{a}$ analagously to how $\overline{T}^{a}$ was constructed from $T$ in eq.\ (\ref{eq:Tbar}).
% Recall that $H$ is the generator set of $\hopf$ and thus every element in $\hopf$ is a disjoint untion of elements in $H$. Notice that, as the empty graph has $C=0$, we will have $\overline{\T}^{a}(\ide)=1$. In our discussions, because we deal with AI subgraphs of $\ga$, the parent graph of $\hopf$, for all non-empty graphs we will have $C=1$ as all relevant graphs are elements of $H$. 
We can then write eq. \eqref{almostthere} as
\bnon
    \mathcal{R}^{(a)}(\ga) = \sum_{\substack{\gai\subseteq\ga, \gai=\ide \\ \gai\in\mathbf{H}_Q}} \int d^Dl\;\overline{\T}_F^{\ba}(\gai)\,F(\ga/\gai)\,.
\enon
The graph combinatorics of the integrand above is now exactly the same as the coproduct of $\mathbf{H}_Q$. Denoting $m_{\A}$ as the multiplication operation for the algebra $\A$, we can write the integrand above as a Hopf convolution product: 
\bnon
\sum_{\substack{\gai\subseteq\ga, \gai=\ide \\ \gai\in\mathbf{H}_Q}} \overline{\T}_F^{\bar{a}}(\gai)\,F(\ga/\gai) \; =\; m_{\A}\circ(\overline{\T}_F^{\ba}\otimes F)\circ\; \Delta (\ga) \;=\;(\overline{\T}_F^{\, \ba}\star F)\;(\ga)
\enon
Now we can write the formulation of the remainder function in Hopf-theoretic language: 
\bnu\label{rem}
\mathcal{R}^a(\ga)\,=\,\I\circ(\overline{\T}_F^{\ba}\star F)\;(\ga)
\enu
This relation holds not only for $\ga$, but also for any AI subgraph of $\ga$ using similar arguments as shown above. In order to get the  expression \eqref{ebrnew}, we act both sides of the equation with the regular Taylor expansion operator $\TT^a$:
\bqa
0 = \TT^a\circ\phi(\gamma) - \TT^a\circ\left(\sum_{\mu_{AI}\subseteq\gamma}T^{\ba+\omega}(\mu_{AI})\ast\,\phi(\gamma/\mu_{AI})\right) \\
\implies \widetilde{T}^a(\gamma) - \sum_{\mu_{AI}\subseteq\gamma}T^{\ba+\omega}(\mu_{AI})\ast\,\widetilde{T}^a(\gamma/\mu_{AI}) = 0
\eqa
which is the expansion by subgraph expression. Notice that equation \eqref{rem} does not hold true for graphs in $\mathbf{H}_Q$ that do not depend on the hard scales. These graphs are those that have all the hard legs joined at the same vertex. This is natural because there is no meaning of expansion by subgraph if there is no hierarchy of scales.

\paragraph{The twisted antipode}
It is instructive to study the whole series, i.e. the result of the operator $\widetilde{T}^\infty$. Within the radius of convergence \cite{Smirnov:1990rz, Smirnov:1994tg, Smirnov:2002pj} the resulting series therefore reproduces the full function, and the remainder of the series vanishes. In this case eq. (\ref{rem}) leads to
\begin{equation}
   0=\,\I\circ(\overline{\T}^{\infty}_F\star F)\;(\ga)\,.
\end{equation}
For a non-trivial graph it thus appears that $\overline{\T}^{\infty}_F$ acts like the inverse of the Feynman-rule map $F$ in the group of characters, at least when acting on MHM graphs, and with the product 
being integrated with $\I$. In this setting $\overline{\T}^{\infty}_F$ thus has the same action as $F\circ S$, with $S$ the antipode. This is true within the convergence domain of the expansion parameters, but as for some AI subgraphs, loop momenta are also soft parameters, integration takes them outside the convegence domain, so we cannot generally identify $\overline{\T}^{\infty}_F$ with the true antipode.  Nevertheless we can still relate $\overline{\T}^{\infty}_F$ to a twisted antipode, in analogy to the counterterm map in the Connes-Kreimer construction. Using Birkhoff decomposition, see appendix \ref{Birkh}, we arrive at the following expression  for the \emph{twisted integrand antipode}
\begin{equation}
S_F(\gamma) := -F(\gamma) - \sum_{\mu\subsetneq\gamma}\T^{\infty}\circ S_F(\mu)\;F(\gamma/\mu)\,.
\end{equation}
The corresponding \emph{integrated} twisted antipode, or just the twisted antipode, is then defined as
$$S_\phi:=\I\circ S_F\,.$$
With this definition we then arrive, see appendix \ref{Birkh} for the derivation, at the following proposition:
\begin{theoremEnd}{theo}
\label{thm:twistantipode}
For any hard mass-momentum spanning graph in $\mathbf{H}_Q$, under dimensional regularisation, the twisted antipode is the negative of the standard Taylor expansion (the Taylor expansion in Smirnov's original definition \ref{ebrnew}):
\bnu\label{twistantipode}
S_\phi = -T^{\infty}
\enu
\end{theoremEnd} 
\begin{proofEnd}
We begin by noticing that the Taylor expansion to all orders, $\T^\infty:\A\to\A$ is a Rota-Baxter operator, as within the convergence domain of the soft parameters it is equal to $id_{\A}$ which is trivially a Rota-Baxter operator. However, $id_{\A}\neq \T^\infty$ as some of the soft parameters are loop momenta which under integration take up values beyond their convergence domain.

We will now use theorem \ref{thm:BHd} to Birkhoff decompose the Feynman integrand map $F$ using the projection operator $\T^\infty$. This yields
\begin{equation}
\label{eq:fminus}
\begin{split}
F_-(\gamma) &= - \T^\infty\big(F(\gamma) + \sum_{\mu\subsetneq\gamma}F_-(\mu)F(\gamma/\mu)\big) \,, \\
F_+(\gamma) &= (id_A- \T^\infty)\big(F(\gamma) + \sum_{\mu\subsetneq\gamma}F_-(\mu)F(\gamma/\mu)\big).  \end{split}
\end{equation}
Let us focus at the term in the brackets for both the equations. This term
\bnu
\label{eq:Fsum}
F(\gamma) - \sum_{\mu\subsetneq\gamma}F_-(\mu)\;F(\gamma/\mu)
\enu
is very similar, for what concerns the graph combinatorics, to the antipode acted upon by $F$:
\bnu
\label{eq:FSsum}
F\circ S\,(\gamma) = -F(\gamma) - \sum_{\mu\subsetneq\gamma}F \circ S(\mu)\;F(\gamma/\mu)\,.
\enu
But \eqref{eq:Fsum} and \eqref{eq:FSsum} are not equal to each other. $F_-$ is acted over by $\T^\infty$, and  outside the convergence domain does not equal to $F\circ S$, which becomes relevant during integration over loop momenta. Thus, taking inspiration from the terminology in \cite{Connes:1999yr}, we define the twisted \emph{integrand} antipode $S_F$:
\bnu\label{eq:twisted}
S_F(\gamma) := -F(\gamma) - \sum_{\mu\subsetneq\gamma}\T^\infty\circ S_F(\mu)\;F(\gamma/\mu)\,,
\enu
With the twisted integrand antipode we can write $F_-$ in a compact form: 
\bnon
F_- = \T^\infty\circ S_F.
\enon
The above statement can be proven using induction on the grade of the graph with the same method shown in appendix \ref{ap:logantiproof}. \\
Fom the Birkhoff decomposition of $F$ we have
\bnon
F=(F_-)^{-1}\star F_+ \; \implies\;F_+=F_-\star F.
\enon 
Integrating both sides of the equation, we get, for hard mass-momentum spanning graphs $\gamma\in\mathbf{H}_Q$,
\bqa
\I\circ F_+\,(\gamma)&=&\I\circ(F_-\star F)\,(\gamma)\\
\implies-\I\circ S_F\,(\gamma)+\I\circ\T^\infty\circ S_F\,(\gamma)&=&\I\circ(\overline{\T}^{\infty}_F\star F)\,(\gamma)
\eqa
where $\overline{\T}^{\infty}_F$ was defined in section \ref{higherorder}. The RHS of the second line is so because the $\T^\infty$ of $F_-$ converts all contracted graphs in $S_F$ to scaleless graphs, which go to zero after integration. Thus only the first term in $F_-(\mu)$ survives, and we have only $\overline{\T}^{\infty}_F(\mu)$ that really goes into the RHS after integration. Following the same reasoning, we have $\I\circ\T^\infty\circ S_F\,(\gamma) = -\I\circ\T^{\infty}\,(\gamma)$. Using $T^{\infty}=\I\circ\T_F^{\infty}$, we have, 
\bnon
-S_\phi(\gamma) - T^{\infty}(\gamma) = \mathcal{R}^\infty(\gamma)=0
\enon
where $S_\phi:=\I\circ S_F$ is the twisted \emph{integral} antipode or just the twisted antipode. This proves Theorem \ref{thm:twistantipode}.
\end{proofEnd}
The theorem relates a recursive operator (LHS) to a straightforward Taylor expansion (RHS). This explains the apparent lack of a recursion in the expansion by subgraph even though it has a Hopf algebraic antipode structure. 
It would be interesting to consider the expansion-by-subgraph in non-analytic regularisation schemes where scaleless integrals do not vanish. In such schemes the expansion by subgraph would then have to incroporate the more involved recursive structure of the twisted antipode, which is absent in dimensional regularisation.

\section{Discussion and Conclusions}
\label{conclusion}
In this paper we have established a new Hopf algebra framework for an entire class of asymptotic Feynman integral expansions, namely all small/large mass and momentum expansions which can be defined in Euclidean regime. Since the early nineties due to works by Smirnov this class of expansions have been understood also in an \emph{expansion by subgraph} approach, which identifies regions with certain subgraphs. In contrast, for other expansions, which can only be defined in the Minkowkian regime, the subgraph approach has only recently been fully established for certain special cases, and is more complicated \cite{Gardi:2022khw, Herzog:2023sgb, Ma:2023hrt}. We leave a possible Hopf-algebraic description of these more general cases to future work. 

% For what concerns the Euclidean subgraph expansions the subgraphs are ultimately connected to Euclidean UV and IR divergences which arise at the kinematic point to be expanded around. The Euclidean divergences are well understood within the $R^*$ framework, a generalisation of BPHZ, an operation which can subtract such singularities to make Feynman integrals finite. The combinatorics of these divergences has recently been understood in terms of the motic Hopf algebra by Brown, and was applied also to in a new formulation of $R^*$. There thus was a natural route to extend this framework to the expansion by subgraph, the goal of this paper.
For what concerns the Euclidean expansions, the subgraphs are ultimately connected to Euclidean UV and IR divergences. The combinatorics of these divergences have, over the last decade, been understood in terms of the motic Hopf algebra by Brown ~\cite{Brown:2015fyf}, which also served in the reformulation of the $R^*$ operation ~\cite{Beekveldt:2020kzk}, which subtracts the corresponding divergences. The first step of this paper was the establishment of an \emph{asymptotic Hopf algebra}, closely related to the motic Hopf algebra, but whose graph-combinatorics depends, critically, on the identification of a set of small/expansion parameters. We established a natural coproduct, which sums over the set of subgraphs generated in a particular such expansion, and showed that it satisfies the requirements of a Hopf algebra in its own right. As in the case of the Connes-Kreimer and motic Hopf algebras the antipode follows from the fact that the Hopf algebra is graded and connected to the identity. 

We proceeded to establish a Hopf algebraic formulation of the expansion by subgraph. Before attacking the all-order case we focused on the leading-power term in the expansion in the special, although not uncommon, case that the expansion degrees of all contributing subgraphs are logarithmic. In this case the Hopf algebraic formulation is the cleanest. In Theorem \ref{thm:logAHopfalgebra} we show that the expansion by subgraph follows from a simple convolution product of the region-expansion (or regular) Taylor operator and standard Taylor expansion operators in the group of characters of the Hopf algebra. We illustrate how our formulation works with a detailed example.  

We further establish a deep relation between the region expansion operator and the standard Taylor expansion operator through the antipode in Theorem  \ref{thm:leadingantipoderel}. A particularly insightful result is the formulation of the remainder of the leading term of the expansion. We provide it in terms of a convolution of the standard Taylor expansion operator and the Feynman rules, expressed as a map. What is striking about this equation is its similarity with the Connes-Kreimer Hopf-algebraic formulation of renormalisation, which takes the same form if the standard Taylor operator is replaced with the counterterm, or twisted antipode, operation. 

We furthermore present an extension of the framework to higher power terms in the expansion, which we provide in terms of convolutions of maps at the integrand level. This formulation requires the labels of edges in different subgraphs to remain intact, a feature which is usually lost in the Hopf algebra formulation. Instead we introduce a new Hopf monoid formulation, which allows keeps the edge labels. This provides an alternative to the projector pairing formalism, which is usually employed to factorise elements in the convolution product. Within the Hopf monoid framework we derive an interesting result showing that the standard Taylor operator can be identified with a twisted antipode, which we motivate from a Birkhoff decomposition of the integrand. 

There remains much to be done in the Hopf-algebraic formulation of the expansion by region. For the Euclidean case we only took a first step here. It would be desirable to give a complete proof of the expansion by subgraph within the Hopf-theoretic framework, this goes well beyond what we have achieved here - but some of the results we presented, such as the relationship of the Taylor operator with the twisted antipode, may be of use. Indeed the fact that the remainder of the expansion takes a similar form as the $R$ operation, which was also a key element in Smirnov's original proof, should allow for the entire Hopf algebraic machinery developed by Kreimer, Connes and others to be employed here. 

Another promising direction will be to explore the Hopf monoid sructure. It should allow to make contact also with the Hopf monoid structure of the generalised permutahedron \cite{aguiar2017hopfmonoidsgeneralizedpermutahedra}, which is known to exist on the Feynman polytope in the Euclidean regime \cite{Schultka:2018nrs, Borinsky:2020rqs, Panzer:2019yxl}. Finally this may allow to shed light into the mathematical structure for Minkowskian expansions, where the corresponding Feynman polytope is more involved than a generalised permutahedron.

\acknowledgments
We would like to thank Michael Borinsky for early collaboration on this work. MC would like to thank Balasubramanian Ananthanarayan for continued support and encouragement and Sudeepan Datta for insightful discussions. This work is supported by the UKRI FLF grant ``Forest Formulas for the LHC'' (Mr/S03479x/1), and the ERC Advanced Grant 101095857 Conformal-EIC.

\newpage

\appendix
\section{Antipodal relations for the SLD graphs}\label{ap:logantiproof}
We present the proof for theorem \ref{thm:leadingantipoderel} in this appendix. 
\begin{proof}
As discussed below equation \eqref{eq:convproduct}, the inverse of any character $g$ in the group of characters is given by $g\circ S$. Thus, for any $\gamma\in\hopf$, we have $$(\widetilde{T}^0)^{-1}(\gamma) =\widetilde{T}^0\circ S(\gamma).$$ Now, the antipode is given by both the left and right antipodes: $$S(\gamma)\; =\; -\gamma -\sum_{\mu\underset{AI}{\subsetneq}\gamma} S(\mu)\,\gamma/\mu \;=\; -\gamma -\sum_{\mu\underset{AI}{\subsetneq}\gamma} \mu\,S(\gamma/\mu)\, .$$ The properties of Hopf algebra, wherein both the left and right antipodes are equal, makes it sufficient for us to prove the theorem for only one case. The case of the left antipode is more convenient: $$\widetilde{T}^0\circ S(\gamma)\; =\; -\widetilde{T}^0(\gamma) -\sum_{\mu\underset{AI}{\subsetneq}\gamma} \widetilde{T}^0\circ S(\mu)\,\widetilde{T}^0(\gamma/\mu)\, .$$ We can use the principle of mathematical induction on the grading of the Hopf algebra to prove that $\widetilde{T}^0\circ S(\gamma) = \overline{T}^0(\gamma)$ for any SLD graph $\gamma$. This is achieved as follows: with the reduced coproduct given by $$\widetilde{\Delta}(\gamma)=\Delta(\gamma)-\ide\otimes\gamma-\gamma\otimes\ide\, ,$$ define the \emph{grading operator}:$$\mathcal{P}_n:=\widetilde{\Delta}^{\otimes n}=\widetilde{\Delta}\otimes\ldots\otimes\widetilde{\Delta} \; \text{($n$ times)}\;.$$ Then, the \emph{grade} of a graph $\gamma\in\hopf$ is the minimum value of $n$ for which $\mathcal{P}_n(\gamma)=0$. Notice that, all proper AI subgraphs of a given $\gamma\in\hopf$ have grade strictly less than the grade of $\gamma$. This helps us to set up an inductive proof.

Let us take our inductive hypothesis to be that $\widetilde{T}^0\circ S(\gamma) = \overline{T}^0(\gamma)$ for any SLD graph $\gamma$. For grade $n=1$, i.e.\ primitive graphs, $S(\gamma)=-\gamma$, and $\widetilde{T}^0(\gamma)=T^0(\gamma)$, thus $$\widetilde{T}^0\circ S(\gamma)=-\widetilde{T}^0(\gamma)=-T^0(\gamma)=\overline{T}^0(\gamma).$$ Now let us assume that, given an arbitrary positive integer $N$, for any SLD graph with grade $n\leq N$ the inductive hypothesis holds. Let $\gamma$ be an SLD graph with grade $N+1$. Then,
\begin{align*}
    \widetilde{T}^0\circ S(\gamma) & =\, -\widetilde{T}^0(\gamma) -\sum_{\mu\underset{AI}{\subsetneq}\gamma} \widetilde{T}^0\circ S(\mu)\,\widetilde{T}^0(\gamma/\mu)\\
    & = -\widetilde{T}^0(\gamma) -\sum_{\mu\underset{AI}{\subsetneq}\gamma} \overline{T}^0(\mu)\,\phi(\gamma/\mu)\\
    & = -\widetilde{T}^0(\gamma) +\sum_{\mu\underset{AI}{\subsetneq}\gamma} T^0(\mu)\,\phi(\gamma/\mu)\\
    & = \overline{T}^0(\gamma)
\end{align*}
where the second step follows from the fact that the $\mu$ are all proper AI subgraphs of $\gamma$, which are also SLD as $\gamma$ is SLD. Thus, they all have grade $n\leq N$, and the inductive hypothesis can be applied to them. By definition of SLD graphs, $\gamma/\mu$ is of logarithmic SDD, and as it depends purely on soft scales, $\widetilde{T}^0(\gamma/\mu) = \phi(\gamma/\mu)$. The last step then follows from expansion by subgraph, confirming the inductive hypothesis.

This by itself is sufficient to prove theorem \ref{thm:leadingantipoderel}. The combinatorics of the Hopf algebraic structure ensures that the right antipode relation holds. 
\end{proof}
\begin{theo}\label{thm:inverse}
    The inverse of $\widetilde{T}^0$ for any general $\Gamma\in\hopf$ in the logarithmic case is given by 
\bnon
(\widetilde{T}^0)^{-1}(\gamma) =
\begin{cases}
\overline{T}^0(\Gamma) & \text{if $\Gamma$ inherits all hard scales (the SLD graphs)}\\
\phi^{-1}(\Gamma) & \text{if $\Gamma$ inherits only soft scales}\\
\ide_A & \text{if } \Gamma = \ide
\end{cases}
\enon
where 
\bnon
\phi^{-1}(\Gamma)=\phi\circ S(\Gamma).
\enon
\end{theo}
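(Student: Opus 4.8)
The plan is to reduce everything to the standard fact, recalled below eq.~\eqref{eq:convproduct}, that in the group of characters $\mathcal{G}$ the inverse of any character $g$ is simply $g\circ S$. Applied to $\widetilde{T}^0$ this gives, for every $\Gamma\in\hopf$,
\bnon
(\widetilde{T}^0)^{-1}(\Gamma)=\widetilde{T}^0\circ S(\Gamma)\,,
\enon
so the whole content of the theorem is to evaluate $\widetilde{T}^0\circ S$ case by case, thereby extending Theorem~\ref{thm:leadingantipoderel} from SLD graphs to all of $\hopf$. The case $\Gamma=\ide$ is immediate, since $S(\ide)=\ide$ and $\widetilde{T}^0(\ide)=\ide_A$. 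The case in which $\Gamma$ inherits all the hard scales is precisely the SLD situation already settled in Theorem~\ref{thm:leadingantipoderel}, which gives $\widetilde{T}^0\circ S(\Gamma)=\overline{T}^0(\Gamma)$. Thus only the soft-only case requires a genuinely new argument.

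For that case the key structural observation I would establish is that the graphs inheriting only soft scales span a sub-Hopf-algebra $\mathcal{H}^{\mathrm{soft}}_Q\subseteq\hopf$, i.e.\ that this class is closed under taking AI subgraphs and contracted graphs. The cograph direction is easy: if all hard legs of $\Gamma$ sit at a single vertex, then contracting any AI subgraph $\gai$ (which necessarily contains that vertex) leaves all hard legs at the contracted vertex, so $\Gamma/\gai$ is again soft-only. For the subgraph direction I would invoke the inheritance of hard scales by MHM subgraphs (the remark following Corollary~\ref{inherit}): every AI subgraph $\gai\subseteq\Gamma$ inherits exactly the hard legs of $\Gamma$, which remain attached to the common vertex contained in $\gai$, so $\gai$ is soft-only as well. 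Since the coproduct \eqref{eq:coproducta} only produces such subgraphs and cographs, $\mathcal{H}^{\mathrm{soft}}_Q$ is stable under $\Delta$, and trivially under $m$ and hence under $S$.

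On this sub-Hopf-algebra the two characters $\widetilde{T}^0$ and $\phi$ coincide: for a logarithmic-SDD graph depending purely on soft scales there is no scale hierarchy to expand, so $\widetilde{T}^0=\phi$ exactly — this is the same identity $\widetilde{T}^0(\gamma/\mu)=\phi(\gamma/\mu)$ already used in the proof of Theorem~\ref{thm:leadingantipoderel}. As both maps are homomorphisms, their agreement on generators extends to all products, and since $\mathcal{H}^{\mathrm{soft}}_Q$ is stable under $S$, two characters agreeing on it have equal inverses there. Restricting to $\mathcal{H}^{\mathrm{soft}}_Q$ one therefore obtains
\bnon
(\widetilde{T}^0)^{-1}(\Gamma)=\widetilde{T}^0\circ S(\Gamma)=\phi\circ S(\Gamma)=\phi^{-1}(\Gamma)\,,
\enon
which is the claimed value. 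I expect the main obstacle to be the closure statement in the subgraph direction: one must argue carefully that the hard legs of an AI subgraph of a soft-only graph cannot become kinematically independent, so that the soft-only property, and with it the identification $\widetilde{T}^0=\phi$, genuinely survives through every step of the recursion defining $S$.
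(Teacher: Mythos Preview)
Your proposal is correct and follows essentially the same route as the paper: reduce to $(\widetilde{T}^0)^{-1}=\widetilde{T}^0\circ S$, dispatch the empty and SLD cases by Theorem~\ref{thm:leadingantipoderel}, and for soft-only graphs argue that they span a sub-Hopf-algebra on which $\widetilde{T}^0=\phi$ identically, so that the inverse is $\phi\circ S$. The paper's proof is terser---it simply asserts the sub-Hopf-algebra closure without the explicit subgraph/cograph argument you give---so your version is, if anything, more complete on precisely the point you flagged as the main obstacle.
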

\begin{proof}
We have already proven the first case, i.e.\ the case for SLD graphs, and the case for the empty graph is trivial.

Now, let us consider the case of graphs that only inherit the soft scales. Graph theoretically speaking, they refer to the elements in $\hopf$ obtained by contracting AI subgraphs. These graphs form a sub-Hopf-algebra as the product, coproduct and antipode operations on such graphs are always closed in graphs that only depend on soft scales. And this sub-Hopf-algebra inherits the convolution product from $\hopf$. As inside this sub-Hopf-algebra $\widetilde{T}^0 = \phi$ identically, the inverse of $\widetilde{T}^0$ is given by the inverse of $\phi$, which is $\phi^{-1}(\Gamma)=\phi\circ S(\Gamma)$. This completes the proof.
\end{proof}

\section{Twisted Antipode and Birkhoff decomposition}
\label{Birkh}

The role of the antipode at higher orders in the expansion by subgraph can be studied using Birkhoff decomposition, which decomposes any homomorphism acting on the Hopf algebra into a convolution product of two homomorphisms based on a projection operator. A similar treatment was used in the Connes-Kreimer Hopf algebra \cite{Connes:1999yr} to establish the counterterm operation as a twisted antipode.
\begin{theo}[Birkhoff decomposition]
\label{thm:BHd}
Let $g:\hopf\to\A$ be a character and let $K$ be an endomorphism in $\A$ such that the following relation
\bnu\label{tayl}K(a)K(b)=K(K(a)b+aK(b)-ab)\enu
holds for any two elements $a,b\in\A$ and any positive integers $s,t$. Then, there exists a unique Birkhoff decomposition of $g$:
$$g=g_-^{-1}\star g_+$$
where both $g_-,g_+\in \mathcal{G}$ and $g_-^{-1}$ is the inverse of $g_-$ under the convolution product. Furthermore, $g_-$ and $g_+$ are given by:
\bqa
g_-&=&-K(g_-\star g\circ(id-e))\\
g_+&=&(id_{\A}-K)(g_-\star g\circ(id-e))
\eqa
where $id_{\A}$ is the identity operator for the algebra $\A$ and maps any element of $\A$ to itself. $id$ is the familiar identity operator for $\hopf$, and $e=u\circ\bar{e}$. The solutions to the Birkhoff decomposition are unique.
\end{theo}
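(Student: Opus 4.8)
The statement is the standard Connes--Kreimer Birkhoff decomposition adapted to $\hopf$, and relation \eqref{tayl} is precisely the Rota--Baxter identity of weight $-1$ satisfied by a subtraction operator $K$ (the same property used for $\T^\infty$ in the proof of Theorem~\ref{thm:twistantipode}). The plan is to build $g_-$ and $g_+$ by induction on the \emph{grade} introduced through the grading operator $\mathcal{P}_n$ in Appendix~\ref{ap:logantiproof}, using that $\hopf$ is graded and connected so that every $\gamma$ occurring in the reduced coproduct $\widetilde\Delta(\Gamma)$ has strictly smaller grade than $\Gamma$. First I would fix $g_-(\ide)=g_+(\ide)=1_{\A}$ and set $\bar g:=g_-\star\bigl(g\circ(id-e)\bigr)$, the Bogoliubov map, which on a non-empty homogeneous $\Gamma$ reads $\bar g(\Gamma)=g(\Gamma)+\sum_{\gamma\underset{AI}{\subsetneq}\Gamma}g_-(\gamma)\,g(\Gamma/\gamma)$ and hence depends only on the values of $g_-$ on proper subgraphs. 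The recursion $g_-(\Gamma)=-K\bigl(\bar g(\Gamma)\bigr)$ and $g_+(\Gamma)=(id_{\A}-K)\bigl(\bar g(\Gamma)\bigr)$ is then well-defined, and the grading guarantees it terminates.

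Next I would verify the factorisation. From the definitions $g_+(\Gamma)=\bar g(\Gamma)-K(\bar g(\Gamma))=\bar g(\Gamma)+g_-(\Gamma)$, while expanding the full coproduct gives $(g_-\star g)(\Gamma)=g(\Gamma)+g_-(\Gamma)+\sum_{\gamma\underset{AI}{\subsetneq}\Gamma}g_-(\gamma)\,g(\Gamma/\gamma)=\bar g(\Gamma)+g_-(\Gamma)$ for every non-empty $\Gamma$, and both sides equal $1_{\A}$ on $\ide$. Hence $g_-\star g=g_+$. Since, as recalled below eq.~\eqref{eq:convproduct}, $\mathcal{G}$ is a group under $\star$ with inverse $g_-^{-1}=g_-\circ S$, this is equivalent to the claimed $g=g_-^{-1}\star g_+$; coassociativity (Proposition~\ref{coass}) underlies the associativity of $\star$ invoked here.

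The main obstacle, and the only place where \eqref{tayl} is essential, is to show that $g_-$ (and therefore $g_+=g_-\star g$) is an algebra homomorphism, i.e.\ that it genuinely lies in $\mathcal{G}$. I would prove $g_-(\Gamma_1\sqcup\Gamma_2)=g_-(\Gamma_1)\,g_-(\Gamma_2)$ by induction on the total grade. The inductive step first uses the compatibility of $m$ with $\Delta$ (Proposition~\ref{pro:compatibility}) together with the multiplicativity of $g$ to expand $\bar g(\Gamma_1\sqcup\Gamma_2)$ in terms of $\bar g(\Gamma_i)$ and $g_-(\Gamma_i)$, all of strictly lower grade; applying $-K$ and then collapsing the resulting product $K(\bar g(\Gamma_1))\,K(\bar g(\Gamma_2))$ by means of the Rota--Baxter relation \eqref{tayl} reproduces exactly $-K(\bar g(\Gamma_1\sqcup\Gamma_2))=g_-(\Gamma_1\sqcup\Gamma_2)$. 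This bookkeeping, namely tracking which terms of the expanded coproduct are absorbed into the Rota--Baxter identity, is the genuinely delicate part of the argument.

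Finally, uniqueness follows because the recursion forces $g_-(\Gamma)\in K(\A)$ and $g_+(\Gamma)\in(id_{\A}-K)(\A)$ on every non-scalar $\Gamma$. If $g=h_-^{-1}\star h_+$ is any second decomposition with $h_-$ taking values in $K(\A)$ and $h_+$ in $(id_{\A}-K)(\A)$ away from $\ide$, then comparing the two on the lowest grade and inducting upward forces $h_-=g_-$ and $h_+=g_+$ term by term, since at each grade the splitting of $\bar g(\Gamma)$ into its $K$- and $(id_{\A}-K)$-parts is unique. This completes the plan.
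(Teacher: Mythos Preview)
The paper does not actually supply a proof of Theorem~\ref{thm:BHd}; it states the Birkhoff decomposition as a known result, referring to the Connes--Kreimer treatment~\cite{Connes:1999yr}, and then immediately uses it (in the proof of Theorem~\ref{thm:twistantipode}) without further justification. Your proposal is a faithful reconstruction of the standard Connes--Kreimer argument, adapted to the grading of $\hopf$ via the operator $\mathcal{P}_n$ from Appendix~\ref{ap:logantiproof}, and is correct in its essentials: the recursive construction of $g_\pm$, the verification that $g_-\star g=g_+$, and the use of the Rota--Baxter identity to establish multiplicativity of $g_-$ all follow the established route.

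One small caveat on uniqueness: your argument assumes that the splitting of $\bar g(\Gamma)$ into its $K$-part and $(id_\A-K)$-part is unique, which in turn requires $K(\A)\cap(id_\A-K)(\A)=\{0\}$. This holds when $K$ is idempotent (a projection), but the Rota--Baxter relation~\eqref{tayl} by itself does not force $K^2=K$. In the applications the paper has in mind ($K=\T^\infty$, or the minimal-subtraction pole part in the Connes--Kreimer setting) this idempotency is automatic, so the point is harmless in context, but strictly speaking the uniqueness clause as stated needs that extra hypothesis.
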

We are going to assume, in the following, that theorem \ref{thm:BHd} also applies to the Hopf monoid $\mathbf{H}_Q$. The proof is likely a straightforward extension which, however, we do not attempt here. 

\printProofs

\noindent The twisted antipode can also be defined at any finite order of expansion, using a modified form of Birkhoff decomposition with Taylor operators laid forth by Manchon and Mohamed \cite{manchon2014bialgebra}, which modifies the Rota-Baxter relation to $$\P_s(a)\P_t(b)=\P_{s+t}(\P_s(a)b+a\P_t(b)-ab)$$ where $\P$ is an indexed family of endomorphisms in $\A$. Following very similar steps as above, we can obtain the twisted integrand antipode up to order $\ba$ in the inverse powers of hard scales:
$$S_F^{\ba}(\gamma) := -F(\gamma) - \sum_{\mu\subsetneq\gamma}\P_{\ba}\circ S_F^{\ba}(\mu)\;F(\gamma/\mu)$$
where $\P_{\ba}$ is the operator $\T^{\ba+\omega}$. We skip the details here as this is not relevant to any of the theorems we introduce in this paper. From here the twisted antipode up to the order $\ba$ in inverse hard scales is given as $$S_\phi^{\ba}(\gamma)=\I\circ S_F^{\ba}(\gamma)=-\mathcal{R}^{\ba+\omega}(\gamma)-T^{\ba+\omega}(\gamma)$$ for all AI subgraphs $\gamma\subseteq\ga$, where, as before, $\omega$ is the UV SDD of $\gamma$. This order by order twisted antipode might be useful for a Hopf theoretic proof of expansion by subgraph in the future.

\section{Hopf Monoids: A Brief Introduction}\label{hopfmon}
As was discussed in section \ref{higherorder}, mapping the graphs to the integrands instead of the integrals requires us to introduce a labelling for the internal edges in order to assign the respective momenta. We then use the Dirac delta functions at the vertices to impose conservation of momentum. Finally we integrate over the momenta of each internal edge to get the Feynman integral. However, such a labelling of internal edges leads to ill-defined products of graphs. Take for example, the parent graph
\bnon
\ga = \dtriain\qquad, 
\enon
and consider the following two subgraphs
\bnon
\gamma_1=\dtriaaa\;, \;\gamma_2 = \dtstrain
\enon
Both $\gamma_1,\gamma_2\in \mathcal{H}_\ga$ as they are AI subgraphs of $\ga$. We had defined multiplication in the Hopf algebra as the disjoint union of graphs, a notion that only holds for unlabelled graphs. Thus, we have to come up with a new defintion of a product, which also has to be consistent with the integrand map $F$. However, multiplying the integrands of maps like $\gamma_1$ and $\gamma_2$ which share common labellings poses a problem: because of the multiple appearance of the edges, we end up with extra Dirac delta functions that cannot be integrated out using the integration operator $\I$, which is a crucial step in mapping the integrands to the integrals.

This problem for the product, though, turns out to be irrelevant for asymptotic expansions and the local $R$-operation. Indeed, the only relevant quantities there correspond to the coproduct and antipode acted on individual graphs, and in those operations, multiplications of graphs sharing common labellings never occurs. This is because the multiplications in the coproduct and antipode are always between subgraphs and their corresponding contracted graphs, and they always have a mutually disjoint set of edge labels. 

Therefore, having a full algebra structure is not necessary for a Hopf-theoretic formulation of asymptotic expansions, and there indeed exists a mathematical structure that helps us retain the required coproduct and antipode while avoiding problematic products as shown above. That structure is a Hopf monoid on vector species.

Here we will only be giving a very elementary introduction to Hopf monoids, in order to establish their relevance in dealing with graphs carrying internal edge-labellings. Following \cite{aguiar2017hopfmonoidsgeneralizedpermutahedra}, we will begin our discussion with set species. We can then linearise the sets to construct the relevant vector spaces.\\

\noindent A \emph{set species} $\rh$ consists of the following data:
\begin{itemize}
    \item For every finite set $I$, a corresponding set $\mathrm{H}[I]$,
    \item For any bijection $\sigma:I\to J$, there exists a map $\rh[\sigma]:\rh[I]\to\rh[J]$ such that $\rh[\sigma\circ\tau]=\rh[\sigma]\circ\rh[\tau]$ and $\rh[id]=id$.
\end{itemize}
In our applications, the set $I=\{1,\dots,n\}$ is the set of edge labellings and $\rh[I]$ contains the set of all possible graphs with $n$ edges of a certain kind  (for us those which were previously present in the Hopf algebra under consideration) with all possible edge labellings $I$. 

% ordered tuples of subgraphs and their corresponding (successively) contracted subgraphs, i.e. of the form $(\gamma_1,\gamma_2/\gamma_1,\ldots,\Gamma/\gamma_{n-1})$, where $\gamma_i\subseteq\gamma_{i+1}\subseteq\Gamma$. For example, for the one-loop triangle,
% \bnon
% \triain\;,
% \enon
% we would have $I=\{1,2,3\}$ and $$\rh[I]=\{1|23, 2|13, 3|12, 12|3, 13|2, 23|1, 1|2|3, 2|1|3, 1|3|2, 2|3|1, 3|1|2, 3|2|1, 123\}$$
% with $x|y_1|\ldots|y_n$ being an ordered tuple where $x$ refers to the labellings corresponding to the subgraph, and $y_i$ the corresponding successively contracted graphs. Notice that the above notation uniquely designates all such tuples. We have identified all tuples of type $x|\ide \equiv\ide|x\equiv x$, where $x$ could also be a tuple. $\ide$ is the symbol we use for the empty graph with no labels. It refers to the set $\rh[\emptyset]:=\{\ide\}$.

Next we define the decomposition of $I$ as the finite \emph{sequence} $(S_1,\ldots, S_k)$ of mutually exclusive sets $S_i$ such that $$I=S_1\sqcup\ldots\sqcup S_k\,.$$ Note that under this definition $I=S\sqcup T$ and $I=T\sqcup S$ are distinct decompositions of $I$ as they correspond to different sequences $(S,T)$ and $(T,S)$, unless $S=T=I=\emptyset$.\\[2pt]
Based on this, we can define a connected Hopf monoid on set species as containing the following structures:
\begin{itemize}
    \item A set species $\rh$ such that $\rh[\emptyset]$ is a singleton,
    \item For each set $I$ and each decomposition $I=S\sqcup T$, product and coproduct maps
\begin{equation}
\label{eq:hopfmonoidoperations}
    \rh[S]\times\rh[T]\xrightarrow{\mu_{S,T}}\rh[I] \qquad \rh[I]\xrightarrow{\Delta_{S,T}}\rh[S]\times\rh[T]\,,
\end{equation}
    where the product is defined as $\mu_{S,T}(x,y) = x|y$ , with $x\in S$ and $y\in T$,\\
    and the coproduct $\Delta_{S,T}(z)=(z_{|S}, z_{/S})$, with $z\in \rh[I]$, $z_{|S}\in S$ the restriction of $z$ in S, and $ z_{/S}\in T$ the contraction of $S$ from $z$.
\end{itemize}
With additional conditions of associativity, unitality etc on the product and coproduct, (the details of which we omit here) this becomes a Hopf monoid on the set species $\rh$.

Notice that the decomposition of the label set $I$ into mutually exclusive sets forms the basis of the definition, and thus completely avoids the problematic cases from before, as now we only consider products of elements having no common labellings. However, the set species does not contain an operation for addition which we need in our coproduct, and neither does it come with an antipode. For that we need to linearise the Hopf monoid on the set species $H$ into a Hopf monoid on the vector species $\mathbf{H}$, by using the $H[I]$'s as basis sets of vector spaces $\mathbf{H}[I]$, and correspondingly linearising the product and coproduct. The Cartesian product `$\times$' in \eqref{eq:hopfmonoidoperations} is then promoted to a tensor product `$\otimes$'. We can also define an antipode which leads us to the same group structure under convolution that is key to using tools like Birkhoff decomposition.\\

 \bibliographystyle{JHEP}
 \bibliography{biblio.bib}

%% or
%% [B] Manual formatting (see below)
%% (i) We suggest to always provide author, title and journal data or doi:
%% in short all the informations that clearly identify a document.
%% (ii) please avoid comments such as "For a review'', "For some examples",
%% "and references therein" or move them in the text. In general, please leave only references in the bibliography and move all
%% accessory text in footnotes.
%% (iii) Also, please have only one work for each \bibitem.

% \begin{thebibliography}{99}

% \bibitem{a}
% Author,
% \emph{Title},
% \emph{J. Abbrev.} {\bf vol} (year) pg.

% \bibitem{b}
% Author,
% \emph{Title},
% arxiv:1234.5678.

% \bibitem{c}
% Author,
% \emph{Title},
% Publisher (year).

% \end{thebibliography}
\end{document}